\def\I{{\bf I}}
\def\S{{\bf S}}
\def\X{{\bf X}}
\def\Z{{\bf Z}}
\def\z{{\bf z}}
\def\calP{{\cal P}}
\def\calQ{{\cal Q}}
\def\ba{{\boldsymbol\alpha}}
\def\bt{{\boldsymbol\theta}}
\def\bb{{\boldsymbol\beta}}
\def\0{{\bf 0}}
\def\trans{^{\rm T}}
\def\pr{\hbox{pr}}
\def\wh{\widehat}
\def\wt{\widetilde}
\def\var{\hbox{var}}
\def\n_1{\nonumber}
\def\log{{\rm log}}
\def\squarebox#1{\hbox to #1{\hfill\vbox to #1{\vfill}}}
\def\povr{\buildrel p\over\longrightarrow}
\def\dovr{\buildrel d\over\longrightarrow}
\def\bse{\begin{eqnarray*}}
\def\ese{\end{eqnarray*}}
\def\be{\begin{eqnarray}}
\def\ee{\end{eqnarray}}
\def\bsq{\begin{equation*}}
\def\esq{\end{equation*}}
\def\bq{\begin{equation}}
\def\eq{\end{equation}}
\def\sumIP1{\sum_{i=1, i\in P_1}^n}
\def\boxit#1{\vbox{\hrule\hbox{\vrule\kern6pt\vbox{\kern6pt#1\kern6pt}\kern6pt\vrule}\hrule}}
\newtheorem{Lemma}{Lemma}[section] 
\newtheorem{Remark}{Remark}[section] 
\newtheorem{Theorem}{Theorem}[section] 
\newtheorem{Corollary}{Corollary}[section] 
\newtheorem*{Assumption*}{Assumption}
\newtheorem{Example}{Example}[section]
\renewcommand\theequation{\thesection.\arabic{equation}}
\title{Survival analysis under label shift}
\author[1]{Yuxiang Zong}
\author[2]{Yanyuan Ma}
\author[1]{Ingrid Van Keilegom}
\affil[1]{\small Research Centre for Operations Research and Statistics, KU Leuven, Naamsestraat 69,
3000 Leuven, Belgium}
\affil[2]{\small Department of Statistics, Pennsylvania State University, University Park, PA 16802, USA}\date{\today}
\date{ }
\begin{document}

\maketitle

\begin{abstract}

Let $\calP$ represent the source population with complete data,
containing covariate $\Z$ and response $T$, and 
$\calQ$ the target population, where only the covariate $\Z$ is
available. We consider a setting with both label shift and label
censoring. Label shift assumes that the marginal distribution of $T$
differs between $\calP$ and $\calQ$, while the conditional
distribution of $\mathbf{Z}$ given $T$ remains the same. Label
censoring refers to the case where the response $T$ in $\calP$ is
subject to random censoring. Our goal is to leverage information from
the label-shifted and label-censored source population $\mathcal{P}$
to conduct statistical inference in the target population
$\mathcal{Q}$. We propose a parametric model for $T$ given $\Z$ in
$\calQ$ and estimate the model parameters by maximizing an
   approximate likelihood. This allows for statistical inference in $\calQ$ and
accommodates a range of classical survival models. Under the label
shift assumption, the likelihood depends not only on the unknown
parameters but also on the unknown distribution of $T$ in $\calP$ and
$ \Z$ in $\calQ$, which we estimate nonparametrically. The asymptotic
properties of the estimator are rigorously established and the 
effectiveness of the method is demonstrated through simulations and
a real data application. This work is the first to combine
survival analysis with label shift, offering a new research
  direction in this emerging topic. 

\end{abstract}

\section{Introduction}

Traditional supervised learning methods assume that the source
population $\mathcal{P}$ and the target population $\mathcal{Q}$ share
the same joint distribution of the response $T$ and covariate
$\mathbf{Z}$. Under this assumption, a model trained on data from
$\mathcal{P}$ can be directly applied to $\mathcal{Q}$ for inference
or prediction. However, in practice, the joint distribution of $T$ and
$\Z$ often differs between the two populations, a phenomenon known as
distributional shift.

Methods to address distributional shift can be broadly
classified into two categories based on the availability of target
data. The first is domain adaptation (DA), where target data is available,
either fully or partially~\citep{MORENOTORRES2012521,Kouw2021,qiu2024efficient}.
In handling DA, methods need to be tailored to
the specific characteristics of the target population.
The second is domain generalization (DG), where no target data is
accessible, and the goal is to develop models that generalize well to
unseen distributions~\citep{zhou2023}, i.e., robustness across
arbitrary potential distributional shifts. This paper belongs
to the domain
adaptation category, where typically the source population $\calP$ has fully
observed response and covariates, while in the target population
$\calQ$, only the covariate $\Z$ is available.

To take into account DA, appropriate assumptions
are essential to effectively leverage information from the source
population for analysis in the target population. There are three
primary types of distributional shift assumptions considered.
The first is concept shift, where the conditional distribution of the
response given the covariates differs between populations,
while the marginal distribution of the covariates remains the same,
i.e., $p_{T \mid \mathbf{Z}}(t, \mathbf{z}) \neq$ $q_{T \mid
  \mathbf{Z}}(t, \mathbf{z})$ but
$p_{\mathbf{Z}}(\mathbf{z})=q_{\mathbf{Z}}(\mathbf{z})$~\citep{Kouw2021}.
The second is covariate shift which assumes the conditional
distribution of the response given the covariates remains unchanged,
while the covariate distribution differs, i.e., 
$p_{T \mid \mathbf{Z}}(t, \mathbf{z})=q_{T \mid \mathbf{Z}}(t,
\mathbf{z})$ but $p_{\mathbf{Z}}(\mathbf{z}) \neq
q_{\mathbf{Z}}(\mathbf{z})$~\citep{JMLR:v8:sugiyama07a}.
The third is label shift,  the focus of this paper, which
assumes that the marginal distribution of the response differs between
populations, while the conditional distribution of covariates given
the response remains unchanged, i.e., 
$p_T(t) \neq q_T(t)$ but $p_{\mathbf{Z} \mid T}(\mathbf{z},
t)=q_{\mathbf{Z} \mid T}(\mathbf{z},
t)$~\citep{garg2020unified}. Label shift is commonly associated with
anti-causal problems, where the response variable is the cause of the
covariates. For instance, in a study of an infectious disease
outbreak, the source 
data might be collected at the onset of the outbreak, while the target
data is gathered after preventive measures have been implemented. The
response variable in both datasets is the diagnosis of the disease,
with covariates representing the biomedical factors of the
patients. As preventive efforts reduce the probability of infection,
the label distribution shifts~\citep{tian2023elsa}. However, the conditional distribution
of clinical behavior given a diagnosis remains consistent, making
label shift methods especially applicable in such
cases.

Label shift naturally arises in survival data as soon as the responses are
subject to random censoring. However, works addressing label shift
issues in the context of survival analysis are absent in
the literature so far. This paper is the first attempt in this line of
research. From the practical application point of view, our
motivation comes from the liver transplant dataset provided by the
Organ Procurement and Transplantation Network (OPTN). We
consider the lifespan of transplanted livers in adult recipients as
the response variable, measured from the date of transplantation to
either organ failure or patient death. The covariates include patient
and donor information, as well as clinical factors related to the
longevity of the transplanted organ. The response is subject to
censoring due to patients dropping out of study or losing contact. The two
populations consist of patients with private insurance and those
insured through Medicaid or Medicare, with the latter group showing
shorter survival time, indicating the possibility of a label shift model. To conduct a careful study of this dataset, we proceed
to propose a model framework in the context of label shift in survival
analysis, and further develop an estimation and inference procedure.

More specifically, we study the label shift problem when the
response is randomly right censored in the source data, while
only covariates are available in the target data. Since the goal is to
estimate the effects of covariate $\Z$ on the response $T$ in the
target population $\calQ$, based on the label-shifted and
label-censored data in the source population $\calP$, we
adopt a parametric model to describe the covariate effect to be
estimated in the target population. Starting from a fully parametric model under study, the resulting problem is far from parametric.
The challenge arises not only from the label shift phenomenon,
where the population of interest contributes no observed response at
all, but also from the incompleteness of the source data due to
censoring, which makes the information we want to borrow incomplete.
The parametric model assumed in $\calQ$ is
flexible. It encompasses various
  structures, such as parametric proportional hazards models,
  parametric accelerated failure time models or many other widely used
  survival models. 
We propose a nonparametric maximum likelihood based method to estimate
the unknown parameters, 
which allows us to assess the effect of covariates on the survival time in
the target population and perform statistical inference. The presence
of both label shift and censoring makes it harder to derive the
statistical properties of the estimator, which we address using
techniques from empirical process
theory~\citep{van1996weak,kosorok2007introduction}.  We develop the
asymptotic theory of the proposed estimator and provide explicit
formulas for its asymptotic variance, which can be efficiently
estimated from the data without relying on computationally intensive
bootstrap procedures. 

To the best of our knowledge, this is the first work in
survival analysis under label shift. We propose a novel framework that
enables statistical inference in a target population without
  observing any
labels, by using information from a label-shifted and label-censored
source population. The framework is flexible and has the potential to
be extended to other survival settings under label shift. We hope this
work will stimulate further research and inspire further works in
this emerging area. 

The paper is structured as follows. Section 2 provides a literature
review on label shift and survival analysis under distributional
shift. In Section 3, we introduce our model, describe the estimation
procedure, and establish the theoretical properties of our
estimator. Section 4 presents the simulation results. In Section 5, we
illustrate our method on a liver transplant dataset. Finally, Section
6 concludes the paper and discusses potential directions for future
research. Technical details, along with additional tables and figures,
are provided in the Supplement.

\section{Literature Review}
Most existing studies on label shift have focused exclusively on
either classification (discrete labels) or regression (continuous
labels). A central approach in the literature is to estimate the
density ratio $\rho(t)\equiv q_T(t)/p_T(t)$, which is referred to
as the importance weight in classification. Most proposed
methods hinge on the equality
\be
q_\Z(\z) = p_\Z(\z)\int p_{T\mid\Z}(t,\z)\rho(t) dt,
\label{eq1}
\ee  
and estimate the density ratio by minimizing the
discrepancy between an estimator $\wh
q_{\mathbf{Z}}(\mathbf{z})$ and the estimator
$\wt q_\Z(\z,\rho)\equiv\wh p_\Z(\z)\int \wh p_{T\mid\Z}(t,\z)\rho(t) dt$ , where $\wh p_\Z(\z)$ and $\wh p_{T\mid\Z}(t,\z)$ are appropriate estimators for marginal and conditional densities respectively.

In classification problems, estimating the importance weight reduces
to estimating a finite number of
parameters. \cite{lipton2018detecting} and
\cite{azizzadenesheli2019regularized} proposed methods based on the
idea that a model $f$ trained on population $\calP$ preserves the
equality $p_{\mathbf{Z} \mid T}(f(\mathbf{z}), t)=q_{\mathbf{Z} \mid
  T}(f(\mathbf{z}), t)$. \cite{lipton2018detecting} introduced Black Box Shift Estimation
(BBSE), a moment-matching method that estimates importance weights
using the inverse of the confusion matrix of a classifier trained on
population $\calP$. However, BBSE assumes that the confusion matrix is
non-singular, which is a strong requirement. To address this
limitation, \cite{azizzadenesheli2019regularized} proposed the
Regularized Learning Method under Label Shift (RLLS), which
incorporates a regularization term to protect against singularity issues. Despite these refinements, confusion matrix-based
approaches often perform poorly in high-dimensional settings~\citep{lipton2018detecting}.

To overcome these challenges, \cite{alexandari2020maximum} introduced
the Maximum Likelihood Label Shift (MLLS) method, which incorporates a
calibration step and outperforms both BBSE and RLLS. However, the
calibration process in MLLS is complicated, and its performance
greatly depends on the selected calibration method. To address these
issues, \cite{tian2023elsa} developed an approach based on
semi-parametric theory, estimating importance weights by constructing
an efficient influence function and solving an estimating
equation. Their method achieves $\sqrt{n}-$consistency and asymptotic
normality, demonstrating strong performance without requiring
calibration.

In regression problems, estimating the unknown density ratio function involves solving the integral equation (\ref{eq1}), making the estimation process more complex.
Most of the existing
literature has focused on characterizing various discrepancies between
$\wh{q}_{\Z}(\z)$ and $\wt{q}_{\Z}(\z,\rho)$, different ways of estimating $p_{T \mid \Z}(t, \z)$, $p_\Z(\z)$ and
$q_{\Z}(\z)$ in \eqref{eq1}, or simplifying the computation of the integral in
 (\ref{eq1}).
\cite{zhang2013domain} employed the kernel mean embedding method to
simplify the integral computation in (\ref{eq1}) and estimated the
density ratio $\rho(t)$ by minimizing the maximum mean discrepancy
between the embedded representations of $\wh q_\Z(\z)$ and $\wt
q_\Z(\z, \rho)$.
\cite{nguyen2016continuous} modelled the density ratio $\rho(t)$ by a
nonparameteric 
Gaussian kernel model $\rho^*(t,\ba)$ and estimated the parameter
$\ba$ by minimizing the $L^2$ distance 
between $\wh q_\Z(\z)$ and $\wt q_\Z\{\z,\rho^*(\cdot,\ba)\}$.
\cite{guo2020ltf} proposed a label transformation framework $f$
  that transfers $T$ in population $\calP$ such that $p_T\{f(t)\}$ can
  mimic $q_T(t)$ through minimizing the difference between $\wh
  q_\Z(\z)$ and 
  $\wt q_\Z[\z,p_T\{f(\cdot)\}/p_T(\cdot)]$. To
  model the invariant conditional distribution, \cite{guo2020ltf}
  utilized a conditional generator and employed Jensen-Shannon
  Divergence to quantify the difference between $\wh{q}_{\Z}(\z)$ and
  $\wt{q}_{\Z}[\z, p_T\{f(\cdot)\}/p_T(\cdot)]$. \cite{kim2024retasa} directly solved the
  integral equation (\ref{eq1}) for density ratio estimation. To
  mitigate the ill-posed nature of this problem, \cite{kim2024retasa}
  applied a nonparametric regularization method, using a generalized
  kernel function with regularization to stabilize the solution.  

\cite{lee2024doubly} took a completely different approach
    and proposed a doubly flexible approach applicable to both
  regression and classification problems based on semiparametric
  theory~\citep{tsiatis2007semiparametric}. Their method offers double
  flexibility by allowing misspecification in both the density ratio
  model and the regression model of $T$ given $\Z$ in population
  $\calP$, thereby avoiding  the most difficult aspects of the label
  shift problem-the direct estimation of the density ratio. 
However, all these methods
  assume that population $\calP$ has complete data, whereas the
  presence of censoring significantly complicates the estimation
  process, e.g. the integral computation or the estimation of various functions in \eqref{eq1}. Hence, none of the forementioned methods can
  be directly extended to the settings with censoring.

The integration of survival analysis and label shift remains largely
unexplored, although general DA and DG methods in survival
analysis have received some attention. \cite{jeanselme2022deepjoint}
addressed clinical presence shift, a combination of covariate and
concept shift, assuming an underlying stable distribution while
variations in hospitals and regions introduce shifts in the
observation process. They employed recurrent neural networks to model
the observation process and then recover the underlying stable
distribution. 
However, they do not provide any theoretical guarantees.
\cite{pmlr-v174-pfisterer22a} studied DG in survival analysis,
assuming no access to target data. They benchmarked multiple DG
methods and analyzed the impact of different types of shifts on model
performance. \cite{wang2022survm} leveraged information from multiple source populations to enhance estimation in the target population
under covariate shift and partial concept shift. They assumed that both source and target populations follow a Cox model with potentially
different baseline cumulative hazards and have shifted covariate
distributions. The method estimates the covariate effect in the target
population by maximizing the minimum reward across all source
populations. 
Despite these works, survival analysis under label shift with
unsupervised target data is entirely unexplored. 
Our work will fill this research gap by developing a 
method that effectively handles label shift while accounting for
censored data, and we hope to attract more research on survival analysis
under the label shift framework.

\section{Methodology}

Let $T$ denote the event time of interest, and $\Z$ denote the
  covariate vector. The covariate $\Z$ may include both continuous and categorical variables. We let $p$ and $q$ denote the probability density or mass functions, and $P$ and $Q$ denote probability distribution functions in populations $\calP$ and $\calQ$, respectively. The operators $E_p$, $\var_p$, $E_q$ and $\var_q$ represent expectations and variances under populations $\calP$ and $\calQ$. In follow-up studies, event times are often
  censored due to patient dropout or the end of the studies. To account
  for this, we introduce the censoring time $C$ in population $\calP$,
  assuming that $C$ is independent of $(T,\Z)$. The response we can
  observe in population $\calP$ is then $X \equiv \min (T, C)$ and a
  censoring indicator $\Delta \equiv I(T\le C)$. In the target
  population $\calQ$, event time $T$ is unavailable, but the
  covariate $\Z$ is observed. In summary, in population
  $\calP$, we have an i.i.d. sample $(X_i, \Delta_i, \Z_i), i=1,
  \ldots, n_1$, while in $\calQ$, we have a sample $\Z_i, i=n_1+1,
  \dots, n_1+n_2$. We stack these two samples into a full sample and introduce a binary indicator $R$ to distinguish
    which population an observation comes from. An observation
    can then be written as $( \Z_i,R_i, R_i X_i, R_i \Delta_i),  i=1,2,
  \dots, n\equiv n_1+n_2.$ When $R_i=1$, the observation belongs to $\calP$, and
  when $R_i=0$, it belongs to $\calQ$. We let $ \pi_n \equiv n_1
  /(n_1+n_2)$ and let $\pi$ denote its limiting value as both $n_1$ and $n_2$ tend to infinity. We allow $\pi$ to be 0 or 1, meaning that the sample sizes from $\calP$ and $\calQ$ can grow at different
  rates. This flexibility accommodates scenarios where data from one
  population grows faster than that from the other.  

To facilitate information transfer from the source data, we adopt the
label shift assumption, which states that while the marginal
distributions of $T$ in $\calP$ and $\calQ$ may differ, the
conditional distribution of $\Z$ given $T$ remains the same across
populations, specifically 
$$
p_T(t) \neq q_T(t), \quad p_{\Z \mid T}(\z, t) = q_{\Z \mid T}(\z, t) \equiv f_{\Z \mid T}(\z, t).
$$
Our goal is to estimate the effect of covariates on the event time $T$ in population $\calQ$.
We assume that in population $\calQ$, the conditional distribution of
$T$ given $\Z$ follows a prespecified model parameterized by an unknown $d$-dimensional vector 
$\bt$, thus the conditional density function can be written as
$q_{T \mid \Z}(t, \z;\bt)$. The parameter $\bt$ can include both
regression coefficients and additional parameters governing the shape
of the conditional density. Other than a valid conditional
  probability density function, we do not impose any other special
  structure on $q_{T \mid \Z}(t, \z;\bt)$. Thus, our framework
includes many commonly used survival
models such as parametric proportional hazards
models, parametric accelerated failure time models, parametric
proportional odds models, etc. Our objective is to estimate $\bt$, thereby enabling statistical inference on survival
outcomes in the target population $\calQ$.

To formalize our estimation approach clearly, we first write
  out the joint probability density function (pdf) of
$(\Z,R,RX,R\Delta)$, which has the form
\be
f(\z, r, rx, r\delta)\n_1
&=&\left[\pi\left\{
f_{\Z\mid T}(\z,x)p_T(x)\int_x^\infty p_{C}(c)dc\right\}^\delta \left\{p_{C}(x)
\int_x^{\infty}
f_{\Z\mid T}(\z,t)p_T(t) dt\right\}^{1-\delta}\right]^r \n_1\\
&& \times\left\{(1-\pi)\int f_{\Z\mid T}(\z,t)q_T(t) d
  t\right\}^{1-r}.
  \label{eq:model0}
\ee
Since our goal is to estimate the unknown parameter $\bt$ in $q_{T\mid \Z}(t,\z;\bt)$, we plug $q_{T\mid \Z}(t,\z;\bt)$ in the joint density (\ref{eq:model0}). Hence, the joint density (\ref{eq:model0}) can be further written as
\be
&&f(\z, r, rx, r\delta)\n_1\\
&=&\pi^r(1-\pi)^{1-r}\left\{ \frac{p_T(x)q_{T\mid \Z}( x,\z;\bt)}{
\int q_\Z(\z) q_{T\mid \Z}( x,\z;\bt)
d\z}\right\}^{r\delta} \left\{\int_x^{\infty}\frac{ p_T(t)
  q_{T\mid \Z}(t,\z;\bt)}{
\int q_\Z(\z)q_{T\mid \Z}(t,\z;\bt)
d\z}dt\right\}^{r(1-\delta)}\n_1 \\
&& \times \left\{\int_x^\infty p_C(c)dc \right\}^{r\delta} \left\{p_C(x)
\right\}^{r(1-\delta)}q_\Z(\z).
\label{eq:model1}
\ee

Before proceeding with the estimation procedure, it is essential to
discuss the identifiability of the proposed model. Given that we have
i.i.d. observations $(X_i, \Delta_i, \Z_i)$ from 
$\calP$, due to the independence between $T$ and $C$, $p_C(c)$,
$p_{T\mid \Z}(t,\z)$ and $p_\Z(\z)$ are all
identifiable. In population $\calQ$, we
observe a random sample of $\Z$, enabling the identifiability of
$q_\Z(\z)$. But unfortunately, since $T$ in population $\calQ$
  is not available, $q_{T\mid \Z}(t,\z;\bt)$ may not be
identified. The following example illustrates the case where $q_{T\mid
  \Z}(t,\z;\bt)$ is not identifiable.

\begin{Example}
\label{Example}
    Consider a setting where, given $Z=z$, the event time $T$ follows a
Log-Normal distribution in population $\calP$, i.e., 
\begin{equation*}
    p_{T\mid Z}(t,z;\beta) \equiv \frac{1}{ \sqrt{2\pi}t \beta z} \exp\left\{-\frac{(\log t)^2}{2(\beta z)^2}\right\}.
\end{equation*}
The covariate $Z$ takes discrete values with equal probability:
$p_Z(1)=\pr(Z=1) = 0.5$ and $p_Z(2)=\pr(Z=2) = 0.5.$
In population $\calQ$, we can verify that there are two different marginal distributions for $T$,
\bse
q_T(t;\mu,\sigma) = \frac{1}{\sqrt{2\pi} \sigma t} \exp\left\{-\frac{(\log t
        - \mu)^2}{2 \sigma^2}\right\},
\ese
and
\bse
q_T(t;-\mu,\sigma) = \frac{1}{\sqrt{2\pi} \sigma t} \exp\left\{-\frac{(\log t + \mu)^2}{2 \sigma^2}\right\},
\ese
where $\mu > 0$ and $\sigma>0$, such that
\bse
    q_Z(z) &=& \int f_{Z\mid T}(z,t)q_T(t;\mu,\sigma)dt\\
     &=& \int f_{Z\mid T}(z,t) q_T(t;-\mu,\sigma) dt,
\ese
where the proof is provided in Supplement~S3.1.

Thus, we have two conditional densities of $T$ given $Z$ in population $\calQ$:
\bse
 q_{T \mid Z}(t, z ; \bt) = \frac{p_{T \mid Z}(t, z ; \beta)  p_Z(z)q_T(t;\mu,\sigma)}{p_T(t) q_Z(z)},
\ese
and
\bse
q_{T \mid Z}(t, z ; \wt \bt) = \frac{p_{T \mid Z}(t, z ; \beta)  p_Z(z)q_T(t;-\mu,\sigma)}{p_T(t) q_Z(z)},
\ese
where $\bt=(\beta, \mu, \sigma)$ and $\wt \bt=(\beta, -\mu, \sigma)$ leading to the same joint density (\ref{eq:model1}).
\end{Example}

To ensure the identifiability of the proposed model, we establish the
following lemma. This lemma provides a sufficient condition under
which the model parameters can be uniquely determined from the
observed data. The proof is presented in Appendix~A. 

\begin{Lemma}\label{lemma3.1}
  All unknown components in (\ref{eq:model0}) are identifiable if
$q_{T\mid \Z}(t,\z;\bt)/q_{T\mid \Z}(t,\z;\wt \bt)$ depends on $\z$
whenever $\bt\ne\wt\bt$ and  $\var_q(\Z)\neq 0$.
\end{Lemma}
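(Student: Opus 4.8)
The plan is to peel off the identifiable pieces in the order in which they become available, reducing the whole problem to the single genuinely new difficulty: recovering $q_T$ (equivalently the parameter $\bt$) in the target population. First, integrating the density in (\ref{eq:model0}) over $(\z,x,\delta)$ on the set where $r=1$ gives $\pr(R=1)=\pi$ (the bracketed $r=1$ factor integrates to $\pr(T\le C)+\pr(C<T)=1$); thus $\pi$ is identified (taking $0<\pi<1$), and dividing by $\pi$ and by $1-\pi$ isolates the conditional law of $(\Z,X,\Delta)$ in $\calP$ and the law of $\Z$ in $\calQ$ respectively.

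Next I would invoke the classical identifiability of the independent right-censoring model in $\calP$ --- precisely the fact, already noted in the text, that $p_C$, $p_{T\mid\Z}$ and $p_\Z$ are identifiable there. Concretely, because $C\ind(T,\Z)$ the two sub-densities $f_{\Z\mid T}(\z,x)\,p_T(x)\,S_C(x)$ (for $\delta=1$) and $p_C(x)\int_x^\infty f_{\Z\mid T}(\z,t)p_T(t)\,dt$ (for $\delta=0$) are observable; marginalizing over $\z$ and forming the usual cause-specific hazards recovers $S_T,p_T,S_C,p_C$ on the support of $X$, and then dividing the $\delta=1$ sub-density by $p_T(x)S_C(x)$ recovers $f_{\Z\mid T}(\z,x)$ there. (I would flag the standard label-shift overlap requirement that $q_T$ be supported where $p_T$ and $S_C$ are positive, so that $f_{\Z\mid T}$ is pinned down on the $t$-range relevant for $\calQ$.) On the set where $r=0$ the density is literally $q_\Z(\z)$, so $q_\Z$ is identified at once.

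It then remains to identify $q_T$, which is where the two hypotheses come in. Suppose $(\pi,f_{\Z\mid T},p_T,p_C,q_T)$ and $(\pi,f_{\Z\mid T},p_T,p_C,\wt q_T)$ both reproduce the joint density; by the steps above they necessarily share $\pi$, $f_{\Z\mid T}$, $p_C$, $p_T$, and (from the $r=0$ block) the common marginal $q_\Z$ with $q_\Z(\z)=\int f_{\Z\mid T}(\z,t)q_T(t)\,dt=\int f_{\Z\mid T}(\z,t)\wt q_T(t)\,dt$. The modelling assumption in $\calQ$ together with label shift gives, for each tuple, parameters $\bt$ and $\wt\bt$ with $q_{T\mid\Z}(t,\z;\bt)=f_{\Z\mid T}(\z,t)q_T(t)/q_\Z(\z)$ and $q_{T\mid\Z}(t,\z;\wt\bt)=f_{\Z\mid T}(\z,t)\wt q_T(t)/q_\Z(\z)$. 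Dividing cancels $f_{\Z\mid T}(\z,t)/q_\Z(\z)$ and yields
$$\frac{q_{T\mid\Z}(t,\z;\bt)}{q_{T\mid\Z}(t,\z;\wt\bt)}=\frac{q_T(t)}{\wt q_T(t)}\qquad\text{for every }t\text{ and every }\z\in\mathrm{supp}(q_\Z),$$
so this ratio is constant in $\z$ throughout $\mathrm{supp}(q_\Z)$. Since $\var_q(\Z)\neq0$, $\mathrm{supp}(q_\Z)$ is non-degenerate, so constancy in $\z$ there is a genuine restriction; the assumption states the ratio depends on $\z$ whenever $\bt\neq\wt\bt$, so contrapositively $\bt=\wt\bt$, hence $q_{T\mid\Z}(t,\z;\bt)=q_{T\mid\Z}(t,\z;\wt\bt)$ for all $(t,\z)$ and, integrating $q_\Z(\z)q_{T\mid\Z}(t,\z;\bt)$ over $\z$, $q_T\equiv\wt q_T$. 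Hence every unknown component of (\ref{eq:model0}) --- and, en route, $q_\Z$ and $\bt$ in (\ref{eq:model1}) --- is identified.

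I expect the main obstacle to be the careful bookkeeping of supports: ensuring $f_{\Z\mid T}(\z,t)$ is identified on the entire $t$-range where $q_T$ puts mass (a label-shift overlap condition), and making precise the sense in which ``the ratio depends on $\z$'' is incompatible with ``constant in $\z$ on $\mathrm{supp}(q_\Z)$'' --- which is exactly what $\var_q(\Z)\neq0$ is there to guarantee. Everything else is routine: the mixture decomposition and an appeal to the textbook identifiability of the independent right-censoring model.
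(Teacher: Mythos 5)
Your proposal is correct and follows essentially the same route as the paper: identify $p_C$, $p_{T\mid\Z}$, $p_\Z$ (hence $f_{\Z\mid T}$ and $p_T$) from the censored source data, identify $q_\Z$ from the target sample, and then observe that two parameter values $\bt\ne\wt\bt$ reproducing the same data would force $q_{T\mid\Z}(t,\z;\bt)/q_{T\mid\Z}(t,\z;\wt\bt)=q_T(t)/\wt q_T(t)$ to be free of $\z$, contradicting the hypothesis. Your extra bookkeeping (identifying $\pi$, the cause-specific-hazard recovery of $S_T$ and $S_C$, and the support-overlap caveat) only makes explicit what the paper leaves implicit via its standing assumptions.
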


\begin{Remark}

Note that the condition required in Lemma~\ref{lemma3.1} is not satisfied in Example~\ref{Example} as $$\frac{q_{T\mid \Z}(t,\z;\bt)}{q_{T\mid \Z}(t,\z;\wt \bt)}=\frac{q_T(t;\mu,\sigma)}{q_T(t;-\mu,\sigma)},$$ where $\bt=(\beta, \mu, \sigma)$ and $\wt \bt=(\beta, -\mu, \sigma)$, which does not depend on $\z$.

\end{Remark}

The condition required in Lemma~\ref{lemma3.1} is relatively
mild. Example~\ref{exp:ph} illustrates that the proportional
hazards model with a Weibull baseline hazard function satisfies this identifiability condition. In Supplement~S1, we further demonstrate that commonly used models, e.g. (among others) the
accelerated failure time model with a Log-Normal baseline hazard
function, the proportional odds model with a Log-Logistic baseline
survival function, and the accelerated hazards model, all satisfy the
identifiability condition.

\begin{Example}[Proportional hazards model] \label{exp:ph}
For the proportional hazards model with
  $\bt=(\bb\trans,\gamma,\lambda)\trans$, and
\bse
    q_{T \mid \Z}(t,\z;\bt) = h(t;\gamma,\lambda)\exp(\z\trans\bb)\exp\left\{-H(t;\gamma,\lambda) \exp(\z\trans\bb)\right\},
    \ese
where $h(t;\gamma,\lambda)\equiv \lambda\gamma t^{\gamma-1}$, we have that $q_{T \mid \Z}(t,\z;\bt)/q_{T \mid
  \Z}(t,\z;\wt\bt)$ depends on $\z$ whenever
$\bt \neq \wt\bt$. Hence by Lemma~\ref{lemma3.1}, the model is
identifiable. The proof is provided in Supplement~S3.1.
\end{Example}

Given the established identifiability, we now derive
methods for parameter estimation. Up to a constant additive
term that does not involve $\bt$, the 
loglikelihood of one single observation is given by
\bse
\ell(x, \z, \delta, r; \bt) &\equiv&
 r\delta \log \{q_{T\mid \Z}(x,\z;\bt)\}
-r\delta
\log\int q_\Z(\z) q_{T\mid \Z}(x,\z;\bt)dz
\\&&+ r
(1-\delta)\log
\int_x^{\infty}\frac{ p_T(t)q_{T\mid \Z}(t,\z;\bt)}{
\int q_\Z(\z)q_{T\mid \Z}(t,\z;\bt)d\z}dt.
\ese
Notably, in this log-likelihood, both $p_T(t)$ and $q_\Z(\z)$
are unknown. Fortunately, since we have access to $X$ in
population $\calP$ and $\Z$ in population $\calQ$, we can estimate
these distributions directly. Rather than imposing parametric
structures on $p_T(t)$ and $q_\Z(\z)$, we adopt nonparametric
approaches to maintain flexibility. Specifically, we use the
Kaplan-Meier estimator for $P_T(t)$, where $\wh P_T(t)=1-\prod_{j:X_{(j)}\leq t}\{y_{(j)}-d_{(j)}\}/y_{(j)}$, with $X_{(j)}$ denoting the ordered event times, $d_{(j)}$ the number of events at time $X_{(j)}$ and $y_{(j)}$ the size of the risk set at $X_{(j)}$ which effectively accounts for
censoring in survival data. For $Q_\Z(\z)$, we employ the empirical estimator $\wh Q_\Z(\z) \equiv n_2^{-1}\sum_{i=1}^n (1-R_i) I(\Z_i \leq \z)$, where the inequality is understood componentwise. With these estimates, the log-likelihood function can be
approximated as
\be
\label{eq:llh}
\ell_n(\bt)&=& n_1^{-1}
\sum_{i=1}^{n_1}
\left\{ \delta_i \log\{q_{T\mid \Z}(x_i,\z_i;\bt)\}
-\delta_i
\log\left(n_2^{-1}\sum_{j=1}^n (1-r_j)q_{T\mid \Z}(x_i,\z_j;\bt)\right)\right.\n_1\\
&&\left.+
(1-\delta_i)\log
\left(\sum_{k=1}^n\frac{ r_k \delta_k I(t_k>x_i)\{\wh S_p(t_k-)-\wh S_{p}(t_k)\}
  q_{T\mid \Z}(t_k,\z_i;\bt)}{
n_2^{-1}\sum_{j=1}^n (1-r_j) q_{T\mid \Z}(t_k,\z_j;\bt)}\right)
\right\},
\ee
where $\wh S_p(t)=1-\wh P_T(t)$ is the Kaplan-Meier estimator of $S_p(t)$, the marginal
survival function in population $\calP$. We subsequently obtain
the estimator $\wh \bt$ by maximizing the approximated log-likelihood
function in \eqref{eq:llh} with respect to $\bt$.

However, the presence of both label shift and censoring introduces
significant challenges in the 
theoretical analysis of our model. In particular, the third term in
the approximated log-likelihood function (\ref{eq:llh}) accounting for
censored observations has a nested structure where the estimated
$\wh Q_\Z(\z)$ appears within the Kaplan-Meier integral based on $\wh P_T(t)$. This
interdependence makes it difficult to establish the statistical
properties of the estimator. To address these technical challenges, we
employ techniques from empirical process
theory~\citep{kosorok2007introduction,van1996weak}. We establish the
asymptotic properties of $\wh \bt$ in the following two theorems.  Let
$n_0 \equiv \min(n_1,n_2)$ and let $\bt_0$ denote the true value of
the parameter $\bt$. The conditions required for the following lemmas
and theorems are presented in Supplement~S3.2.2.

\begin{Theorem}[Consistency]
\label{Th:3.1}
Under Conditions A1--A6, presented in Supplement~{S3.2.2},
 we have that $\wh \bt \povr \bt_0$  as $n_0 \rightarrow \infty$.
\end{Theorem}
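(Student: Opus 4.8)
\emph{Proof plan.} The plan is to follow the classical route for consistency of M-estimators: (i) identify a deterministic limiting criterion that $\ell_n(\bt)$ approximates and show it has a well-separated maximum at $\bt_0$; (ii) establish a uniform law of large numbers $\sup_{\bt\in\Theta}|\ell_n(\bt)-\ell(\bt)|\povr 0$ over the (compact) parameter space $\Theta$; and (iii) combine the two through the standard argmax argument (see, e.g., Theorem~5.7 of \citet{van1996weak} or the Wald-type consistency theorem in \citet{kosorok2007introduction}). Conditions A1--A6 will enter at three points: to guarantee identifiability (via Lemma~\ref{lemma3.1}), to make the relevant function class Glivenko--Cantelli, and to restrict the censored-term integrals to a region where the Kaplan--Meier estimator is uniformly consistent.

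\emph{Step 1 (well-separated population maximum).} Replacing the plug-in estimators $\wh S_p$ and $\wh Q_\Z$ in a single summand of $\ell_n$ by the truth $S_p$ (equivalently $p_T$) and $Q_\Z$ and taking $E_p$ yields the limiting criterion
\[
\ell(\bt)=E_p\Big[\,\Delta\log q_{T\mid\Z}(X,\Z;\bt)-\Delta\log\!\int q_\Z(\z)\,q_{T\mid\Z}(X,\z;\bt)\,d\z
+(1-\Delta)\log\!\int_X^\infty\frac{p_T(t)\,q_{T\mid\Z}(t,\Z;\bt)}{\int q_\Z(\z)\,q_{T\mid\Z}(t,\z;\bt)\,d\z}\,dt\,\Big].
\]
Using the factorization (\ref{eq:model1}), $\ell(\bt)-\ell(\bt_0)$ equals, up to a $\bt$-free constant, the negative Kullback--Leibler divergence between the law of $(\Z,R,RX,R\Delta)$ under $\bt_0$ and under $\bt$; hence $\ell(\bt)\le\ell(\bt_0)$ with equality iff the two laws coincide. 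Under the identifiability hypothesis of Lemma~\ref{lemma3.1} --- which is built into Conditions A1--A6 --- equality forces $\bt=\bt_0$, and continuity of $\ell$ on the compact $\Theta$ then upgrades this to $\sup_{\|\bt-\bt_0\|\ge\varepsilon}\ell(\bt)<\ell(\bt_0)$ for every $\varepsilon>0$.

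\emph{Step 2 (uniform convergence).} Decompose $\ell_n(\bt)-\ell(\bt)$ into the centered empirical average $n_1^{-1}\sum_{i=1}^{n_1}m_\bt(X_i,\Z_i,\Delta_i)-E_p\{m_\bt\}$, where $m_\bt$ is the summand of $\ell_n$ evaluated at the true nuisances, plus the plug-in error from substituting $(\wh S_p,\wh Q_\Z)$ for $(S_p,Q_\Z)$. For the first piece, Conditions A1--A6 (compactness of $\Theta$; continuity and boundedness of $q_{T\mid\Z}$ and of the relevant integrals in $\bt$; denominators bounded away from zero; an integrable envelope) make $\{m_\bt:\bt\in\Theta\}$ a Glivenko--Cantelli class, e.g.\ via Lipschitz continuity in $\bt$ with an integrable envelope or a bracketing bound. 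For the plug-in piece, use (a) $\sup_\z|\wh Q_\Z(\z)-Q_\Z(\z)|\povr 0$ by the multivariate Glivenko--Cantelli theorem, and (b) $\sup_{t\le\tau}|\wh S_p(t)-S_p(t)|\povr 0$ for any $\tau$ with $\Pr(X>\tau)>0$ (standard uniform consistency of Kaplan--Meier under the independent-censoring condition), where the support/tail part of A1--A6 guarantees the integrals only see such a $\tau$. Once the positivity lower bounds are in force, the maps $Q\mapsto\int q\,q_{T\mid\Z}\,d\z$ and $(S,Q)\mapsto$ (the inner Kaplan--Meier integral of the censored term) are uniformly Lipschitz in sup-norm, uniformly over $\bt\in\Theta$, so the plug-in error is bounded by a fixed continuous function of $\|\wh S_p-S_p\|_\infty$ and $\|\wh Q_\Z-Q_\Z\|_\infty$ that tends to zero. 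Preservation of the Glivenko--Cantelli property under sums, products, quotients bounded away from zero, and Lipschitz transformations \citep{van1996weak,kosorok2007introduction} then assembles the pieces into $\sup_{\bt\in\Theta}|\ell_n(\bt)-\ell(\bt)|\povr 0$; alternatively one runs a single master Glivenko--Cantelli argument for the class indexed by $(\bt,S,Q)$ ranging over $\Theta$ times a sup-norm neighborhood of $(S_p,Q_\Z)$.

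\emph{Step 3 and the main obstacle.} Since $\wh\bt$ maximizes $\ell_n$, $\ell_n(\wh\bt)\ge\ell_n(\bt_0)$; combining with Step~2 gives $\ell(\wh\bt)\ge\ell(\bt_0)-o_p(1)$, and the well-separated maximum from Step~1 forces $\wh\bt\povr\bt_0$. The step I expect to cause the most trouble is controlling the plug-in error inside the censored (third) term of (\ref{eq:llh}), where $\wh Q_\Z$ sits in the denominator of a ratio that is then integrated against the Kaplan--Meier measure generated by $\wh S_p$: the Kaplan--Meier estimator is only uniformly consistent strictly inside the support of the observation time (so a support/tail condition is unavoidable), and the nested ratio must be shown to remain bounded and Lipschitz in both nuisances \emph{uniformly over} $\bt\in\Theta$ --- exactly what the positivity and boundedness parts of Conditions A1--A6 are designed to deliver. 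Making these bounds simultaneously uniform in $\bt$ and in the two sup-norm perturbations, rather than pointwise, is the delicate point.
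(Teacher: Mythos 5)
Your overall strategy is exactly the paper's: identifiability plus Lemma~5.35 of van der Vaart gives a unique (hence, by continuity on the compact $\Theta$, well-separated) maximum of $\bt\mapsto E\{\ell(X,\Z,\Delta,R;\bt)\}$; a uniform law of large numbers $\sup_{\bt\in\Theta}|\ell_n(\bt)-E\{\ell(\cdot;\bt)\}|\povr 0$ is proved by splitting $\ell_n$ into its three terms and combining Glivenko--Cantelli classes with mean-value-theorem control of the plug-in error; and the argmax theorem (van der Vaart, Theorem~5.7) concludes. Two of your shortcuts deserve a caveat. First, for the term $\int_{t>x} \{q_{T\mid\Z}(t,\z;\bt)/q_T(t;\bt)\}\,d\{\wh P_T(t)-P_T(t)\}$ you propose to bound the error by a Lipschitz function of $\|\wh S_p-S_p\|_\infty$; this is not automatic, since a Kaplan--Meier integral of a function class is controlled by $\|\wh P_T-P_T\|_\infty$ only if the integrands have uniformly bounded variation, which is not among Conditions A1--A6. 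The paper instead imposes entropy (bracketing/uniform covering) conditions on the class $\{t\mapsto I(t>x)q_{T\mid\Z}(t,\z;\bt)/\int q_{T\mid\Z}(t,\z;\bt)dQ_\Z(\z)\}$ in Condition~A6 and invokes the uniform Kaplan--Meier integral result (Lemma~C.1, adapted from Sellero et al.\ and Stute) to get $\sup_\varphi|\int\varphi\,d(\wh P_T-P_T)|\povr 0$; you would need either that lemma or an explicit uniform-BV argument. Second, the paper obtains $\sup_{\bt,t}|\wh q_T(t;\bt)-q_T(t;\bt)|\povr 0$ directly from a Glivenko--Cantelli class indexed by $(\bt,t)$ (its $\mathcal{F}_2$, using the Lipschitz condition in A2), rather than via $\sup_{\z}|\wh Q_\Z(\z)-Q_\Z(\z)|\povr 0$ plus a Lipschitz map as you suggest; your route works too but requires the additional regularity of $\z\mapsto q_{T\mid\Z}(t,\z;\bt)$ needed to pass from convergence of distribution functions to convergence of the integrals. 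With these two points repaired, your proof is the paper's proof.
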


With consistency established, we proceed to show the asymptotic
normality of $\wh \bt$. Since the estimator $\wh \bt$ is obtained by
maximizing the approximated loglikelihood function, under some
regularity conditions, $\wh \bt$ satisfies the estimating equation
$\Psi_n(\bt)=0$, where $\Psi_n(\bt)$ is given by \bse 
\Psi_{n}(\bt)&\equiv& n^{-1} \sum_{i=1}^{n} \frac{R_i}{\pi} \psi_{n}(X_i,\Z_i,\Delta_i;\bt),
\ese 
and $\tfrac{r}{\pi}\psi_n(\cdot)$ is the approximated score
function. As in standard asymptotic theory, the first step is to study
the asymptotic behavior of $\Psi_n(\bt_0)$. However, technical
challenges arise because $\tfrac{r}{\pi}\psi_n(\cdot)$ involves
nonparametrically estimated components $\wh P_T(t)$ and $\wh
Q_\Z(\z)$, which precludes the direct application of the classical
central limit theorem. 

To address this, we decompose the approximated score function $\tfrac{r}{\pi}\psi_n(\cdot)$ into three components:
\bse
\frac{r}{\pi}\psi_n(x,\z,\delta;\bt) 
\equiv \psi_{1}(x,\z,\delta,r;\bt) - \psi_{2n}(x,\delta,r;\bt) + \psi_{3n}(x,\z,\delta,r;\bt),
\ese
where \bse
\psi_{1}(x,\z,\delta,r;\bt) &\equiv& \frac{r \delta}{\pi} \frac{\frac{\partial}{\partial \bt} q_{T \mid \Z}(x, \z ; \bt)}{q_{T \mid \Z}(x, \z ; \bt)}, \\
\psi_{2n}(x,\delta,r;\bt) &\equiv & \frac{r \delta}{\pi} \frac{\int \frac{\partial}{\partial \bt}  q_{T\mid\Z}(x,\z;\bt)d\wh Q_\Z(\z)}{\int q_{T\mid\Z}(x,\z;\bt)d\wh Q_\Z(\z)}, \\ \psi_{3n}(x,\z,\delta,r;\bt) &\equiv & 
\frac{r (1-\delta)}{\pi} \frac{\int I(t>x)\left[ \frac{\frac{\partial}{\partial \bt} q_{T \mid Z}(t, \z ; \bt)}{\int q_{T \mid \Z}(t, \z ; \bt) d \wh Q_\Z(\z)}-\frac{q_{T \mid \Z}(t, \z ; \bt) \int \frac{\partial}{\partial \bt} q_{T\mid \Z}(t, \z ; \bt) d \wh Q_\Z(\z)}{\{\int q_{T \mid \Z}(t, \z ; \bt) d \wh Q_\Z(\z)\}^2}\right] d \wh P_T(t)}{\int  \frac{I(t>x) q_{T \mid \Z}(t, \z ; \bt)}{\int q_{T \mid \Z}(t, \z ; \bt) d \wh Q_\Z(\z)} d \wh P_T(t)}.\\
\ese
Similarly, the score function based on the true distributions $P_T(t)$ and $Q_\Z(\z)$ can also be decomposed as:
\bse
\frac{r}{\pi}\psi(x,\z,\delta;\bt) 
& \equiv & \psi_{1}(x,\z,\delta,r;\bt) - \psi_{2}(x,\delta,r;\bt) + \psi_{3}(x,\z,\delta,r;\bt),
\ese
where 
\bse
\psi_{2}(x,\delta,r;\bt) &\equiv & \frac{r \delta}{\pi} \frac{\int \frac{\partial}{\partial \bt}  q_{T\mid\Z}(x,\z;\bt)d Q_\Z(\z)}{\int q_{T\mid\Z}(x,\z;\bt)d Q_\Z(\z)}, \\ \psi_{3}(x,\z,\delta,r;\bt) &\equiv & 
\frac{r (1-\delta)}{\pi} \frac{\int I(t>x)\left[ \frac{\frac{\partial}{\partial \bt} q_{T \mid Z}(t, \z ; \bt)}{\int q_{T \mid \Z}(t, \z ; \bt) d  Q_\Z(\z)}-\frac{q_{T \mid \Z}(t, \z ; \bt) \int \frac{\partial}{\partial \bt} q_{T\mid \Z}(t, \z ; \bt) d  Q_\Z(\z)}{\{\int q_{T \mid \Z}(t, \z ; \bt) d  Q_\Z(\z)\}^2}\right] d  P_T(t)}{\int  \frac{I(t>x) q_{T \mid \Z}(t, \z ; \bt)}{\int q_{T \mid \Z}(t, \z ; \bt) d Q_\Z(\z)} d P_T(t)}.\\
\ese
Among these, the terms $\psi_{2n}$ and $\psi_{3n}$ involve the
nonparametric estimators $\wh P_T(t)$ and $\wh Q_\Z(\z)$. We first
present two lemmas that simplify the expressions of $\psi_{2n}$ and
$\psi_{3n}$, which allows us to apply the central limit theorem
later. These results are then combined in Theorem~\ref{Th:3.2}
to establish 
the asymptotic normality of $\wh \bt$. 

\begin{Lemma}
\label{lm:3.2}
	Under Conditions A2 and B2, presented in Supplement~{S3.2.2}, we have
	\bse
	&&n^{-1}\sum_{i=1}^n \psi_{2n}(X_i,\Delta_i,R_i;\bt_0)\\
	&=& n^{-1}\sum_{i=1}^n \left(\psi_{2}(X_i,\Delta_i,R_i;\bt_0)\right.\\&&\left. + \frac{1-R_i}{1-\pi} E_p \left[ \left. \Delta \frac{\frac{\partial}{\partial \bt}  q_{T\mid\Z}(X,\Z_i;\bt_0)q_T(X)- q_T^*(X) q_{T\mid\Z}(X,\Z_i;\bt_0) }{\{q_T(X)\}^2}\right| \Z_i \right]\right)\\
	&&+o_p(n_1^{-1/2}) +o_p(n_2^{-1/2}),
	\ese 
	where $q_T$ and $q_T^*$
are defined in Supplement~S3.2.1.
\end{Lemma}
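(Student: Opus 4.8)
The plan is to treat $\psi_{2n}(x,\delta,r;\bt_0)$ as a smooth functional of the empirical distribution $\wh Q_\Z$, expand it to first order around $Q_\Z$ (a von Mises / functional delta expansion), and then recognize the resulting linear term as a two--sample U--statistic whose Hoeffding decomposition produces the stated correction. Set $\wh A(x)\equiv\int\frac{\partial}{\partial\bt}q_{T\mid\Z}(x,\z;\bt_0)\,d\wh Q_\Z(\z)$ and $\wh B(x)\equiv\int q_{T\mid\Z}(x,\z;\bt_0)\,d\wh Q_\Z(\z)$, and let $A(x),B(x)$ denote the same integrals taken against $Q_\Z$; by the definitions in Supplement~S3.2.1, $B(x)=q_T(x)$ and $A(x)=q_T^*(x)$, so that $\psi_{2n}=\tfrac{r\delta}{\pi}\wh A(X)/\wh B(X)$ and $\psi_{2}=\tfrac{r\delta}{\pi}A(X)/B(X)$. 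With $a(x)\equiv\wh A(x)-A(x)$, $b(x)\equiv\wh B(x)-B(x)$, the elementary identity $\tfrac{\wh A}{\wh B}-\tfrac{A}{B}=\tfrac{a}{B}-\tfrac{Ab}{B^2}+\mathrm{Rem}$ with $|\mathrm{Rem}(x)|\le C\{a(x)^2+b(x)^2\}$ --- valid where $B=q_T$ is bounded below (Condition~A2) and for $n$ large so that $\wh B$ is likewise bounded below --- gives, after identifying the linear part with $\int\dot\Phi_x\,d(\wh Q_\Z-Q_\Z)$,
\[
\psi_{2n}(X_i,\Delta_i,R_i;\bt_0)-\psi_{2}(X_i,\Delta_i,R_i;\bt_0)=\frac{R_i\Delta_i}{\pi}\int\dot\Phi_{X_i}(\z)\,d(\wh Q_\Z-Q_\Z)(\z)+\frac{R_i\Delta_i}{\pi}\mathrm{Rem}(X_i),
\]
where $\dot\Phi_x(\z)\equiv\{\frac{\partial}{\partial\bt}q_{T\mid\Z}(x,\z;\bt_0)\,q_T(x)-q_T^*(x)\,q_{T\mid\Z}(x,\z;\bt_0)\}/\{q_T(x)\}^2$ is exactly the integrand appearing in the claimed correction.

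For the remainder, note that $a(x)$ and $b(x)$ are centered averages over the $\calQ$--subsample of the bounded quantities $\frac{\partial}{\partial\bt}q_{T\mid\Z}(x,\Z_j;\bt_0)$ and $q_{T\mid\Z}(x,\Z_j;\bt_0)$, so for each fixed $x$ they have variance $O(n_2^{-1})$. Since the $\calP$-- and $\calQ$--subsamples are disjoint, $X_i$ (with $R_i=1$) is independent of $\wh Q_\Z$, whence $E\{n^{-1}\sum_i R_i\Delta_i[a(X_i)^2+b(X_i)^2]\}=O(\pi_n n_2^{-1})=O(n_2^{-1})$, so $n^{-1}\sum_i\tfrac{R_i\Delta_i}{\pi}\mathrm{Rem}(X_i)=O_p(n_2^{-1})=o_p(n_2^{-1/2})$; alternatively one invokes the Donsker property of $\{\z\mapsto q_{T\mid\Z}(x,\z;\bt_0)\}$ under Condition~B2 to bound $\|a\|_\infty,\|b\|_\infty$ uniformly.

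For the linear term, $\int q_{T\mid\Z}(x,\z;\bt_0)\,dQ_\Z(\z)=q_T(x)$ and $\int\frac{\partial}{\partial\bt}q_{T\mid\Z}(x,\z;\bt_0)\,dQ_\Z(\z)=q_T^*(x)$ imply $\int\dot\Phi_x(\z)\,dQ_\Z(\z)=\0$, so $\int\dot\Phi_x\,d(\wh Q_\Z-Q_\Z)=n_2^{-1}\sum_{j:R_j=0}\dot\Phi_x(\Z_j)$. Summing over $i$ and using $n^{-1}\sum_i R_i(\cdot)=\pi_n n_1^{-1}\sum_{i:R_i=1}(\cdot)$ rewrites $n^{-1}\sum_i\tfrac{R_i\Delta_i}{\pi}\int\dot\Phi_{X_i}\,d(\wh Q_\Z-Q_\Z)$ as $\tfrac{\pi_n}{\pi}\cdot\tfrac{1}{n_1 n_2}\sum_{i:R_i=1}\sum_{j:R_j=0}\Delta_i\dot\Phi_{X_i}(\Z_j)$, a two--sample U--statistic with kernel $g\{(X,\Delta),\z\}=\Delta\dot\Phi_X(\z)$, square--integrable by Condition~B2. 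Its mean $E_p\{\Delta E_q[\dot\Phi_X(\Z)]\}$ and its $\calP$--projection $E_q[g\mid X,\Delta]=\Delta E_q[\dot\Phi_X(\Z)]$ both vanish because $E_q[\dot\Phi_x(\Z)]=\0$; hence the Hoeffding decomposition leaves only the $\calQ$--projection $g_2(\z)=E_p[\Delta\dot\Phi_X(\z)]$ plus a completely degenerate remainder of variance $O(n_1^{-1}n_2^{-1})$, i.e. $o_p(n_1^{-1/2})+o_p(n_2^{-1/2})$. Finally, $\tfrac{\pi_n}{\pi}n_2^{-1}\sum_{j:R_j=0}g_2(\Z_j)=\tfrac{\pi_n}{\pi(1-\pi_n)}n^{-1}\sum_j(1-R_j)g_2(\Z_j)$, and since $E_q[g_2(\Z)]=\0$ (again by $E_q[\dot\Phi_x(\Z)]=\0$) this empirical average is $O_p(n_2^{-1/2})$, so replacing the deterministic factor $\tfrac{\pi_n}{\pi(1-\pi_n)}$ by its limit $\tfrac{1}{1-\pi}$ costs only $o_p(n_2^{-1/2})$; the result is $n^{-1}\sum_j\tfrac{1-R_j}{1-\pi}E_p[\Delta\dot\Phi_X(\Z_j)\mid\Z_j]$, the asserted correction. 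Adding back $n^{-1}\sum_i\psi_2(X_i,\Delta_i,R_i;\bt_0)$ and collecting the negligible pieces completes the proof.

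I expect the main obstacle to be the rigorous, uniform control of the remainder $\mathrm{Rem}$ over the observed $X_i$: this requires the denominator $q_T$ (and its empirical surrogate $\wh B$) to stay bounded away from zero on the relevant range and the empirical fluctuations $a,b$ to be $O_p(n_2^{-1/2})$, which is precisely where Conditions~A2 and~B2 and the empirical--process machinery of \citep{van1996weak,kosorok2007introduction} enter. Once the remainder is tamed, the U--statistic projection step and the $\pi_n\to\pi$ bookkeeping are routine.
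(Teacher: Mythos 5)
Your proposal is correct and follows essentially the same route as the paper: a first-order (von Mises/Taylor) expansion of the ratio in $\wh Q_\Z$ around $Q_\Z$, identification of the linear term as a two-sample U-statistic, and a Hoeffding/Hajek projection in which the mean and the $\calP$-projection vanish because $\int\dot\Phi_x\,dQ_\Z=\0$, leaving exactly the stated $\calQ$-side correction. Your remainder control via the independence of the two subsamples (giving $O_p(n_2^{-1})$ pointwise in expectation) is a slightly sharper alternative to the paper's use of the uniform $o_p(n_2^{-1/4})$ rates from its Donsker lemma, but it is a minor variation rather than a different argument.
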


\begin{Remark}
	In the result of Lemma~\ref{lm:3.2}, the first term comes from the score function $\psi(\cdot)$, while the second term arises from estimating $Q_\Z(\z)$ in $\psi_{2n}(\cdot)$ and therefore involves only data from population $\calQ$.
\end{Remark}

The third term $\psi_{3 n}$ presents the main difficulty, as several components involving $\wh Q_\Z(\z)$, which is estimated from population $\calQ$, appear inside the Kaplan-Meier integral based on $\wh P_T(t)$, which is estimated from population $\calP$. In other words, function classes indexed by the multivariate empirical distribution function $\wh Q_\Z(\z)$ enter the Kaplan-Meier integral based on $\wh P_T(t)$, complicating the analysis.

 To handle this, we apply the uniform properties for Kaplan-Meier integrals from \cite{sellero2005uniform}. However, one key assumption in that work is that the function class must be VC-subgraph. In our setting, the VC-subgraph assumption is difficult to verify since the structure of $q_{T\mid \Z}(t,\z;\bt)$ is unspecified and $\Z$ is multivariate. To address this challenge, we propose an alternative in Lemma~\ref{lm:c1} in Appendix C, which imposes conditions on the bracketing and covering numbers of the function class, making verification more tractable.

\begin{Lemma}
\label{lm:3.3}
	Under Conditions A1, B1, B2 and B3, presented in Supplement~{S3.2.2}, we have
	\bse
	&&n^{-1}\sum_{i=1}^n \psi_{3n}(X_i,\Z_i,\Delta_i,R_i;\bt_0)\\
	&=& n^{-1}\sum_{i=1}^n\left( \psi_{3}(X_i,\Z_i,\Delta_i,R_i;\bt_0)\right.\\
	&&\left. -\frac{R_i}{\pi}  E_p \left[ (1-\Delta) \frac{S_{1}(X,\Z, q_T)-S_{2}(X,\Z, q_T,q_T^*)}{\{S_0(X,\Z,q_T)\}^2}\eta_{0p}(X_i,\Delta_i;X,\Z)\right| X_i,\Delta_i  \right]\\
&& +\frac{1-R_i}{1-\pi}E_p\left.\left\{(1-\Delta)\frac{\eta_{1q}(\Z_i;X,\Z)-\eta_{2q}( \Z_i,\X,\Z)}{S_0(X,\Z,q_T)} \right | \Z_i \right\}
\\&&  - \left. \frac{1-R_i}{1-\pi}E_p\left.\left[ (1-\Delta)\frac{S_{1}(X,\Z, q_T)-S_{2}(X,\Z, q_T,q_T^*)}{\{S_0(X,\Z,q_T)\}^2}\eta_{0q}(\Z_i;X,\Z) \right| \Z_i \right] \right)\notag\\
&& - E_p\left[-(1-\Delta)\frac{S_{1}(X,\Z, q_T)-S_{2}(X,\Z, q_T,q_T^*)}{\{S_0(X,\Z,q_T)\}^2}\eta_{0p}(X,\Delta;X,\Z)\right]\\
	&&+ o_p(n_1^{-1/2}) +o_p(n_2^{-1/2}),
	\ese 
where $q_T$, $q_T^*$, $S_i$, $\eta_{ip}$ and $\eta_{iq}$ for $i=0,1,2$
are defined in Supplement~S3.2.1 and formulas~(3)-(11) in Supplement~S3.2.4 respectively.
\end{Lemma}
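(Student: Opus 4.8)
\textbf{Proof proposal for Lemma~\ref{lm:3.3}.}

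The plan is to view $\psi_{3n}$ as a smooth functional of the two plug-in estimators $\wh P_T$ and $\wh Q_\Z$ and to linearize it around $(P_T,Q_\Z)$, replacing each estimation error by its i.i.d.\ expansion. For a censored observation $i\in\calP$, the quantity $\psi_{3n}(X_i,\Z_i,\Delta_i,R_i;\bt_0)$ is a ratio $A_{ni}/B_{ni}$ of two Kaplan-Meier integrals whose integrands are themselves ratios of linear functionals of $\wh Q_\Z$; write $A_i/B_i$ for the same ratio with $(\wh P_T,\wh Q_\Z)$ replaced by $(P_T,Q_\Z)$, so that $A_i/B_i$ is the integrand of $\psi_3$. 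Expanding $(a,b)\mapsto a/b$ to first order gives $A_{ni}/B_{ni}-A_i/B_i=(A_{ni}-A_i)/B_i-A_i(B_{ni}-B_i)/B_i^{2}+\text{rem}_i$, with $\text{rem}_i$ quadratic in $(A_{ni}-A_i,B_{ni}-B_i)$; since under Conditions A1 and B1--B3 these differences are $O_p(n_1^{-1/2})+O_p(n_2^{-1/2})$ uniformly in $i$ and the weights are bounded, $n^{-1}\sum_i\text{rem}_i=o_p(n_1^{-1/2})+o_p(n_2^{-1/2})$. I would then expand $A_{ni}-A_i$ and $B_{ni}-B_i$ the same way, inserting the intermediate object in which $\wh Q_\Z$ is replaced by $Q_\Z$ but $\wh P_T$ is retained, so that each difference splits into a ``$\wh Q_\Z$-part'' $\int\{g(t,\Z_i,\wh Q_\Z)-g(t,\Z_i,Q_\Z)\}\,d\wh P_T(t)$ and a ``$\wh P_T$-part'' $\int g(t,\Z_i,Q_\Z)\,d(\wh P_T-P_T)(t)$.

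For the $\wh Q_\Z$-part I would use that $\wh Q_\Z=n_2^{-1}\sum_{j:R_j=0}\delta_{\Z_j}$ is the ordinary empirical measure over the $\calQ$-sample: since $g$ is a ratio of linear functionals of its measure argument, a further Taylor expansion gives the leading term $n_2^{-1}\sum_{j:R_j=0}\int Dg(t,\Z_i)[\delta_{\Z_j}-Q_\Z]\,d\wh P_T(t)$, which is exactly how the $\eta_{0q},\eta_{1q},\eta_{2q}$ are constructed. Interchanging the finite sum over $j$ with the Kaplan-Meier integral by Fubini and then replacing $\wh P_T$ by $P_T$ inside costs only a product of the $O_p(n_1^{-1/2})$ Kaplan-Meier fluctuation and the $O_p(n_2^{-1/2})$ empirical fluctuation, hence $o_p(n_1^{-1/2})+o_p(n_2^{-1/2})$. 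Summing over the outer index $i$ turns this into a double sum over a $\calP$-observation and an independent $\calQ$-observation; its projection onto the $\calQ$-sample yields the two $\frac{1-R_i}{1-\pi}E_p[\,\cdot\mid\Z_i]$ terms of the statement, while the projection onto the outer index vanishes because the maps $\delta_z\mapsto Dg[\delta_z-Q_\Z]$ are centered under $Q_\Z$.

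For the $\wh P_T$-part I would invoke the i.i.d.\ (Stute-type) asymptotic representation of a Kaplan-Meier integral, $\int\varphi\,d(\wh P_T-P_T)=n_1^{-1}\sum_{k:R_k=1}\eta_{0p}(X_k,\Delta_k;\varphi)+o_p(n_1^{-1/2})$, crucially \emph{uniformly} over the class of integrands $\varphi$ occurring here. Plugging this in, summing over the outer index $i$, and reducing the resulting double sum over $\calP$-pairs to its projection produces the $-\frac{R_i}{\pi}E_p[\,\cdot\mid X_i,\Delta_i]$ term of the statement; a separate accounting of the interaction between the outer observation $i$ and that same observation's contribution to $\wh P_T$ (the ``diagonal'' of the double sum) yields the remaining non-random summand $-E_p[-(1-\Delta)\{S_1(X,\Z,q_T)-S_2(X,\Z,q_T,q_T^*)\}\{S_0(X,\Z,q_T)\}^{-2}\eta_{0p}(X,\Delta;X,\Z)]$. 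Collecting the leading term $\psi_3$, the $\wh Q_\Z$-contributions, the $\wh P_T$-contribution and the non-random correction gives the displayed identity, the separate $o_p(n_1^{-1/2})$ and $o_p(n_2^{-1/2})$ remainders merely reflecting that $n_1$ and $n_2$ may diverge at different rates.

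The main obstacle is the \emph{uniform} Kaplan-Meier representation needed for the $\wh P_T$-part: here $\varphi$ ranges over a class indexed not only by the outer data $(X_i,\Z_i)$ and $\bt_0$ but also by the multivariate empirical measure $\wh Q_\Z$ (since $\wh Q_\Z$ enters $g$ inside the integral), so one must control a Kaplan-Meier-integral empirical process over a random, $\calQ$-data-dependent function class. The classical tool \citep{sellero2005uniform} requires this class to be VC-subgraph, which we cannot check because $q_{T\mid\Z}(t,\z;\bt)$ is left unspecified and $\Z$ is multivariate; the substitute is Lemma~\ref{lm:c1} in Appendix~C, which replaces the VC-subgraph hypothesis by bounds on the bracketing and covering numbers of the class. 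The real work is to verify those bounds for the concrete integrands above, using the smoothness and boundedness in Conditions A1 and B1--B3 together with the fact that $\wh Q_\Z$ is supported on at most $n_2$ atoms; once this is in hand, the Taylor remainders, the Fubini interchanges, the negligibility of the $\wh P_T$-by-$\wh Q_\Z$ cross term and the projections of the double sums are all routine, essentially as in the proof of Lemma~\ref{lm:3.2}.
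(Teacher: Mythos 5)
Your overall plan---linearize the ratios of Kaplan--Meier integrals around $(P_T,Q_\Z)$, split each estimation error into a $\wh Q_\Z$-part and a $\wh P_T$-part, invoke the uniform Stute-type representation of Lemma~\ref{lm:c1} with bracketing/covering-number conditions verified for the concrete integrand classes, and reduce the resulting two-sample double sums by Hajek projection---is exactly the route the paper takes (via Lemmas~S3.3--S3.5). But as written the proposal does not arrive at the stated formula, for the following reason. The linearization of $(S_{1n}-S_{2n})/S_{0n}$ produces, on the $\calP$ side, not only the $\eta_{0p}$ term appearing in the lemma (which comes from the denominator $S_{0n}-S_0$) but also a projection term
$\frac{R_i}{\pi}\,E_p\bigl[(1-\Delta)\{\eta_{1p}(X_i,\Delta_i;X,\Z)-\eta_{2p}(X_i,\Delta_i;X,\Z)\}/S_0(X,\Z,q_T)\mid X_i,\Delta_i\bigr]$
coming from the numerators $S_{1n}-S_1$ and $S_{2n}-S_2$, together with its non-random counterpart. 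These terms are absent from the statement, and your proposal never mentions them. The paper disposes of them by an explicit, page-long computation showing that this conditional expectation is identically zero for every $(x,\delta)$; the argument uses the closed form of the Kaplan--Meier influence functions $\gamma_0,\gamma_1^{\varphi},\gamma_2^{\varphi}$ together with the identity $\varphi_1(t;x,\z)-\varphi_2(t;x,\z)=\varphi_0(t;x,\z)\bigl[\partial_{\bt}\log q_{T\mid\Z}(t,\z;\bt_0)-E\{\partial_{\bt}\log q_{T\mid\Z}(t,\Z;\bt_0)\mid t\}\bigr]$, whose conditional centering given $t$ is what makes the three pieces of the influence function cancel after integration. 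This cancellation is an essential step, not routine bookkeeping: without it the linearized expression simply does not match the claimed one.

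A second, smaller error: you attribute the non-random summand $-E_p[-(1-\Delta)\{S_1-S_2\}\{S_0\}^{-2}\eta_{0p}(X,\Delta;X,\Z)]$ to the ``diagonal'' of the double sum, i.e.\ the interaction of observation $i$ with its own contribution to $\wh P_T$. The diagonal contributes only $O_p(n^{-1})$ and is negligible. The constant in fact arises from the centering built into the mean-zero representation: $\int\varphi\,d(\wh P_T-P_T)=n_1^{-1}\sum_k\{\eta_{0p}(X_k,\Delta_k;\varphi)-E_p\eta_{0p}(\cdot;\varphi)\}+o_p(n_1^{-1/2})$, and it is the deterministic $-E_p\{\eta_{0p}\}$ piece, averaged against the outer observation, that survives as the stated constant (the analogous constant on the $\calQ$ side vanishes because the Gateaux derivative in the direction $\delta_{\z}-Q_\Z$ is automatically $Q_\Z$-centered, as you correctly note). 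Beyond these two points, your identification of the key technical obstacle---the uniform Kaplan--Meier representation over a class of integrands indexed by the empirical measure $\wh Q_\Z$, handled by replacing the VC-subgraph hypothesis with entropy bounds---matches the paper, though the verification of those entropy bounds for the classes $\mathcal{F}_{S_0},\mathcal{F}_{S_1},\mathcal{F}_{S_{21}},\mathcal{F}_{S_{22}}$ is where most of the actual work lies and is only deferred, not sketched, in your proposal.
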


\begin{Remark}
	In the result of Lemma~\ref{lm:3.3}, the first term is contributed by the score function $\psi(\cdot)$. The term involving $R_i$ corresponds to the estimation of $P_T(t)$ and thus relies solely on data from population $\calP$, whereas the terms involving $1 - R_i$ correspond to the estimation of $Q_\Z(\z)$ and rely solely on data from population $\calQ$.
\end{Remark}

The proofs of Lemma~\ref{lm:3.2} and Lemma~\ref{lm:3.3} are presented in Supplement~S3.2.6 and Supplement~S3.2.7 respectively. By combining Lemmas~3.2 and 3.3, we obtain the asymptotic normality of $\wh \bt$ in the following theorem.

\begin{Theorem}[Asymptotic Normality]
\label{Th:3.2}
 Under Conditions~A1--A6, B1--B3, presented in Supplement~{S3.2.2}, we have that
    $$\sqrt{n_0}(\wh \bt - \bt_0) \dovr N(\0,\bf \Sigma),$$
    as $n_0 \rightarrow \infty$, where 
\bse
\bf \Sigma &\equiv & E_p\{\tfrac{\partial}{\partial \bt}\psi(X,\Z,\Delta;\bt_0)\}^{-1}{\bf \Sigma}_{\psi}E_p\{\tfrac{\partial}{\partial \bt} \psi(X,\Z,\Delta;\bt_0)\}^{-\rm T},\\
\bf \Sigma_{\psi} &\equiv & \min(\tfrac{1}{\pi},\tfrac{1}{1-\pi})[(1-\pi)\var_p\{\psi(X,\Z,\Delta;\bt_0)+\psi_{p_T}(X,\Delta)\}+\pi \var_q\{\psi_{q_\Z}(\Z)\}],\\
\psi(x,\z,\delta;\bt_0) &\equiv & \delta \frac{\frac{\partial}{\partial \bt} q_{T \mid \Z}(x, \z ; \bt_0)}{q_{T \mid \Z}(x, \z ; \bt_0)} - \delta \frac{q_T^*(x)}{q_T(x)} + 
(1-\delta) \frac{S_1(x,\z,q_T)-S_2(x,\z,q_T,q^{*}_T)}{S_0(x,\z,q_T)},\\ \psi_{p_T}(x,\delta)&\equiv& 
\left. -E_p \left[ (1-\Delta) \frac{S_{1}(X,\Z, q_T)-S_{2}(X,\Z, q_T,q_T^*)}{\{S_0(X,\Z,q_T)\}^2}\eta_{0p}(x,\delta;X,\Z)\right| x,\delta  \right],\\
\psi_{q_\Z}(\z)&\equiv&  - E_p\left[ \left. \Delta \frac{\frac{\partial}{\partial \bt}  q_{T\mid\Z}(X,\z;\bt_0)q_T(X)- q_T^*(X) q_{T\mid\Z}(X,\z;\bt_0) }{\{q_T(X)\}^2}\right| \z \right]  \notag \\ &&  + E_p\left.\left\{(1-\Delta)\frac{\eta_{1q}(\z;X,\Z)-\eta_{2q}(\z,\X,\Z)}{S_0(X,\Z,q_T)} \right | \z \right\} \notag \\&&  - E_p\left.\left[ (1-\Delta)\frac{S_{1}(X,\Z, q_T)-S_{2}(X,\Z, q_T,q_T^*)}{\{S_0(X,\Z,q_T)\}^2}\eta_{0q}(\z;X,\Z) \right| \z \right],
\ese
and $q_T$, $q_T^*$ $S_i$, $\eta_{ip}$ and $\eta_{iq}$ for $i=0,1,2$
are defined in Supplement~S3.2.1 and Supplement~S3.2.4, formulas~(3)-(11) respectively.
\end{Theorem}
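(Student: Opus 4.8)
The plan is to prove Theorem~\ref{Th:3.2} through the standard Z-estimator argument, adapted to accommodate the nonparametrically estimated nuisances $\wh P_T$ and $\wh Q_\Z$. Since $\wh\bt$ solves $\Psi_n(\wh\bt)=0$ and $\wh\bt\povr\bt_0$ by Theorem~\ref{Th:3.1}, I would first obtain a Taylor-type expansion
\bse
0 = \Psi_n(\wh\bt) = \Psi_n(\bt_0) + \left\{\tfrac{\partial}{\partial\bt}\Psi_n(\bt^*)\right\}(\wh\bt-\bt_0),
\ese
with $\bt^*$ on the segment between $\wh\bt$ and $\bt_0$. The derivative term is handled by a uniform law of large numbers: one shows $\tfrac{\partial}{\partial\bt}\Psi_n(\bt^*)\povr E_p\{\tfrac{\partial}{\partial\bt}\psi(X,\Z,\Delta;\bt_0)\}$, which follows from consistency of $\bt^*$, consistency of $\wh P_T$ (Kaplan--Meier) and $\wh Q_\Z$ (empirical cdf), continuity in $\bt$, and the Glivenko--Cantelli-type conditions among A1--A6, B1--B3. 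The substantive content is therefore the asymptotic linearization of $\sqrt{n_0}\,\Psi_n(\bt_0)$, which is exactly what Lemmas~\ref{lm:3.2} and \ref{lm:3.3} deliver.

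Concretely, I would combine the decomposition $\tfrac{r}{\pi}\psi_n = \psi_1 - \psi_{2n} + \psi_{3n}$ with Lemmas~\ref{lm:3.2} and \ref{lm:3.3} to rewrite
\bse
n^{-1}\sum_{i=1}^n \tfrac{R_i}{\pi}\psi_n(X_i,\Z_i,\Delta_i;\bt_0)
= n^{-1}\sum_{i=1}^n \left\{\tfrac{R_i}{\pi}\psi(X_i,\Z_i,\Delta_i;\bt_0) + \tfrac{R_i}{\pi}\psi_{p_T}(X_i,\Delta_i) + \tfrac{1-R_i}{1-\pi}\psi_{q_\Z}(\Z_i)\right\} + o_p(n_0^{-1/2}),
\ese
after checking that the centering constants ($E_p$ of the bracketed terms) cancel and that the remainder term $-E_p[\cdots\eta_{0p}\cdots]$ appearing in Lemma~\ref{lm:3.3} is exactly absorbed into the $R_i/\pi$ summand via the identity $E_p\{\tfrac{R_i}{\pi}\cdot(\text{const})\} = \text{const}$; here I use $E\{\tfrac{R}{\pi}\}=1$ and $E\{\tfrac{1-R}{1-\pi}\}=1$ asymptotically (with $\pi_n\to\pi$). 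Because $\psi(X,\Z,\Delta;\bt_0)+\psi_{p_T}(X,\Delta)$ depends only on population-$\calP$ observations and $\psi_{q_\Z}(\Z)$ only on population-$\calQ$ observations, and the two subsamples are independent, the two groups of terms are uncorrelated. Applying the (Lindeberg--L\'evy) central limit theorem separately to the $n_1$ source terms and the $n_2$ target terms, and using $n_0/n_1\to\min(1,(1-\pi)/\pi)$, $n_0/n_2\to\min(1,\pi/(1-\pi))$, which rescale the two variances to the common factor $\min(\tfrac1\pi,\tfrac1{1-\pi})$, yields
\bse
\sqrt{n_0}\,\Psi_n(\bt_0) \dovr N(\0,\bf\Sigma_\psi)
\ese
with $\bf\Sigma_\psi$ as stated. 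Slutsky's theorem then gives $\sqrt{n_0}(\wh\bt-\bt_0)\dovr N(\0,\bf\Sigma)$ with the sandwich form.

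The main obstacle is Lemma~\ref{lm:3.3} — more precisely, ensuring its hypotheses apply at $\bt_0$ and that its conclusion glues correctly to the rest. The difficulty it encapsulates is that $\psi_{3n}$ embeds function classes indexed by the multivariate empirical measure $\wh Q_\Z$ inside a Kaplan--Meier integral against $\wh P_T$, so one cannot invoke a one-dimensional delta method or a naive Donsker argument; the linearization must simultaneously (i) expand the Kaplan--Meier integral operator around $P_T$ using the i.i.d.\ representation of Kaplan--Meier integrals (the $\eta_{0p}$ influence term), and (ii) expand the inner $\int(\cdot)\,d\wh Q_\Z$ around $Q_\Z$ (the $\eta_{0q}$, $\eta_{1q}$, $\eta_{2q}$ terms), with the two expansions interacting. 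The key technical input is the uniform-in-function-class control of Kaplan--Meier integrals, for which the paper replaces the VC-subgraph hypothesis of \cite{sellero2005uniform} by the bracketing/covering-number conditions of Lemma~\ref{lm:c1}; verifying those entropy conditions for the class $\{q_{T\mid\Z}(\cdot,\z;\bt): \z, \bt \text{ near } \bt_0\}$ under Conditions B1--B3 is where the real work sits. Once Lemmas~\ref{lm:3.2}--\ref{lm:3.3} are granted, as they may be here, the remaining steps — the Taylor expansion, the bookkeeping of cancelling centerings, the independence-based variance decomposition, and the two-sample CLT with the $n_0$ rescaling — are routine, and I would carry them out in that order.
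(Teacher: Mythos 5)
Your proposal follows essentially the same route as the paper: a Taylor expansion of the estimating equation, substitution of the linearizations from Lemmas~\ref{lm:3.2} and~\ref{lm:3.3} into $\Psi_n(\bt_0)$, collection of the influence terms into $\psi+\psi_{p_T}$ (population $\calP$) and $\psi_{q_\Z}$ (population $\calQ$), a CLT with the $\min(\pi,1-\pi)$ rescaling, and the sandwich form via Slutsky; the paper's only cosmetic difference is that it computes $\bf\Sigma_\psi$ by conditioning on $R$ in a single stacked i.i.d.\ sample rather than by running two separate CLTs, which is equivalent since $R(1-R)=0$ makes the two groups uncorrelated. Your identification of Lemma~\ref{lm:3.3} (the nested $\wh Q_\Z$-inside-Kaplan--Meier linearization) as the real content, with the remaining bookkeeping routine, matches the structure of the paper's argument.
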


\begin{Remark}
The expression of $\bf \Sigma_{\psi}$ involves three components: $\psi$,
$\psi_{p_T}$ and $\psi_{q_\Z}$, where $\psi$ is the score function, $\psi_{p_T}$ is related to the
estimation of $P_T(\cdot)$, and $\psi_{q_\Z}$
arises from the estimation of $Q_\Z(\z)$. Hence, if the
true functions $Q_\Z(\z)$ and $P_T(t)$ are known, the terms $\psi_{p_T}$
and $\psi_{q_\Z}$ would vanish, leaving only the contribution from $\psi$. In that case, ${\bf\Sigma}=E_p\{\tfrac{\partial}{\partial \bt} \psi(X,\Z,\Delta;\bt_0)\}^{-1}$. Furthermore, since we allow the sample sizes from populations $\calP$ and $\calQ$ to grow at different rates, the limit proportion $\pi$ can take any value in $[0,1]$, including the boundary cases $\pi = 0$ or $\pi = 1$. Accordingly, we summarize the expression of $\bf \Sigma_{\psi}$ for specific values of $\pi$ below:
	$$
	{\bf \Sigma_{\psi}}=
	\begin{cases}
	\var_p\{\psi(X,\Z,\Delta;\bt_0)+\psi_{p_T}(X,\Delta)\} & \text{if } \pi = 0, \\
		\var_p\{\psi(X,\Z,\Delta;\bt_0)+\psi_{p_T}(X,\Delta)\} + \frac{\pi}{1-\pi} \var_q\{\psi_{q_\Z}(\Z)\} & \text{if } 0 < \pi < \frac{1}{2}, \\
		\var_p\{\psi(X,\Z,\Delta;\bt_0)+\psi_{p_T}(X,\Delta)\} + \var_q\{\psi_{q_\Z}(\Z)\} & \text{if } \pi = \frac{1}{2}, \\
		\frac{1-\pi}{\pi} \var_p\{\psi(X,\Z,\Delta;\bt_0)+\psi_{p_T}(X,\Delta)\} + \var_q\{\psi_{q_\Z}(\Z)\} & \text{if } \frac{1}{2} < \pi < 1, \\
		\var_q\{\psi_{q_\Z}(\Z)\} & \text{if } \pi = 1.
	\end{cases}
	$$
\end{Remark}

The proofs of Lemmas~\ref{lm:3.2} and~\ref{lm:3.3}, as well as Theorems~\ref{Th:3.1} and \ref{Th:3.2} are lengthy and are technically challenging and innovative. A concise summary of the proof is presented in the Appendix B, while the complete technical details can be found in Supplement~S3.2. The
technical conditions are also listed in Supplement~S3.2 and can be
classified into three main groups. The first key assumption (Condition
A1) is that the support of $q_T(t)$ in the target population $\calQ$
is contained within the support of $p_T(t)$ in the source population
$\calP$. This assumption is common in the label shift
literature~\citep{lee2024doubly,tian2023elsa}, since leveraging
information from the source requires that the source distribution
adequately covers the target distribution.
The second group of assumptions involves the smoothness and boundedness
of $q_{T\mid\Z}(t,\z;\bt)$, $\frac{\partial}{\partial
  \bt}q_{T\mid\Z}(t,\z;\bt_0)$, and related functions (Conditions
A2--A5, B1--B2, and B4), which are mild. The third group consists
  of assumptions related to the Kaplan-Meier integrals (Conditions A6
  and B3). These conditions are established based on standard
  assumptions introduced by \cite{sellero2005uniform}. However,
  \cite{sellero2005uniform} assumes the class of functions within the
  Kaplan-Meier integrals to be VC-subgraph. We adapt these
  conditions further by imposing constraints on the covering and
  bracketing numbers defined in empirical process theory, facilitating
  easier verification. More details can be found in  Lemma~\ref{lm:c1} in Appendix C. 
Additionally, we assume that the support of $T$ is compact. Although
this assumption may appear restrictive in the context of survival
analysis, it primarily serves for technical purposes. Our simulations
demonstrate that our method performs well even when $T$ is unbounded,
suggesting this assumption may not be strictly necessary in practice.

In Theorem~\ref{Th:3.2}, we prove that our estimator is
$\sqrt{n_0}$-consistent, where $n_0 \equiv \min \left(n_1,
  n_2\right)$. Thus, the convergence rate is determined by the smaller sample
size of $n_1$ and $n_2$. Intuitively, this is because the estimator is constructed 
using data from both populations. While a larger dataset provides more
information, it cannot improve the convergence rate if the other
dataset remains limited, as both contribute for estimation. This
result is consistent with \cite{lee2024doubly}, which allows $n_2$ to
grow faster than $n_1$ and proves that the estimator remains
$\sqrt{n_1}$-consistent.

One of the main contributions of this paper is the development of a
theoretical framework for the asymptotic properties of survival
analysis under label shift. We provide an explicit formula for the
asymptotic variance, which allows direct and easy estimation without
relying on time-consuming bootstrap procedures. The corresponding
estimation procedures are 
provided in Supplement~S3.2.10. 

The estimation of the unknown parameter $\bt$ also enables the
estimation of functionals of the conditional survival function in the
target population $\calQ$ as a by-product. We use $\zeta(\z) \equiv
E_q\{g(T)\mid \z\}$ to represent a functional of the conditional
survival function, where $g(t)$ is a given function of $t$. For
example, when $g(t)=t$, this corresponds to the conditional mean
survival time in population $\calQ$. The estimator of $\zeta(\z)$ is
obtained by plugging in the estimator $\wh \bt$, that is, $\wh
\zeta(\z)=\int g(t) q_{T\mid \Z}(t,\z;\wh \bt)dt$. The following
Corollary~\ref{col:3.1} establishes the asymptotic properties of $\wh
\zeta(\z)$. 

\begin{Corollary}
\label{col:3.1}
Under Conditions $A1-A6$ and $B1-B3$, for any given $\Z=\z$, as $n_0\rightarrow \infty$, we have $\wh \zeta(\z) \povr \zeta(\z)$ and $\sqrt{n_0}\{\wh \zeta(\z) -  \zeta(\z)\} \dovr N(0,\bf \Sigma_{\zeta(\z)}),$
where
$\Gamma_{\zeta(\z)} \equiv \tfrac{\partial}{\partial \bt}\int g(t) q_{T\mid \Z}(t,\z; \bt_0)dt$, and
${\bf \Sigma}_{\zeta(\z)} \equiv \Gamma_{\zeta(\z)} \Sigma \Gamma_{\zeta(\z)}\trans.$

\end{Corollary}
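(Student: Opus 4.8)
The plan is to establish Corollary~\ref{col:3.1} as a direct consequence of Theorem~\ref{Th:3.2} via the delta method, treating $\wh\zeta(\z)=\int g(t)q_{T\mid\Z}(t,\z;\wh\bt)dt$ as a smooth function of $\wh\bt$ evaluated at a fixed covariate value $\z$. First I would define the map $\phi_\z(\bt)\equiv\int g(t)q_{T\mid\Z}(t,\z;\bt)dt$, so that $\wh\zeta(\z)=\phi_\z(\wh\bt)$ and $\zeta(\z)=\phi_\z(\bt_0)$. Consistency $\wh\zeta(\z)\povr\zeta(\z)$ then follows from the continuous mapping theorem applied to $\wh\bt\povr\bt_0$ (Theorem~\ref{Th:3.1}), provided $\phi_\z$ is continuous at $\bt_0$; continuity is guaranteed by the smoothness and dominated-convergence-type boundedness conditions already imposed on $q_{T\mid\Z}(t,\z;\bt)$ in Conditions A2--A5 over the compact support of $T$, which permit interchanging limit and integral.

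Next, for the asymptotic normality, I would verify that $\phi_\z$ is differentiable at $\bt_0$ with gradient
\bse
\Gamma_{\zeta(\z)}=\frac{\partial}{\partial\bt}\int g(t)q_{T\mid\Z}(t,\z;\bt_0)dt=\int g(t)\frac{\partial}{\partial\bt}q_{T\mid\Z}(t,\z;\bt_0)dt,
\ese
where passing the derivative under the integral sign is again justified by the smoothness/boundedness conditions on $q_{T\mid\Z}$ and $\tfrac{\partial}{\partial\bt}q_{T\mid\Z}$ together with compactness of the support of $T$. Applying the multivariate delta method to $\sqrt{n_0}(\wh\bt-\bt_0)\dovr N(\0,\bf\Sigma)$ from Theorem~\ref{Th:3.2} then yields
\bse
\sqrt{n_0}\{\wh\zeta(\z)-\zeta(\z)\}=\Gamma_{\zeta(\z)}\sqrt{n_0}(\wh\bt-\bt_0)+o_p(1)\dovr N(0,\Gamma_{\zeta(\z)}\bf\Sigma\,\Gamma_{\zeta(\z)}\trans),
\ese
which is exactly ${\bf\Sigma}_{\zeta(\z)}=\Gamma_{\zeta(\z)}\Sigma\Gamma_{\zeta(\z)}\trans$ as claimed.

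Since $\z$ is fixed and $\Gamma_{\zeta(\z)}$ is a deterministic $1\times d$ vector, there are no uniformity issues and the argument is essentially a one-line application of the delta method once the regularity of $\phi_\z$ is checked. The only real obstacle is the bookkeeping to confirm that the conditions invoked for Theorems~\ref{Th:3.1} and~\ref{Th:3.2} already subsume the differentiation-under-the-integral requirements for $\phi_\z$ at the point $\z$; this amounts to exhibiting an integrable (in $t$, over the compact support of $T$) envelope for $g(t)\tfrac{\partial}{\partial\bt}q_{T\mid\Z}(t,\z;\bt)$ in a neighborhood of $\bt_0$, which follows from Condition A2 type boundedness of the score and the assumed regularity of $g$. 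I do not anticipate needing empirical process machinery here at all, in contrast to the proofs of the preceding lemmas; the heavy lifting has already been done in Theorem~\ref{Th:3.2}, and this corollary simply propagates that result through a smooth finite-dimensional transformation.
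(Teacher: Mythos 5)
Your proposal is correct and follows essentially the same route as the paper: consistency via the continuous mapping theorem applied to $\wh\bt\povr\bt_0$, and asymptotic normality via the delta method applied to Theorem~\ref{Th:3.2}, with $\Gamma_{\zeta(\z)}$ as the gradient of the smooth map $\bt\mapsto\int g(t)q_{T\mid\Z}(t,\z;\bt)dt$. Your additional remarks on differentiation under the integral sign only make explicit the regularity bookkeeping that the paper leaves implicit.
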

The proof of Corollary~\ref{col:3.1} is presented in Supplement~S3.2.9.

\section{Simulations}

First, we describe the general data generation process used in our
simulations. We begin by specifying four key components of the
setting: $q_{T\mid\Z}(t,\z;\bt)$, $p_T(t)$, $p_C(c)$, and
$q_\Z(\z)$. These fully determine the conditional distribution
$f_{\Z\mid T}(\z,t)$.  
In population $\calP$, we generate random values for $T$ and $C$
independently based on $p_T(t)$ and $p_C(c)$.  Next, we generate the
covariate $\Z$ in $\calP$ from $f_{\Z\mid T}(\z,t)$.
  This completes the data generation for population $\calP$.
For population $\calQ$, the covariate $\Z$ are drawn directly from
the specified marginal distribution $q_\Z(\z)$. With these steps, we
generate the full dataset for both populations.
  
Specifically, in all the simulations, in population $\calP$, we set $p_T(t)$ to be an
$\text{Exp}(1)$ density and $p_C(c)$ to be an
$\text{Exp}(2.5)$ density. In population $\calQ$, we consider two cases for
$q_\Z(\z)$. The first case is a bivariate normal distribution with mean vector $(0,1)\trans$ and identity covariance matrix $\I$. In the second
case, we further added two independent covariates, 
one has an exponential distribution $\text{Exp}(1)$ and the other has a binomial distribution $B(1,0.5)$.
We considered four survival models for $q_{T\mid\Z}(t,\z)$: the
proportional hazards model, the accelerated failure time model, the
proportional odds model, and the accelerated hazards
model. Specifically, 
\begin{enumerate}[label=(\alph*)]
\item Proportional Hazards Model: 
  $$q_{T \mid \Z}(t,\z;\bt) \equiv h(t;\gamma,\lambda)\exp(\z\trans\bb)\exp\left\{-H(t;\gamma,\lambda) \exp(\z\trans\bb)\right\},$$
  where $\bt\equiv (\bb\trans,\gamma,\lambda)\trans$, and $h(t;\gamma,\lambda)$ is the Weibull baseline hazard function, specified as
  $h(t;\gamma,\lambda) = \lambda \gamma t^{\gamma-1}$. The
  results under this model are presented in
  Table~\ref{Tab1} and Supplementary Table~S4.

\item Proportional Odds Model:
  $$q_{T \mid \Z}(t,\z;\bt) \equiv \frac{h(t;\mu,\sigma)\exp(\z\trans
    \bb)}{\{1+H(t;\mu,\sigma)\exp(\z\trans \bb)\}^2},$$ 
where $\bt\equiv (\bb\trans,\mu,\sigma)\trans$,  $h(t;\mu,\sigma) \equiv \exp(-\mu/\sigma)/\sigma
t^{1/\sigma-1} $ and $H(t;\mu,\sigma)\equiv \exp(-\mu/\sigma)
t^{1/\sigma}$. The results are shown in Supplementary Table~S1. 

\item Accelerated Failure Time Model: $$
    q_{T \mid \Z}(t,\z;\bt) = \frac{1}{\sigma t}f_\varepsilon
    \left(\frac{\log t - \mu - \z\trans \bb}{\sigma}\right),$$ where $\bt\equiv (\bb\trans,\mu,\sigma)\trans$, and
    $f_\varepsilon$ is the density of a standard normal
    distribution. The results are presented in Supplementary Table~S2.

\item Accelerated Hazards Model: $$
q_{T \mid \Z}(t,\z;\bt) \equiv h\{t \exp (\z\trans \bb);\lambda,\gamma\} \exp \left[-\int_0^t h\{s\exp (\z\trans \bb);\lambda,\gamma\} d s\right],
$$
where $\bt\equiv (\bb\trans,\lambda,\gamma)\trans$, and $h(t;\gamma,\lambda)=\gamma\lambda t^{\gamma-1}$. The results are reported in Supplementary Table~S3.
\end{enumerate}

In all the model settings, we generated 500 datasets and implemented our proposed
estimator. The simulation results, including empirical mean squared
error (MSE), empirical standard error (SE), mean of the estimated
standard error ($\widehat{\text{SE}}$), and the empirical
  coverage probability (CP) 
of the estimated 95\% confidence intervals, are provided in Table
\ref{Tab1} and the 
Supplementary Tables~S1--S4. We varied the sample
sizes $n_1$ and $n_2$ for data from populations $\calP$ and $\calQ$
and the values we chose are presented in the tables as well.

Overall, our estimator performs consistently well across all
  simulation scenarios, demonstrating low MSE, SE,
  $\widehat{\text{SE}}$, and bias. The empirical results support the
  consistency of our estimator, as MSE, SE, and bias decrease with
  increasing sample size. Moreover, the estimated standard errors, derived from the methods introduced in Supplement~S3.2.10,
  closely align with the empirical standard errors, especially
    for large sample sizes.
  Furthermore, in most cases, the
  empirical coverage probabilities of the confidence intervals are
  close to the nominal 95\% level, providing empirical
  support for our theoretical results. In the few cases where
    the coverage rate slightly deviates from the nominal value, these
  differences decrease as the sample size increases. Compared to the bivariate covariate case,
    the results under a four-dimensional covariate including a
    categorical component show slightly more deviation
    from 95\% in terms of empirical
    coverage rate, if everything else is held the same. This is not a
    surprise since the higher dimension increases the numerical
    complexity. Furthermore,
these deviations are still modest and tend to decrease when sample
sizes increase. When one of $n_1$ and $n_2$ is fixed, increasing the
other improves the performance. Moreover, increasing both $n_1$ and
$n_2$ simultaneously leads to a more substantial improvement in the
results.  

When comparing results across different survival models, our
  method consistently exhibits good performance, maintaining low
  MSE, bias, and standard errors. This suggests that our method is
  sufficiently flexible to accommodate various survival models. Among
  these, the proportional hazards model with a Weibull baseline hazard
  function achieves the best overall performance. 

\begin{table}[h]
\centering
\renewcommand\arraystretch{1.2}
\setlength{\tabcolsep}{12pt}
\caption{Simulation results for the proportional hazards model with Weibull baseline hazard function and with $\beta_1=1$, $\beta_2=1$, $\lambda=1$ and $\gamma=1.5$. }
\label{Tab1}
\begin{tabular}{cccccccc}
  \hline \hline
$n_1$ & $n_2$ & & MSE & Bias & SE & $\wh{\text{SE}}$ & CP \\ 
  \hline
250 & 500 & $\beta_1$ & 0.0105 & 0.0070 & 0.1025 & 0.1035 & 0.9517 \\ 
   &  & $\beta_2$ & 0.0094 & 0.0069 & 0.0971 & 0.1033 & 0.9618 \\ 
   &  & $\lambda$ & 0.0168 & 0.0263 & 0.1269 & 0.1288 & 0.9557 \\ 
   &  & $\gamma$ & 0.0184 & 0.0267 & 0.1330 & 0.1318 & 0.9356 \\ \hline
  500 & 250 & $\beta_1$ & 0.0060 & 0.0030 & 0.0772 & 0.0786 & 0.9504 \\ 
   &  & $\beta_2$ & 0.0061 & 0.0077 & 0.0780 & 0.0794 & 0.9566 \\ 
   &  & $\lambda$ & 0.0106 & 0.0192 & 0.1014 & 0.0991 & 0.9442 \\ 
   &  & $\gamma$ & 0.0111 & 0.0090 & 0.1050 & 0.1098 & 0.9731 \\ \hline
  500 & 500 & $\beta_1$ & 0.0047 & -0.0009 & 0.0689 & 0.0737 & 0.9558 \\ 
   &  & $\beta_2$ & 0.0055 & 0.0029 & 0.0740 & 0.0741 & 0.9418 \\ 
   &  & $\lambda$ & 0.0087 & 0.0176 & 0.0918 & 0.0928 & 0.9518 \\ 
   &  & $\gamma$ & 0.0109 & 0.0184 & 0.1027 & 0.0986 & 0.9478 \\ \hline
  500 & 750 & $\beta_1$ & 0.0050 & 0.0041 & 0.0705 & 0.0726 & 0.9560 \\ 
   &  & $\beta_2$ & 0.0052 & 0.0055 & 0.0720 & 0.0724 & 0.9500 \\ 
   &  & $\lambda$ & 0.0083 & 0.0169 & 0.0895 & 0.0897 & 0.9480 \\ 
   &  & $\gamma$ & 0.0102 & 0.0140 & 0.1000 & 0.0945 & 0.9340 \\ \hline
  750 & 500 & $\beta_1$ & 0.0039 & 0.0031 & 0.0626 & 0.0616 & 0.9478 \\ 
   &  & $\beta_2$ & 0.0040 & 0.0006 & 0.0635 & 0.0618 & 0.9398 \\ 
   &  & $\lambda$ & 0.0060 & 0.0187 & 0.0752 & 0.0777 & 0.9498 \\ 
   &  & $\gamma$ & 0.0073 & 0.0159 & 0.0841 & 0.0852 & 0.9618 \\ \hline
  500 & 1000 & $\beta_1$ & 0.0047 & 0.0007 & 0.0685 & 0.0717 & 0.9660 \\ 
   &  & $\beta_2$ & 0.0050 & 0.0034 & 0.0707 & 0.0717 & 0.9440 \\ 
   &  & $\lambda$ & 0.0082 & 0.0169 & 0.0890 & 0.0884 & 0.9560 \\ 
   &  & $\gamma$ & 0.0089 & 0.0164 & 0.0931 & 0.0921 & 0.9520 \\ \hline
  1000 & 500 & $\beta_1$ & 0.0033 & -0.0022 & 0.0571 & 0.0552 & 0.9336 \\ 
   &  & $\beta_2$ & 0.0027 & -0.0045 & 0.0518 & 0.0551 & 0.9577 \\ 
   &  & $\lambda$ & 0.0051 & 0.0101 & 0.0706 & 0.0691 & 0.9497 \\ 
   &  & $\gamma$ & 0.0060 & 0.0195 & 0.0750 & 0.0783 & 0.9618 \\ \hline
  1000 & 1000 & $\beta_1$ & 0.0026 & -0.0007 & 0.0506 & 0.0515 & 0.9540 \\ 
   &  & $\beta_2$ & 0.0028 & 0.0000 & 0.0527 & 0.0515 & 0.9520 \\ 
   &  & $\lambda$ & 0.0040 & 0.0127 & 0.0621 & 0.0641 & 0.9520 \\ 
   &  & $\gamma$ & 0.0049 & 0.0102 & 0.0694 & 0.0690 & 0.9620 \\ 
   \hline
\hline
\end{tabular}     
\begin{tablenotes}
    \item $n_1$: sample size in $\calP$; $n_2$: sample size in
      $\calQ$; MSE: empirical mean squared error; SE: empirical standard error; 
       $\wh{\text{SE}}$: mean of estimated standard errors; CP:
      empirical coverage probability of 95\% confidence intervals.
\end{tablenotes}
\end{table}

\section{Data Application}

We now apply our method to a
  liver transplantation dataset obtained from the Organ Procurement
  and Transplantation Network (http://optn.transplant.hrsa.gov). We include adult white patients
  who underwent only one transplantation procedure between 2010 and
  2013 in our analysis. We set the event time $T$ to be the
  lifespan of the transplanted 
  liver, measured from the transplantation date until organ failure or
  patient death. The covariate $\Z$ considered in the analysis
  include the MELD score recorded when the patient entered the organ
  waiting list~(INIT\_MELD), the medical condition~(MED) prior to
  transplantation (categorized as not hospitalized, hospitalized, or
  ICU), and the length of hospital stay following
  transplantation~(LOS). Moreover, the original dataset contains two
  variables related to age: the age of the donor and the age of the
  recipient. To better reflect the medical importance of the ages, we
  transform these variables
  into two new features: the sum of their ages (SUM\_AGE) and the age
  difference (DIFF\_AGE), where ``DIFF\_AGE'' is defined as the donor's
  age minus the recipient's age. This transformation allows us to
  better understand how both the ages of donors and recipients and
  their age difference influence survival time.

We define population $\calP$ as patients whose primary payment
  method is private insurance ($R=1$, $n_1=4509$), while population
  $\calQ$ consists of patients whose primary payment method was
  Medicaid or Medicare ($R=0$, $n_2=3131$). Event times in both
  populations are subject to right censoring due to patient dropout or
  loss of contact, with a censoring rate of 32.6\% in population
  $\calP$. 
In the original dataset, the response in population $\calQ$ is
  available. Thus, before applying our method, we conduct a
  preliminary analysis using the complete dataset. 
  
  We plot the
  survival curves based on the Kaplan-Meier estimator for populations $\calP$ and $\calQ$, as shown in Figure~\ref{fig:liver}. The separation of the survival
  curves suggests the possibility of label shift in the dataset. Specifically, patients covered by Medicaid or Medicare exhibit shorter survival times for their transplanted organs. This observation further justifies our assumption that the support of $q_T(t)$ in the target population $\calQ$ is included in that of $p_T(t)$ in population $\calP$.
To formally assess the label shift assumption $q_{\Z\mid T}(\z,t)
=p_{\Z\mid T}(\z,t)$, we perform a hypothesis test. However, because
the event times in both populations are censored, we 
cannot directly convert the problem into a test of conditional independence between the variable $R$ (indicating population membership) and the vector of covariate $\Z$ given $T$, and we are unable to
use existing methods. Hence, we construct a new bootstrap based method to test the label shift assumption.
Details of this test statistic are provided in Supplement~S2. The test yields a $p$-value of 0.1040, indicating that we do not reject the null hypothesis, supporting the validity of the label shift assumption in this
dataset.
\begin{figure}[h]
    \centering
    \includegraphics[width=14cm]{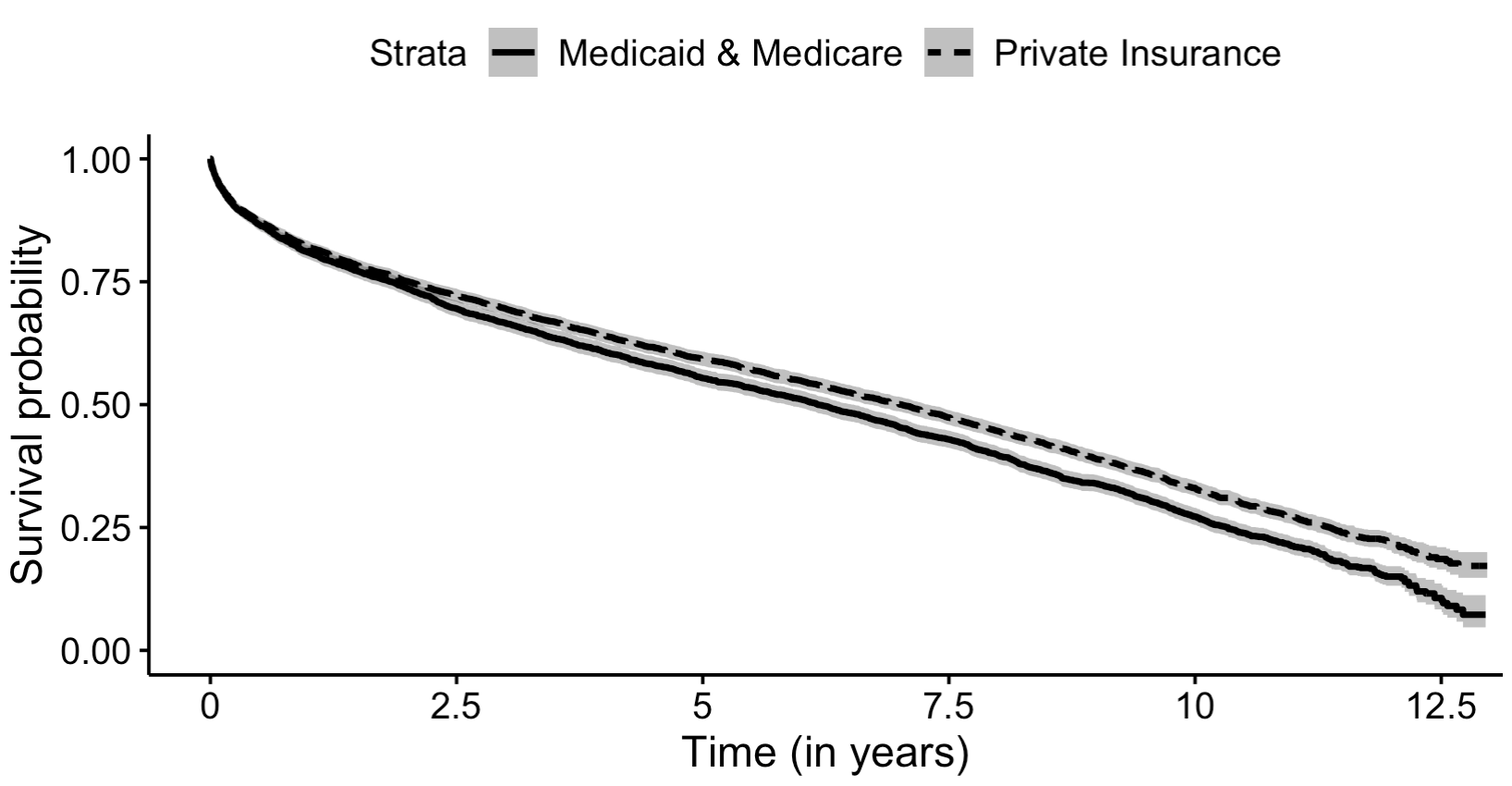}
    \caption{Estimated survival curves for transplanted livers stratified by patients' insurance types.}
    \label{fig:liver}
\end{figure}

We then mask the observed survival times from population
  $\calQ$. To identify a suitable model for
  $q_{T\mid\Z}(t,\z;\bt)$, we fit several candidate models and use the
  Bayesian Information Criterion (BIC) to select the best one. In the implementation of BIC, although our likelihood function involves two nonparametric components, our goal is only to
  select a suitable $q_{T\mid\Z}(t,\z;\bt)$, which is fully
  parametric. In addition, the nonparametric components
  $p_T(\cdot)$ and $q_\Z(\cdot)$ enter the likelihood in the
  same pattern under different $q_{T\mid\Z}(t,\z;\bt)$ models,
hence the model  complexity or flexibility contributed by $p_T(\cdot)$ and
$q_\Z(\cdot)$ remains the same. 
Therefore, we use the relative values of
$-2\ell_n(\bt)+\log(n)d_{\bt}$ under different $q_{T\mid\Z}(t,\z;\bt)$
models 
to perform model selection, where
$\ell_n(\bt)$ is given by (\ref{eq:llh}) and $d_{\bt}$ denotes the
dimension of $\bt$. Specifically, 
we randomly split the data from populations $\calP$ and $\calQ$
separately, use 20\% of the data to compute the criteria and select
the smallest value as our chosen model, and use the remaining 80\% for
estimation and inference.  
The models we considered and their corresponding criterion values are
presented in Table~\ref{BIC}. The accelerated failure time model with an exponential baseline hazard function yields the
lowest criterion value. Therefore, we perform further analysis
under this model. The results are
presented in Table~\ref{RealData}. In the table, the rows labeled
``Shift'' display the estimates obtained from our method considering
the label shift, along with their estimated standard errors and
confidence intervals. The row labeled ``$\calP$'' contains estimates
derived from fitting the proportional hazards model with a Weibull
baseline hazard function only to the complete source population data. This implies that we assume the same survival models apply to
  both $\calP$ and $\calQ$ and we ignore the possibility of a label shift. The row labeled ``$\calQ$ (Oracle)'' presents results from the complete
target population data. Specifically, we fit the selected parametric model to the full $\calQ$ dataset, including censored outcomes. These results are used as an oracle benchmark for evaluation and represent an ideal reference that is not attainable in real applications. Additionally, we report the differences between the
estimates and the oracle. We also present in Figure~\ref{fig:CI} the
estimates and confidence intervals for each parameter with different
methods, where the estimates are marked by a star.

\begin{table}[h]
\centering
\caption{Model comparison results for the transplanted liver dataset}
\label{BIC}
\renewcommand\arraystretch{1.3}
\begin{tabular}{lrr}
\hline
Model &  $d_\bt$ & BIC \\
\hline
Proportional hazards model with a Weibull baseline hazard function & 8 & 12457 \\ 
Accelerated failure time model with a Log-Normal baseline hazard function & 8 & 12729 \\ 
Proportional odds model with a Log-Logistic baseline survival function & 8 & 12577 \\ 
Accelerated failure time model with an Exponential baseline hazard function & 7 & 12433 \\ 
Accelerated hazards model with a Weibull baseline hazard function & 8 & 13241 \\
\hline
\end{tabular}
\end{table}

The estimates obtained using our method closely align with those
derived from the complete $\calQ$ data. Compared to results based
solely on the $\calP$ data, our method significantly reduces the
discrepancy from the oracle. However, when examining the estimated
standard errors, our method yields higher standard errors than both
the oracle and the result based on the $\calP$ data, which is expected due
to the absence of observed responses in population $\calQ$, leading to
greater variation in estimation.

In Figure~\ref{fig:CI}, we observe that the confidence intervals from
our method either encompass or highly overlap with those from the
oracle results, demonstrating its effectiveness. Our method also
outperforms the estimates solely based on the $\calP$ data. Notably, for
the variable ``DIFF\_AGE'', the oracle results indicate
non-significance, as its confidence interval includes zero. Our method
correctly captures this, with its confidence interval also containing
zero. In contrast, the results from $\calP$ suggest the variable
``DIFF\_AGE'' is significant, potentially leading to a misleading
conclusion. 

Moreover, for the variable ``INIT\_MELD", the oracle results suggest
that the coefficient is significantly greater than zero, whereas the
results based solely on the $\calP$ data suggest non-significance. Our
method correctly identifies the significance, consistent with the
oracle results, as the estimated confidence interval from ``Shift'' does
not include zero. These findings highlight the risk of ignoring label
shift, as 
directly applying inference results from the source population $\calP$
to the target population $\calQ$ without accounting for distribution
shift can lead to incorrect conclusions. 

Notably, the estimator for the parameter $\lambda$  in the baseline
hazard function exhibits a high standard error, although the
confidence interval still encompasses that from the oracle result. One
possible explanation for this relatively large standard error is the
complete absence of response data in population $\calQ$, making it
challenging to accurately capture the underlying baseline hazard function using
only the label-shifted and label-censored source data.

\begin{table}[h]
\centering
\renewcommand\arraystretch{1.5}
\setlength{\tabcolsep}{8pt}
\caption{Results for the transplanted liver dataset}
\label{RealData}
   \begin{tabular}{l|lrrrr}
        \hline
        \hline
        Var & Type      & Est.     & $\wh{\text{SE}}$      & CI                  & Diff.   \\
        \hline
        $\beta_1$ (SUM\_AGE)  & $\calQ$ (Oracle)     & -0.1173 & 0.0248 & [-0.1659, -0.0687] & -      \\
                  & Shift & -0.1190 & 0.0406 & [-0.1986, -0.0394] & -0.0017 \\
                  & $\calP$     & -0.1369 & 0.0230 & [-0.1820, -0.0918] & -0.0196 \\ \hline
        $\beta_2$ (DIFF\_AGE) & $\calQ$ (Oracle)     &  0.0320 & 0.0240 & [-0.0150,  0.0790] & -      \\
                  & Shift &  0.0378 & 0.0382 & [-0.0371,  0.1127] & 0.0058 \\
                  & $\calP$     &  0.0609 & 0.0226 & [0.0166,  0.1052] & 0.0289 \\ \hline
        $\beta_3$ (INIT\_MELD) & $\calQ$ (Oracle)     &  0.0706 & 0.0245 & [0.0226,  0.1186] & -      \\
                   & Shift &  0.0873 & 0.0369 & [0.0150,  0.1596] & 0.0167 \\
                   & $\calP$     &  0.0435 & 0.0224 & [-0.0004,  0.0874] & -0.0271 \\ \hline
        $\beta_4$ (LOS)    & $\calQ$ (Oracle)     & -0.2277 & 0.0203 & [-0.2675, -0.1879] & -      \\
                  & Shift & -0.2107 & 0.0249 & [-0.2595, -0.1619] & 0.0170 \\
                  & $\calP$     & -0.2255 & 0.0175 & [-0.2598, -0.1912] & 0.0022 \\ \hline
        $\beta_5$ (MED: hospital no ICU) & $\calQ$ (Oracle)     &  0.1978 & 0.0758 & [ 0.0492,  0.3464] & -      \\
                  & Shift &  0.2284 & 0.0902 & [0.0516,  0.4052] & 0.0306 \\
                  & $\calP$     &  0.1867 & 0.0643 & [0.0607,  0.3127] & -0.0111 \\ \hline
        $\beta_6$ (MED: no hospital)   & $\calQ$ (Oracle)     &  0.2726 & 0.0735 & [0.1285,  0.4167] & -      \\
                  & Shift &  0.3183 & 0.0946 & [0.1329,  0.5037] & 0.0457 \\
                  & $\calP$     &  0.1999 & 0.0627 & [0.0770,  0.3228] & -0.0727 \\ \hline
        $\lambda$    & $\calQ$ (Oracle)     &  1.7996 & 0.0645 & [1.6732,  1.9260] & -      \\
                  & Shift &  1.7120 & 0.1086 & [1.4992,  1.9249] & -0.0876 \\
                  & $\calP$     &  2.0020 & 0.0541 & [1.8960,  2.1080] & 0.2024 \\
        \hline
        \hline
    \end{tabular}
    \begin{tablenotes}
\small
    \item Est.: estimates; $\wh{\text{SE}}$: estimated standard error; CI: 95\% confidence interval; Diff: the difference with oracle
\end{tablenotes}
\end{table}

\begin{figure}[h!]
    \centering
    \includegraphics[width=14cm]{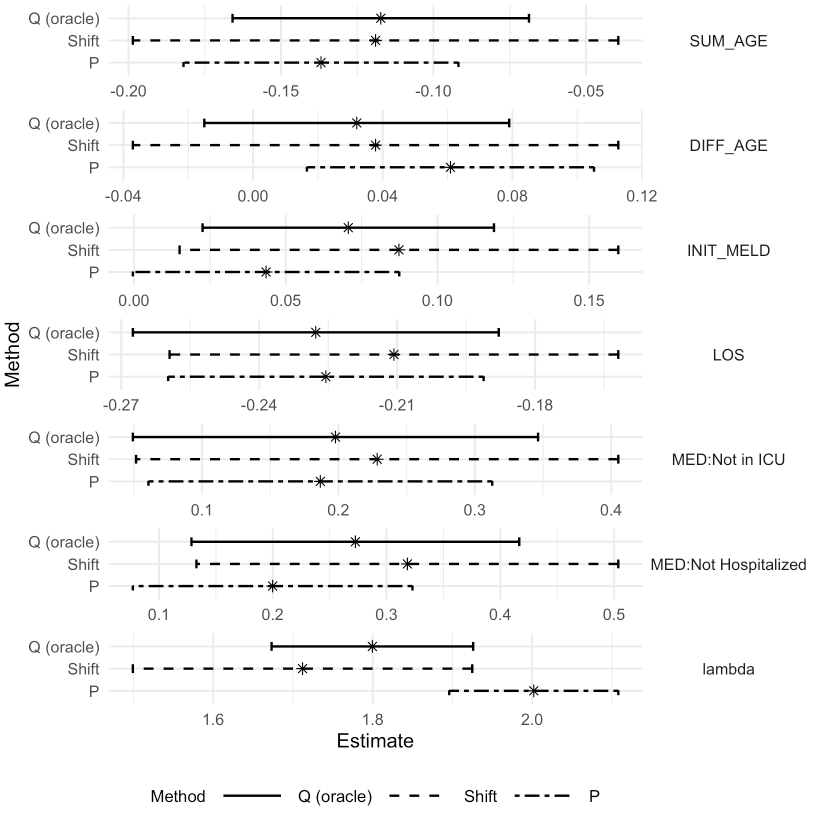}
    \caption{Confidence intervals for the variables in the transplanted liver dataset using various estimation methods. Estimated coefficients are indicated by stars.}
    \label{fig:CI}
\end{figure}

\section{Discussion}

In this paper, we propose a domain adaptation method for survival
analysis under the label shift assumption, which, to the best of our
knowledge, is the first consideration and first solution in this
setting. Our approach
estimates the covariate effects in the target population, which
  is the main population of research interest, but in which no
  event times are observed.  We use data from a source population in
  which censored survival times are observed, and which has a label
  shift relation to the target population, i.e., it shares a common
  covariate distribution given the survival time, to help achieve the analysis in the target population.
  We establish
the asymptotic consistency and normality of the proposed estimator and
validate its performance through extensive simulation studies and a
real data application. Because of the difficult nature of the problem,
caused by the complete absence of response information in the target
population and the weak assumption of a common but unspecified
covariate conditional distribution, somewhat stringent assumptions are
required from the perspective of classical survival
analysis.  Nonetheless, the proposed framework may serve as a
foundation for further methodological developments in such
settings. Currently, we are working on extending the parametric model
applied to the target population to a semiparametric model and on
developing weaker independence assumptions. While we assume that the event time and censoring time are independent, we plan to extend this to allow for conditional independence given covariates. Similarly, incorporating more flexible survival models, such as competing risks and dependent censoring, is also of interest. Additionally, our current
work focuses on scenarios involving only a 
single source population. Sometimes,  multiple source populations are 
available, presenting an interesting direction
for further exploration. Furthermore, while our method assumes that the
support of event times in the target population is included in
that of the source population, real-world applications often encounter
observations outside this support. Developing methods capable of 
handling these out-of-support observations would be a valuable
advancement, although we expect that stronger assumptions will be
needed to enable such an expansion.  In conclusion, we hope this work
will serve as a useful starting point for future research on survival
analysis under label shift and help draw greater attention to this
underexplored but important setting. 

\begin{appendix}
\section{Proof of Lemma 3.1}

\begin{proof}
 We show that when the ratio $q_{T\mid \Z}(t,\z;\bt)/q_{T\mid
  \Z}(t,\z;\wt \bt)$ depends on $\z$ whenever $\bt \neq \wt \bt$, then
all the unknown components in (2) can be
identified. It
suffices to establish the identifiability of $f_{\Z\mid T}(t,\z)$,
$p_T(t)$, $p_C(c)$, and $q_T(t)$. Given that  we have i.i.d. observations
      $(X_i, \Delta_i, \Z_i)$ from
    $\calP$, due to the independence
    between $(T,\Z)$ and $C$, $p_C(c)$,
    $p_{T\mid \Z}(t,\z)$ and $p_\Z(\z)$ are all
identifiable. Hence, $f_{\Z \mid T}(\z,t)$ and $p_T(t)$ can be identified. In population $\calQ$, we
observe a random sample of $\Z$, enabling the identifiability of
$q_\Z(\z)$. Suppose we have two different $\bt$ and $\wt \bt$ such that $$f_{\Z \mid T}(\z,t)=\frac{q_{T\mid \Z}(t,\z;\bt) q_{\Z}(\z)}{\int q_{T\mid \Z}(t,\z;\bt) q_{\Z}(\z) d\z} = \frac{q_{T\mid \Z}(t,\z;\wt \bt) q_{\Z}(\z)}{\int q_{T\mid \Z}(t,\z;\wt \bt) q_{\Z}(\z) d\z},$$
which further implies that the ratio $q_{T\mid \Z}(t,\z;\bt)/q_{T\mid
  \Z}(t,\z;\wt \bt)$ depends only on $t$, which contradicts the
assumption.
Thus, $\bt$ is unique and hence $q_{T\mid\Z}(t,\z)$ is unique. This
  directly implies that $q_T(t)$ is identifiable.
\end{proof}

\section{Sketch of proofs of Lemmas~\ref{lm:3.2} and~\ref{lm:3.3} and Theorems~\ref{Th:3.1} and~\ref{Th:3.2}}

 A detailed proof is provided in Supplement~S3.2, and we summarize the key steps below.

For consistency, we first show in Lemma~{S3.2} that under Conditions A1--A5, $$\sup_{\bt \in \Theta}|\ell_n (\bt)-E\{\ell(X, \Z, \Delta, R ; \bt)\}| \povr 0.$$ This result is obtained by proving in Lemma~S3.1 that the relevant classes of functions involved in the approximated likelihood $\ell_n(\bt)$ are Glivenko-Cantelli classes. This leads to uniform convergence in probability and, consequently, the consistency of the estimator.

For asymptotic normality, the main technical challenge lies in the presence of the nonparametrically estimated components $\wh P_T(t)$ and $\wh Q_\Z(\z)$. We decompose the derived score function into three parts, which we call: $\psi_1, \psi_{2 n}$, and $\psi_{3 n}$. The first part $\psi_1$ is handled using the standard central limit theorem. 

In Lemma~\ref{lm:3.2}, we provide the simplified expression of $\psi_{2n}$, which involves $\wh Q_\Z(\z)$. We transform $\psi_{2n}$ into a U-statistic via a Taylor expansion and simplify by using properties of U-statistics. This simplification is necessary to apply the central limit theorem.

In Lemma~\ref{lm:3.3}, we simplify $\psi_{3n}$, which contains terms involving $\wh Q_\Z(\z)$ inside a Kaplan-Meier integral based on $\wh P_T(t)$. To facilitate this simplification, we propose  Lemma~\ref{lm:c1} in Appendix C, which establishes the uniform properties of the Kaplan-Meier integral.  Lemma~\ref{lm:c1} extends the conditions of \cite{sellero2005uniform} by imposing assumptions on the bracketing and covering numbers of the relevant function class, thereby making the verification process easier. In Lemmas S3.3--S3.5, we show that the function classes appearing in the Kaplan-Meier integrals in our case satisfy these conditions in  Lemma~\ref{lm:c1}, using tools from empirical process theory. Finally, in Lemma~\ref{lm:3.3}, we combine the results from Lemmas S3.3 to S3.5 to obtain a simplified representation of $\psi_{3n}$.

Finally, in Theorem~\ref{Th:3.2}, we plug these results back into the score function and establish asymptotic normality via the central limit theorem.

\section{Technical Lemma}

Let $P$ be the probability measure of $(X,\Delta,\Z,R)$  where $\Z$ is
$d_\z$-dimensional. Let
$\mathcal{S}_{\Z p}$ and $\mathcal{S}_{\Z q}$ denote the support 
of $p_\Z(\z)$ and $q_\Z(\z)$ respectively. Let $\Theta$ denote
the parameter space of $\bt$. Let  $\|\cdot\|_k$ denote
the vector $\ell_k$-norm. For $\varepsilon>0$, we use $\mathcal{N}_{[]}(\varepsilon,
\mathcal{F},\|\cdot\|)$ and $\mathcal{N}(\varepsilon,
\mathcal{F},\|\cdot\|)$ to denote the $\varepsilon$-bracketing number
and the $\varepsilon$-covering number of any metric space
$(\mathcal{F},\|\cdot\|)$ respectively~\citep{van1996weak}.

\begin{Lemma}

\label{lm:c1}
Under Conditions A1, if the family of functions $\{\varphi\}$ satisfies:
\begin{enumerate}[label=(\alph*)]
    \item the family of functions $\{\varphi\}$ has an integrable envelope function $\Phi(t)$, such that  \bse
&& E[\Phi(T)\{1-P_C(T)\}^{-2}\{1-P_T(T)\}^{-5}] < \infty, \\
&& E[\Phi^2(T)\{1-P_C(T)\}^{-2}\{1-P_T(T)\}^{-3}] < \infty,\\
&& E[\{1-P_C(T)\}^{-2}] < \infty;
\ese

\item either of the following holds:
\begin{enumerate}
    \item[(b1)] the family of functions $\{\varphi\}$ is a measurable VC-subgraph class of functions;

    \item[(b2)] the family of functions $\{\varphi\}$ satisfies 
$$
\mathcal{N}_{[]}\left(\varepsilon, \{\varphi\}, L_1(P)\right)<\infty, \quad \text { for every } \varepsilon>0,
$$
and has a measurable envelope function $\Phi(t)$ such that $$
\int_0^{\infty} \sup_{\wt P} \sqrt{\log \mathcal{N}\left(\varepsilon\| \Phi\|_{L_2({\wt P})}, \{\varphi\}, L_2({\wt P})\right)} d \varepsilon<\infty,
$$
where the $\sup$ is taken over all probability measures $\wt P$ on $\mathcal{S}_T$ such that $\|\Phi\|_{L_2(\wt P)}<\infty$;
\end{enumerate} 

\end{enumerate}
then we have
\begin{enumerate}[label=(\arabic*),ref=(\arabic*)]
    \item \bse
\int \varphi(t) d \wh P_T(t)&=& n_1^{-1} \sum_{i=1}^{n_1}\{\varphi(X_i) \gamma_0(X_i)\Delta_i+\gamma^\varphi_1(X_i)(1-\Delta_i)- \gamma^\varphi_2(X_i)\}+R_{n_1}(\varphi)\\
&\equiv& n_1^{-1} \sum_{i=1}^{n_1} \eta^\varphi(X_i)+R_{n_1}(\varphi),
\ese
where $\sup_{\varphi}|R_{n_1}(\varphi)|=O({\log^3 n_1}/{n_1})$ almost surely and 
\bse
\wt{H}^0(x)&\equiv&\int_{-\infty}^x(1-P_T(y)) P_C(d y), \\ \wt{H}^1(x)&\equiv&\int_{-\infty}^x(1-P_C(y-)) P_T(d y), \\
\gamma_0(x) &\equiv& \exp \left\{\int_{-\infty}^{x-} \frac{\wt{H}^0(d z)}{\{1-P_T(z)\}\{1-P_C(z)\}}\right\}, \\
\gamma^\varphi_1(x) &\equiv& \frac{1}{\{1-P_T(x)\}\{1-P_C(x)\}} \int I{(x<w)} \varphi(w) \gamma_0(w) \wt{H}^1(d w), \\
\gamma^\varphi_2(x) &\equiv& \iint \frac{I{(v<x, v<w)} \varphi(w) \gamma_0(w)}{[\{1-P_T(v)\}\{1-P_C(v)\}]^2} \wt{H}^0(d v) \wt{H}^1(d w),
\ese

\item $$\sup_{\varphi}\left| \int \varphi(t) d \{\wh P_T(t) - P_T(t) \} \right| = O_p(n_1^{-1/2}),$$

\item for every $\varepsilon>0$,
$$
\lim _{\alpha \rightarrow 0} \limsup_{n_1 \rightarrow \infty} P\left(\sup_{\operatorname{var}_p(\eta^{\varphi-\wt \varphi})<\alpha} \left|\sqrt{n_1}\int \{\varphi(t)- \wt \varphi(t) \}d \{\wh P_T(t) - P_T(t) \}\right|>\varepsilon \right)=0.
$$
\end{enumerate}

\end{Lemma}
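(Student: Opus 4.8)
The plan is to reduce everything to Stute's i.i.d.\ representation of product--limit integrals and to verify that the hypotheses (a)+(b2) supply exactly the empirical--process ingredients that the VC--subgraph hypothesis (b1) supplies in the existing literature. Under the standing conditions the support assumption A1 gives $E\{\eta^\varphi(X)\}=\int\varphi\,dP_T$ for every integrable $\varphi$, and the moment bounds in (a) are precisely those that control the (possibly unbounded) endpoint behaviour of the weights $\gamma_0$, $\gamma_1^\varphi$, $\gamma_2^\varphi$. When (a) and (b1) hold, claims (1)--(3) are exactly the uniform product--limit integral representation and the attendant uniform $\sqrt{n_1}$--rate and asymptotic equicontinuity of \cite{sellero2005uniform}, so the real task is to obtain the same three conclusions under (a)+(b2). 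I would do this by revisiting the proof of \cite{sellero2005uniform} and observing that VC--subgraph is used there only as a sufficient condition for (i) a uniform maximal/oscillation inequality for the relevant empirical processes and (ii) a uniform strong law (Glivenko--Cantelli); both are delivered by (b2).

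Concretely, I would first record the decomposition: writing $\widehat H^0_{n_1},\widehat H^1_{n_1}$ for the empirical sub--distribution functions of the censored and uncensored observations and reorganising Stute's expansion as in \cite{sellero2005uniform}, one has $\int\varphi\,d\widehat P_T=n_1^{-1}\sum_i\eta^\varphi(X_i)+R_{n_1}(\varphi)$, where $R_{n_1}(\varphi)$ is a finite sum of terms, each a product of at most two centred empirical quantities ($\widehat H^j_{n_1}-\widetilde H^j$ or the empirical process of the $X_i$) integrated against kernels that depend on $\varphi$ only through $w\mapsto\varphi(w)\gamma_0(w)$ and on auxiliary cut--points through half--line indicators $I(v<w)$, $I(v<x)$. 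The maps $\varphi\mapsto\varphi\gamma_0$ and, more generally, $\varphi\mapsto\eta^\varphi$ are linear; hence the classes that appear, e.g.\ $\{(v,w)\mapsto\varphi(w)\gamma_0(w)I(v<w):\varphi\}$, are products of $\{\varphi\}$ (pre--multiplied by the fixed function $\gamma_0$, integrable against the weighted measures by (a)) with the VC class of half--line indicators. Such products again have a finite uniform entropy integral, a $P$--integrable envelope, and finite $L_1(P)$ bracketing numbers, by the standard preservation properties of these quantities under products with bounded VC classes and under linear transformations with finite integrable kernels.

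Given this, I would prove claim (1) by bounding each remainder term uniformly in $\varphi$ by the product of a uniform a.s.\ oscillation rate for the relevant empirical process---obtained from the finite uniform entropy integral via a maximal inequality (e.g.\ Theorem~2.14.1 of \cite{van1996weak}) together with Talagrand--type exponential bounds and Borel--Cantelli, giving rates of order $\sqrt{\log\log n_1/n_1}$---and a uniform a.s.\ rate for $\widehat H^j_{n_1}-\widetilde H^j$, the residual factors being $o(1)$ a.s.\ by the $L_1(P)$ Glivenko--Cantelli property. Tracking the extra logarithmic factors created by the endpoint weights and by the chaining over the cut--point indices then yields $\sup_\varphi|R_{n_1}(\varphi)|=O(\log^3 n_1/n_1)$ a.s., matching \cite{sellero2005uniform} with the VC--based bounds replaced throughout by their uniform--entropy counterparts. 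Claims (2) and (3) follow from (1): since $\sqrt{n_1}\int\varphi\,d(\widehat P_T-P_T)=n_1^{-1/2}\sum_i\{\eta^\varphi(X_i)-E\eta^\varphi(X)\}+\sqrt{n_1}R_{n_1}(\varphi)$ with $\sqrt{n_1}\sup_\varphi|R_{n_1}(\varphi)|=O(\log^3 n_1/\sqrt{n_1})\to0$ a.s., it suffices that the transformed class $\{\eta^\varphi:\varphi\}$ be $P$--Donsker; this holds by the uniform--entropy Donsker theorem (e.g.\ Theorem~2.5.2 of \cite{van1996weak}), because $\varphi\mapsto\eta^\varphi$ preserves finiteness of the uniform entropy integral and the envelope of $\{\eta^\varphi\}$ is square--$P$--integrable by the second moment bound in (a). Donsker gives asymptotic tightness, hence (2), and asymptotic equicontinuity with respect to the $L_2(P)$ seminorm $\varphi\mapsto\{\var_p(\eta^\varphi)\}^{1/2}$, which, noting $E\eta^{\varphi-\wt\varphi}(X)=\int(\varphi-\wt\varphi)\,dP_T$, is precisely (3) after absorbing the $\sqrt{n_1}R_{n_1}$ term.

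The main obstacle I anticipate is claim (1): securing the precise remainder rate $O(\log^3 n_1/n_1)$ \emph{uniformly} in $\varphi$ under only the uniform--entropy and $L_1$--bracketing hypotheses requires re--running every step of Stute's remainder analysis in the form used by \cite{sellero2005uniform} and substituting, at each invocation of a VC--subgraph oscillation bound, the appropriate maximal or iterated--logarithm inequality valid under finite uniform entropy, then checking that the logarithmic exponent is not degraded by the chaining needed to handle the auxiliary cut--point indices and the unbounded endpoint weights. The remaining measurability requirements---pointwise measurability/admissibility of $\{\varphi\}$ and of its transforms, needed for the outer--expectation maximal inequalities---are covered by the standing ``measurable envelope'' hypothesis and by the fact that products with VC classes and linear transformations preserve such measurability.
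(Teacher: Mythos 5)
Your overall strategy is the same as the paper's: reduce claim (1) to the uniform product-limit integral representation of \cite{sellero2005uniform} (whose main term is Stute's expansion), argue that the VC-subgraph hypothesis there can be traded for the uniform-entropy/bracketing hypothesis (b2), and then obtain (2) and (3) from the Donsker property of the transformed class $\{\eta^\varphi\}$ together with asymptotic equicontinuity. The two places where you diverge are worth noting. First, for claim (1) you propose to re-run the entire remainder analysis, substituting maximal inequalities at each step; the paper instead pinpoints the only two places in the proof of Theorem~1 of \cite{sellero2005uniform} where the VC assumption is actually invoked --- Corollary~5.2.3 and Theorem~5.4.1 of \cite{victor1999decoupling}, used for decoupling/U-process bounds --- and observes that Corollary~5.2.9 and Remark~5.3.9 of the same reference give the identical conclusions under the uniform-entropy hypothesis in (b2). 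That targeted substitution is what makes the $O(\log^3 n_1/n_1)$ rate come for free rather than having to be re-derived; your plan is workable but much heavier, and you correctly flag this as the main risk.

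Second, your justification that $\{\eta^\varphi\}$ is Donsker leans on the assertion that ``linear transformations with finite integrable kernels'' preserve finiteness of the uniform entropy integral. There is no off-the-shelf preservation theorem of that form, and as stated this step is a gap. The paper closes it differently: it writes $\varphi=\varphi^+-\varphi^-$ and observes that $x\mapsto\int I(x<w)\varphi^{\pm}(w)\gamma_0(w)\,\wt H^1(dw)$ is monotone and uniformly bounded under (a), so $\{\gamma_1^\varphi\}$ and $\{\gamma_2^\varphi\}$ are differences of monotone bounded functions; Theorem~2.7.5 of \cite{van1996weak} then bounds their $L_2(P)$-bracketing numbers by $\exp(K\varepsilon^{-1})$, the bracketing integral is finite, and the Donsker property follows (multiplication by the fixed bounded weight $1/[\{1-P_T\}\{1-P_C\}]$ and by $\gamma_0$ preserving it). Claims (2) and (3) are then immediate from the Donsker property and Corollary~2.3.12 of \cite{van1996weak}, exactly as you conclude. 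If you replace your entropy-preservation assertion with this monotonicity/bracketing argument (or some equally concrete device such as a convex-hull entropy bound), your proposal becomes a complete proof along essentially the paper's lines.
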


\begin{Remark}

 In Lemma~\ref{lm:c1}, the uniform expression of the Kaplan-Meier
 integral follows the expression provided by \cite{stute1995central}
 rather than that in \cite{sellero2005uniform}. This adjustment is
 made because \cite{sellero2005uniform} consider the presence of
 truncation. It can be easily verified that, in the absence of
 truncation as in our case, the main term in \cite{sellero2005uniform}
 reduces to the expression in \cite{stute1995central}.  
 
\end{Remark}

\end{appendix}


\bibliographystyle{apalike}
\bibliography{Label_Shift_Literature}

\setcounter{section}{0}
\setcounter{equation}{0}\renewcommand{\theequation}{S\arabic{equation}}
\setcounter{section}{0}\renewcommand{\thesection}{S\arabic{section}}
\setcounter{table}{0}\renewcommand{\thetable}{S\arabic{table}}
\setcounter{figure}{0}\renewcommand{\thefigure}{S\arabic{figure}}

\newpage
\setcounter{page}{1}

\section{Examples}
\label{Sup:Example}
\begin{proof}[Proof of Example~3.2]
To verify that $q_{T \mid \Z}(t,\z;\bt)/q_{T \mid
  \Z}(t,\z;\wt \bt)$ depends on $\z$, we compute
\bse
\frac{\partial}{\partial \z}\log \frac{q_{T \mid \Z}(t,\z;\bt)}{q_{T \mid \Z}(t,\z;\wt\bt)}
= (\bb - \wt \bb) - \{\bb H(t;\gamma,\lambda) \exp(\z\trans\bb) - \wt \bb H(t;\wt \gamma,\wt \lambda) \exp(\z\trans\wt \bb)\}
\ese
and verify that it is not the zero function.  In fact, if it were the zero
function, setting $t=0$ would lead to $\bb=\wt\bb$,
  which would further imply $\gamma=\wt\gamma$ and
  $\lambda=\wt\lambda$,
  which contradicts the assumption $\bt \neq \wt \bt$. 
\end{proof}

\begin{Example}[Log linear model]
For the log linear model with $\bt \equiv (\bb\trans,\mu,\sigma)\trans$ and
    $$\log T = \mu + \z\trans \bb + \sigma \varepsilon,$$
    where $\varepsilon$ follows a standard Normal distribution,
    we have that $q_{T \mid \Z}(t,\z;\bt)/q_{T \mid
      \Z}(t,\z;\wt\bt)$ depends on $\z$ whenever $\bt \neq \wt\bt$. Hence by Lemma 3.1, the model
    is identifiable.
\end{Example}

\begin{proof}
  Assume the contrary where
  for some $\bt \neq \wt \bt$, $q_{T \mid \Z}(t,\z;\bt)/q_{T \mid
    \Z}(t,\z;\wt\bt)$ does not depend on $\z$ and  $\var_q(\Z)\neq
  \0$. This implies that for every $t$ and $\z$,
\bse
\frac{\partial}{\partial \z}\log \frac{q_{T \mid \Z}(t,\z;\bt)}{q_{T \mid \Z}(t,\z;\wt\bt)}
&=& \frac{(\log(t)-\mu-\z\trans\bb)\bb}{\sigma^2} - \frac{(\log(t)-\wt\mu-\z\trans\wt\bb)\wt\bb}{\wt \sigma^2} \\&=& \0,
\ese
which leads to $\bt=\wt\bt$ contradicting the assumption $\bt \neq \wt
\bt$. 
\end{proof}

\begin{Example}[Proportional odds model]
For the proportional odds model with LogLogistic baseline  survival function with $\bt=(\bb\trans, \mu,\sigma)\trans$, the conditional density is defined as$$q_{T \mid \Z}(t,\z;\bt) \equiv \frac{h(t;\mu,\sigma)\exp(\z\trans \bb)}{\{1+H(t;\mu,\sigma)\exp(\z\trans \bb)\}^2},$$ 
where $h(t;\mu,\sigma) \equiv \exp(-\mu/\sigma)/\sigma
t^{1/\sigma-1} $ and $H(t;\mu,\sigma)\equiv \exp(-\mu/\sigma)
t^{1/\sigma}$. If $\bt \neq \wt\bt$, $q_{T \mid \Z}(t,\z;\bt)/q_{T
  \mid \Z}(t,\z;\wt\bt)$ depends on $\z$. Hence by Lemma 3.1, the
model is identifiable.
\begin{proof}
Assume the contrary that for some $\bt \neq \wt \bt$, $q_{T \mid \Z}(t,\z;\bt)/q_{T \mid
  \Z}(t,\z;\wt \bt)$ does not depend on $\z$, which implies that
\bse
\frac{\partial}{\partial \z}\log \frac{q_{T \mid \Z}(t,\z;\bt)}{q_{T \mid \Z}(t,\z;\wt \bt)} = (\bb - \wt\bb)- 2\frac{H(t;\mu,\sigma)\exp(\z\trans \bb)\bb}{1+H(t;\mu,\sigma)\exp(\z\trans \bb)}+2\frac{H(t;\wt\mu,\wt\sigma)\exp(\z\trans \wt\bb)\wt\bb}{1+H(t;\wt\mu,\wt\sigma)\exp(\z\trans \wt\bb)}= \0
\ese
for all $t$ and $\z$.
Taking  $t=0$ leads to $\bb=\wt\bb$, which further leads to
  $\mu=\wt\mu$ and $\sigma=\wt\sigma$, hence
$\bt=\wt \bt$,  which is a contradiction.
\end{proof}
\end{Example}

\begin{Example}[Accelerated hazards model]
For the accelerated hazards model with Weibull baseline hazard
function, with $\bt=(\bb\trans,\gamma,\lambda)\trans$, the conditional density is given by
$$
q_{T \mid \Z}(t,\z;\bt) \equiv h\{t \exp (\z\trans \bb);\lambda,\gamma\} \exp \left[-\int_0^t h\{s\exp (\z\trans \bb);\lambda,\gamma\} d s\right],
$$
where $h(t;\lambda, \gamma)\equiv\gamma\lambda t^{\gamma-1}, \gamma\ne1$.
We have that
$q_{T \mid \Z}(t,\z;\bt)/q_{T \mid
      \Z}(t,\z;\wt\bt)$ depends on $\z$ whenever $\bt \neq \wt \bt$. Hence by Lemma 3.1, the
model is identifiable.
\end{Example}
\begin{proof}
Assume the contrary that for some $\bt\neq \wt
\bt$, $$\frac{\partial}{\partial \z} \log \frac{q_{T \mid
    \Z}(t,\z;\bt)}{q_{T \mid \Z}(t,\z;\wt \bt)}  = (\gamma-1)\bb +
\lambda (1-\gamma) t^{\gamma}\exp\{(\gamma-1)\z\trans\bb\}\bb -(\wt
\gamma-1)\wt \bb - \wt \lambda (1-\wt \gamma) t^{\wt \gamma}\exp\{(\wt
\gamma-1)\z\trans\wt\bb\}\wt\bb=\0$$
for all $t,\z$.
Setting $t=0$ leads to $(\gamma-1)\bb=(\wt
\gamma-1)\wt \bb$. Then, setting $t=1$ and $\z=\0$ leads to
$\lambda (1-\gamma)\bb=\wt \lambda (1-\wt \gamma)\wt\bb$.  Subsequently, setting $\z=0$ implies $\gamma=\wt\gamma$. Hence we obtain $\bt=\wt\bt$, which is a contradiction.
\end{proof}

\section{Hypothesis Test}
\label{sup:HT}

We want to test
$$H_0: \forall(\z, t) \ p_{\Z \mid T}(\z, t)=q_{\Z \mid T}(z, t)
\text{ vs. } H_1: \exists (\z, t) \ p_{\Z \mid T}(\z, t)\neq q_{\Z \mid T}(z, t) \
.$$
Let $P_{\Z,T}(\z,t)$ and $Q_{\Z,T}(\z,t)$ denote the joint
distribution functions of $\Z$ and $T$ and let $P_{T}(t)$ and
$Q_{T}(t)$ denote the marginal distribution functions of $T$,  in
populations $\calP$ and $\calQ$ respectively. The null hypothesis is
equivalent to
\bse 
\frac{p_{\Z, T}(\z, t)}{p_T(t)}=\frac{q_{\Z, T}(\z, t)}{q_T(t)} 
\ \forall(\z, t) \Leftrightarrow \frac{\frac{d}{d t} P_{\Z, T}(\z, t)}{\frac{d}{d t} P_T(t)}=\frac{\frac{d}{d t} Q_{\Z, T}(\z, t)}{\frac{d}{d t} Q_T(t)} \ \forall(\z, t).
\ese
Define $$R_p(\z,t)\equiv \frac{\frac{d}{d t} P_{\Z, T}(\z, t)}{\frac{d}{d t} P_T(t)}, \quad R_q(\z,t)\equiv \frac{\frac{d}{d t} Q_{\Z, T}(\z, t)}{\frac{d}{d t} Q_T(t)}.$$

We estimate $P_{\Z, T}(\z, t)$ and $Q_{\Z, T}(\z, t)$ based on the estimator of the distribution function proposed by \cite{stute1993consistent} and $P_T(t)$ and $Q_T(t)$ are estimated using the Kaplan-Meier estimator. The derivatives $\frac{d}{d t} P_{\Z, T}(\z, t)$, $\frac{d}{d t} Q_{\Z, T}(\z, t)$, $\frac{d}{d t} P_T(t)$ and $\frac{d}{d t} Q_T(t)$ are estimated using kernel smoothing method~\citep{mielniczuk1986some}. Then we obtain $\wh R_p(\z,t)$ and $\wh R_q(\z,t)$. The test statistic is given by $$
T_n=\iint \{\wh{R}_p(\z, t)-\wh{R}_q(\z, t)\}^2 d\z d t.
$$

To determine the critical value, we apply a bootstrap procedure. Specifically, we resample data $K$ times from populations $\calP$ and $\calQ$, re-estimate each time $\wh{R}_p(\z,t)$ and $\wh{R}_q(\z,t)$, and obtain $K$ realizations of the bootstrapped statistics $T_{n}^*$. The critical value for the nominal level $(1-\alpha)$ is then given by the $(1-\alpha)$th quantile of the distribution of $
T_{n}^*-T_n.$

\section{Technical Details}
\label{Sup:DT}

Let $P$ be the probability measure of $(X,\Delta,\Z,R)$  where $\Z$ is
$d_\z$-dimensional. Let
$\mathcal{S}_{\Z p}$ and $\mathcal{S}_{\Z q}$ denote the support 
of $p_\Z(\z)$ and $q_\Z(\z)$ respectively. We assume the support of
$q_T(t)$ is included in the support of $p_T(t)$ and let
$\mathcal{S}_T$ denote the support of $p_T(t)$. Let $\Theta$ denote
the parameter space of $\bt$. Let $E_p$ and $E_q$ denote the
expectation with respect to the population $\calP$ and $\calQ$
respectively, and let $E(\cdot |x)$ denote the conditional expectation given $X=x$. We use the notation $\int q_{T \mid \Z}(t, \z ; \bt)
d Q_\Z(\z) \equiv \int q_{T \mid \Z}(t, \z ; \bt) q_\Z(\z) d \z$
and $\int q_{T \mid \Z}(t, \z ; \bt) d \wh Q_\Z(\z) \equiv
  n_2^{-1}\sum_{j=1}^n (1-r_j) q_{T\mid \Z}(t,\z_j;\bt)$.
Let  $\|\cdot\|_k$ denote
the vector $\ell_k$-norm. For $\varepsilon>0$, we use $\mathcal{N}_{[]}(\varepsilon,
\mathcal{F},\|\cdot\|)$ and $\mathcal{N}(\varepsilon,
\mathcal{F},\|\cdot\|)$ to denote the $\varepsilon$-bracketing number
and the $\varepsilon$-covering number of any metric space
$(\mathcal{F},\|\cdot\|)$ respectively~\citep{van1996weak}.

\subsection{Proof of Example 3.1}
\label{ssup:example3.1}
Given that \begin{equation*}
    p_{T\mid Z}(t,z;\beta) \equiv \frac{1}{ \sqrt{2\pi} t \beta z} \exp\left\{-\frac{(\log t)^2}{2(\beta z)^2}\right\},
\end{equation*}
and that the covariate $Z$ takes discrete values with equal probability:
$\pr(Z=1) = 0.5$ and $\pr(Z=2) = 0.5,$
the conditional density $f_{Z\mid T}(z,t)$ is given by 
\bse
 f_{Z\mid T}(z,t)
 &=& \frac{p_{T\mid Z}(t,z;\beta)\pr(Z=z)}{\sum_{z=1,2} \{\pr(Z=z)p_{T\mid Z}(t,z;\beta)\}}\\
 &=&\frac{ \exp\{-{(\log t)^2}/{2(\beta z)^2}\}/{\beta
     z}}{\sum_{z=1,2} [
\exp\{-{(\log t)^2}/{2(\beta z)^2}\}/{\beta z}]},
\ese

such that the following equation holds \bse
&& \int f_{Z\mid T}(z,t)q_T(t;\mu,\sigma)dt\\
&=& \int \frac{ \exp\{-{(\log t)^2}/{2(\beta z)^2}\}/{\beta z}}{\sum_{z=1,2} [\exp\{-{(\log t)^2}/{2(\beta z)^2}\}/{\beta z}]} \frac{1}{ \sqrt{2\pi} \sigma t} \exp\left\{-\frac{(\log t
        - \mu)^2}{2 \sigma^2}\right\}dt\\
&=&  \int \frac{ \exp\{-{(\log t)^2}/{2(\beta z)^2}\}/{\beta z}}{\sum_{z=1,2} [\exp\{-{(\log t)^2}/{2(\beta z)^2}\}/{\beta z}]} \frac{1}{\sqrt{2\pi} \sigma } \exp\left\{-\frac{(\log t
        - \mu)^2}{2 \sigma^2}\right\}d \log t\\
&=& \int \frac{ \exp\{-{(\log t)^2}/{2(\beta z)^2}\}/{\beta z}}{\sum_{z=1,2} [\exp\{-{(\log t)^2}/{2(\beta z)^2}\}/{\beta z}]} \frac{1}{ \sqrt{2\pi} \sigma t} \exp\left\{-\frac{(\log t
        + \mu)^2}{2 \sigma^2}\right\}dt\\
&=& \int f_{Z\mid T}(z,t)q_T(t;-\mu,\sigma)dt.
\ese
Hence, we have that $q_T(t;\mu,\sigma)$ and $q_T(t;-\mu,\sigma)$ satisfy \bse
    q_Z(z) = \int f_{Z\mid T}(z,t)q_T(t;\mu,\sigma)dt = \int f_{Z\mid T}(z,t) q_T(t;-\mu,\sigma) dt.
\ese

\subsection{Proof of Theorem 3.1 and Theorem 3.2}
\label{Sup:Thproof}

\subsubsection{Notations}
\label{Sup:Notations}

To simplify the writing, we define 
\bse
 q_T(t) &\equiv& \int q_{T\mid \Z}(t,\z;\bt_0)d Q_\Z(\z),\quad
 \wh q_T(t) \equiv\int q_{T\mid \Z}(t,\z;\bt_0)d \wh Q_\Z(\z),\\
q_T^*(t) &\equiv& \tfrac{\partial}{\partial \bt} \int q_{T\mid \Z}(t,\z;\bt_0)d Q_\Z(\z),\quad
\wh{q}_T^*(t) \equiv \tfrac{\partial}{\partial \bt} \int q_{T\mid \Z}(t,\z;\bt_0)d \wh Q_\Z(\z),\\
 S_0(x,\z,q_T) &\equiv& \int \frac{I(t>x) q_{T \mid \Z}(t,\z;\bt_0)}{\int q_{T \mid \Z}(t,\z;\bt_0)d Q_\Z(\z)}d P_T(t)=\frac{\int_x^\infty p_{T,\Z}(t,\z)dt}{q_\Z(\z)},\\
 S_{0n}(x,\z,\wh q_T) &\equiv& \int \frac{I(t>x)q_{T \mid \Z}(t,\z;\bt_0)}{\int q_{T \mid \Z}(t,\z;\bt_0)d \wh Q_\Z(\z)}d \wh P_T(t),\\
 S_1(x,\z,q_T) &\equiv& \int \frac{I(t>x)\tfrac{\partial}{\partial \bt}  q_{T \mid \Z}(t,\z;\bt_0)}{\int q_{T \mid \Z}(t,\z;\bt_0)d Q_\Z(\z)}d P_T(t),\quad
 S_{1n}(x,\z,\wh q_T) \equiv \int \frac{I(t>x)\tfrac{\partial}{\partial \bt} q_{T \mid \Z}(t,\z;\bt_0)}{\int q_{T \mid \Z}(t,\z;\bt_0)d \wh Q_\Z(\z)}d \wh P_T(t),\\
 S_2(x,\z,q_T,q^{*}_T) &\equiv& \int \frac{I(t>x) q_{T \mid \Z}(t, \z ; \bt_0) \int \tfrac{\partial}{\partial \bt} q_{T\mid \Z}(t, \z ; \bt_0) d  Q_\Z(\z)}{\{\int q_{T \mid \Z}(t, \z ; \bt_0) d  Q_\Z(\z)\}^2} d P_T(t),\\
 S_{2n}(x,\z,\wh q_T,\wh q^{*}_T) &\equiv& \int \frac{I(t>x) q_{T \mid \Z}(t, \z ; \bt_0) \int \tfrac{\partial}{\partial \bt} q_{T\mid \Z}(t, \z ; \bt_0) d  \wh Q_\Z(\z)}{\{\int q_{T \mid \Z}(t, \z ; \bt_0) d  \wh Q_\Z(\z)\}^2} d \wh P_T(t),\\
 \varphi_0( t;x, \z) &\equiv& \frac{I(t>x) q_{T \mid \Z}(t,\z;\bt_0)}{\int q_{T \mid \Z}(t,\z;\bt_0)d Q_\Z(\z)}=\frac{I(t>x)q_{T \mid \Z}(t,\z;\bt_0)}{q_T(t)}=\frac{I(t>x) f_{\Z \mid T}(\z,t)}{q_\Z(\z)},\\
 \varphi_1(t;x, \z) &\equiv& \frac{I(t>x)\tfrac{\partial}{\partial \bt}  q_{T \mid \Z}(t,\z;\bt_0)}{\int q_{T \mid \Z}(t,\z;\bt_0)d Q_\Z(\z)}=\varphi_0(t;x, \z)\frac{\tfrac{\partial}{\partial \bt}  q_{T \mid \Z}(t,\z;\bt_0)}{q_{T \mid \Z}(t,\z;\bt_0)},\\
 \varphi_2(t;x, \z) &\equiv& \frac{I(t>x) q_{T \mid \Z}(t, \z ; \bt_0) \int \tfrac{\partial}{\partial \bt} q_{T\mid \Z}(t, \z ; \bt_0) d  Q_\Z(\z)}{\{\int q_{T \mid \Z}(t, \z ; \bt_0) d  Q_\Z(\z)\}^2}=\varphi_0(t;x, \z)\int \frac{\tfrac{\partial}{\partial \bt}  q_{T \mid \Z}(t,\z;\bt_0)}{q_{T \mid \Z}(t,\z;\bt_0)}f_{\Z\mid T}(\z,t)d\z,\\
\eta_{0q}(\Z;x,\z)&\equiv& -\int  \frac{I(t>x) q_{T \mid \Z}(t, \z ; \bt_0)\{q_{T \mid \Z}(t,\Z;\bt_0)-q_T(t)\}}{\{\int q_{T \mid \Z}(t, \z ; \bt_0) d Q_\Z(\z)\}^2}d P_T(t),\\
\eta_{1q}(\Z;x,\z)&\equiv& -\int  \frac{I(t>x) \tfrac{\partial}{\partial \bt} q_{T \mid \Z}(t, \z ; \bt_0)\{q_{T \mid \Z}(t,\Z;\bt_0)-q_T(t)\}}{\{\int q_{T \mid \Z}(t, \z ; \bt_0) d Q_\Z(\z)\}^2}d P_T(t),\\
\eta_{2q}(\Z;x,\z)&\equiv&  \int  \frac{I(t>x) q_{T \mid \Z}(t, \z ; \bt_0)\{\tfrac{\partial}{\partial \bt} q_{T \mid \Z}(t,\Z;\bt_0)-q^*_T(t)\}}{\{\int q_{T \mid \Z}(t, \z ; \bt_0) d Q_\Z(\z)\}^2}d P_T(t) \notag \\
 &&- \int  \frac{2 I(t>x) q_{T \mid \Z}(t, \z ; \bt_0)q^*_T(t) \{q_{T \mid \Z}(t,\Z;\bt_0)-q_T(t)\}}{\{\int q_{T \mid \Z}(t, \z ; \bt_0) d Q_\Z(\z)\}^3}d P_T(t),\\
\eta_{ip}(X,\Delta; x,\z) &\equiv& \varphi_i(X; x,\z) \gamma_0(X)\Delta + \gamma_1^{\varphi_i}(X)(1 - \Delta) - \gamma_2^{\varphi_i}(X), \quad i = 0,1,2,\ese
 where
\bse
\wt{H}^0(x)&\equiv&\int_{-\infty}^x(1-P_T(y)) P_C(d y), \\ \wt{H}^1(x)&\equiv&\int_{-\infty}^x(1-P_C(y-)) P_T(d y), \\
\gamma_0(X) &\equiv& \exp \left\{\int_{-\infty}^{X-} \frac{\wt{H}^0(d z)}{\{1-P_T(z)\}\{1-P_C(z)\}}\right\}, \\
\gamma_1^{\varphi_i}(X) &\equiv& \frac{1}{\{1-P_T(X)\}\{1-P_C(X)\}} \int I{(X<w)} \varphi_i(w;x,\z) \gamma_0(w) \wt{H}^1(d w), \\
\gamma_2^{\varphi_i}(X) &\equiv& \iint \frac{I{(v<X, v<w)} \varphi_i(w;x,\z) \gamma_0(w)}{[\{1-P_T(v)\}\{1-P_C(v)\}]^2} \wt{H}^0(d v) \wt{H}^1(d w).
\ese
In the following theoretical developments, we will encounter terms of the form
$n_1^{-1} \sum_{i=1}^{n_1} g_1(X_i, \Delta_i, \Z_i) \quad$ and $n_2^{-1} \sum_{i=1}^{n_2} g_2(\Z_i)$
for some functions $g_1$ and $g_2$. We can rewrite them as follows:
\bse
n_1^{-1} \sum_{i=1}^{n_1} g_1(X_i, \Delta_i, \Z_i)
&=& n^{-1} \sum_{i=1}^{n} \frac{R_i}{\pi_n} g_1(X_i, \Delta_i, \Z_i) \\
&=& n^{-1} \sum_{i=1}^{n} \frac{R_i}{\pi} \frac{\pi}{\pi_n} g_1(X_i, \Delta_i, \Z_i) \\
&=& n^{-1} \sum_{i=1}^{n} \frac{R_i}{\pi} g_1(X_i, \Delta_i, \Z_i) + o(1),
\ese
where $\pi/\pi_n \rightarrow 1$. Similarly, for $g_2$, we have
\bse
n_2^{-1} \sum_{i=1}^{n_2} g_2(\Z_i) = n^{-1} \sum_{i=1}^{n} \frac{1-R_i}{1-\pi} g_2(\Z_i) + o(1).
\ese

\subsubsection{Conditions}
\label{Sup:Conditions}

\begin{enumerate}[label=A\arabic*,ref=A\arabic*]
    \item \label{A1}  The support of
$q_T(t)$ is included in the support of $p_T(t)$. The parameter space $\Theta$, and the supports $\mathcal{S}_{T}$, $\mathcal{S}_{\Z p}$
      and $\mathcal{S}_{\Z q}$ are all compact. The censoring time $C$ is independent of $(T,\Z)$. Let $\tau_T \equiv \inf\{t: P_T(t)=1\}$ and $\tau_C \equiv \inf\{t: P_C(t)=1\}$. Then $\tau_T \leq \tau_C$, and for every $\z$, $\pr(\Delta=1\mid\z)>0$.

      \item \label{A2} There exist functions $m_1(t,\z)$ and $M_1(t,\z)$
      such that for every $t \in \mathcal{S}_T$, $\bt \in \Theta$
      and $\z \in \mathcal{S}_{\Z p}$, $0\leq m_1(t,\z)\leq
      q_{T \mid \Z}(t, \z ; \bt) \leq M_1(t,\z)$ and
      $E_p[|\log \{m_1(T,\Z)\}|], E_p[|\log \{M_1(T,\Z)\}|]$ and
      $E_p\{M_1^2(T,\Z)\}$ are finite. 

    There exists a function $c_1: \mathcal{S}_T \times \mathcal{S}_{\Z
      p} \rightarrow \mathbb{R}_{\geq 0}$ such that for every $(t,\z)
    \in \mathcal{S}_T \times \mathcal{S}_{\Z p}$ and every $\bt,\wt
    \bt \in \Theta$, 
    $$| q_{T \mid \Z}(t,\z; \bt)-q_{T \mid \Z}(t,\z; \wt \bt) | \leq
     \|\bt - \wt \bt\|_1c_1(t,\z),$$
    with $0<E_p\{|c_1(T,\Z)|\}<+\infty$.

    There exists a function $c_2: \mathcal{S}_{\Z q} \rightarrow
    \mathbb{R}_{\geq 0}$ such that for every $\z \in \mathcal{S}_{\Z
      q}$ and every $(t,\bt\trans)\trans,(\wt t,\wt \bt\trans)\trans
    \in \mathcal{S}_T \times \Theta$, 
    $$\left | q_{T \mid \Z}(t,\z; \bt)-q_{T \mid \Z}(\wt t,\z; \wt
      \bt)\right | \leq \left \|(t,\bt\trans)\trans - (\wt t,\wt
      \bt\trans)\trans \right\|_1c_2(\z),$$
    with $0< E_q[\{c_2(\Z)\}^2]<+\infty$.

      \item \label{A3} There exist functions $m_2(t)$ and $M_2(t)$ such
      that for every $t \in \mathcal{S}_T$, and $\bt \in \Theta$, $0
      \leq m_2(t)\leq \int q_{T \mid \Z}(t,\z; \bt)
      dQ_\Z(\z) \leq  M_2(t)$ and $E_p[|\log \{m_2(T)\}|],
      E_p[|\log\{M_2(T)\}|]$, $E_p\{M_2^2(T)\}$ and
       $E_p\{1/m_2(T)\}$ are finite.  
    
    There exists a function $c_3: \mathcal{S}_{T} \rightarrow
    \mathbb{R}_{\geq 0}$ such that for every $\bt,\wt \bt \in
    \Theta$, 
    $$\left | \int q_{T \mid \Z}(t,\z; \bt) dQ_\Z(\z)-\int q_{T \mid
        \Z}(t,\z; \wt \bt)dQ_\Z(\z)\right | \leq \|\bt - \wt \bt\|_1c_3(t),$$
    with $0<E_p\{|c_3(T)|\}<+\infty$.

    \item \label{A4} There exist functions $m_3(x,\z)$ and $M_3(x,\z)$
      such that for every  $x \in \mathcal{S}_T$, $\z \in \mathcal{S}_{\Z p}$ and $\bt \in
      \Theta$, $$0 \leq  m_3(x,\z)\leq\int \frac{I(t > x)q_{T
          \mid \Z}(t, \z ; \bt)}{\int q_{T \mid \Z}(t, \z ; \bt) d
        Q_\Z(\z)} d P_T(t)\leq M_3(x, \z).$$ 

    Let $$c_4(\z) \equiv \int \frac{M_1(t,\z) c_3(t)}{m_2^2(t)}+
    \frac{c_1(t,\z)}{m_2(t)} dP_T(t),$$
    with $0<E_p\{|c_4(\Z)|\}<+\infty$. $E_p[|\log \{m_3(C,\Z)\}|],
   E_p[|\log\{M_3(C,\Z)\}|]$, $E_p\{M_3^2(C,\Z)\}$ and
   $E_p\{1/m_3(C,\Z)\}$ are finite.

   \item \label{A5}
     The class
     $\{t \mapsto \frac{   q_{T\mid \Z}(t,\z;\bt)}{
\{\int q_{T\mid \Z}(t,\z;\bt)d Q_\Z(\z)\}^2}: \bt \in \Theta, \z \in
\mathcal{S}_{\Z p} \}$ has an envelope function $M_4(t)$ with
$E_p\{M_4(T)\}<+\infty$.

\item \label{A6}  $P_T$ and $P_C$ do not have jumps in
      common.  The class $\{t \mapsto \varphi(t; \bt,x,\z) \equiv \frac{ I(t>x)
  q_{T\mid \Z}(t,\z;\bt)}{
\int q_{T\mid \Z}(t,\z;\bt)d Q_\Z(\z)}: \bt \in \Theta, x \in \mathcal{S}_T, \z \in \mathcal{S}_{\Z p} \}$ has an integrable envelope function $\Phi(t)$, such that 
\bse
&& E[\Phi(T)\{1-P_C(T)\}^{-2}\{1-P_T(T)\}^{-5}] < \infty, \\
&& E[\Phi^2(T)\{1-P_C(T)\}^{-2}\{1-P_T(T)\}^{-3}] < \infty,\\
&& E[\{1-P_C(T)\}^{-2}] < \infty,
\ese
and 
either of the following holds:
\begin{enumerate}
    \item[(1)] the family of functions $\{\varphi\}$ is a measurable VC-subgraph class of functions;
    
    \item[(2)] the family of functions $\{\varphi\}$ satisfies 
$$
\mathcal{N}_{[]}\left(\varepsilon, \{\varphi\}, L_1(P)\right)<\infty, \quad \text { for every } \varepsilon>0,
$$
and has a measurable envelope function $\Phi(t)$ such that $$
\int_0^{\infty} \sup_{\wt P} \sqrt{\log \mathcal{N}\left(\varepsilon\|\Phi\|_{L_2(\wt P)}, \{\varphi\}, L_2(\wt P)\right)} d \varepsilon<\infty,
$$
where the $\sup$ is taken over all probability measures $\wt P$ on $\mathcal{S}_T$ such that $\|\Phi\|_{L_2(\wt P)}<\infty$.
\end{enumerate} 
    
\end{enumerate}

\begin{Remark}\label{rem:c4}

Condition \ref{A4} implies that the function $\int \frac{I(t > x) q_{T
    \mid \Z}(t, \z ; \bt)}{\int q_{T \mid \Z}(t, \z ; \bt) d Q_\Z(\z)}
d P_T(t)$ is Lipschitz continuous with respect to $\bt$.
To see this,
for any $\bt, \wt \bt\in \Theta$, 
\bse
&& \left| \int \frac{I(t> x) q_{T \mid \Z}(t, \z ; \bt)}{\int q_{T \mid \Z}(t, \z ; \bt) d Q_\Z(\z)} d P_T(t) - \int \frac{I(t> x) q_{T \mid \Z}(t, \z ; \wt \bt)}{\int q_{T \mid \Z}(t, \z ; \wt \bt) d Q_\Z(\z)} d P_T(t) \right |\\
&\leq& \int \left|\frac{q_{T \mid \Z}(t, \z ; \bt)}{\int q_{T \mid \Z}(t, \z ; \bt) d Q_\Z(\z)}  - \frac{q_{T \mid \Z}(t, \z ; \wt \bt)}{\int q_{T \mid \Z}(t, \z ; \wt \bt) d Q_\Z(\z)} \right | d P_T(t) \\
&=&  \int \left|\frac{q_{T \mid \Z}(t, \z ; \bt)\int q_{T \mid \Z}(t, \z ; \wt \bt) d Q_\Z(\z)  - \int q_{T \mid \Z}(t, \z ; \bt) d Q_\Z(\z)q_{T \mid \Z}(t, \z ; \wt \bt)}{\int q_{T \mid \Z}(t, \z ; \bt) d Q_\Z(\z) \int q_{T \mid \Z}(t, \z ; \wt \bt) d Q_\Z(\z)} \right | d P_T(t) \\
&\leq&  \int \left|\frac{q_{T \mid \Z}(t, \z ; \bt) \left\{\int q_{T \mid \Z}(t, \z ; \wt \bt) d Q_\Z(\z)-\int q_{T \mid \Z}(t, \z ; \bt) d Q_\Z(\z) \right\}}{\int q_{T \mid \Z}(t, \z ; \bt) d Q_\Z(\z) \int q_{T \mid \Z}(t, \z ; \wt \bt) d Q_\Z(\z)} \right |\\
&&  + \left|\frac{q_{T \mid \Z}(t, \z ; \bt) - q_{T \mid \Z}(t, \z ; \wt \bt)}{\int q_{T \mid \Z}(t, \z ; \wt \bt) d Q_\Z(\z)} \right |  d P_T(t) \\
&\leq&  \int \left\{ \frac{M_1(t,\z)c_3(t)}{m_2^2(t)}+
  \frac{c_1(t,\z)}{m_2(t)} \right\} dP_T(t)   \|\bt - \wt \bt\|_1\\
&=& c_4(\z)\|\bt - \wt \bt\|_1,
\ese
 where $E_p\{|c_4(\Z)|\}<\infty$ under Condition A4.
\end{Remark}

\begin{Remark}
    Condition \ref{A6}  is imposed to ensure that the class $\{\varphi\}$ satisfies the conditions in Lemma~C.1.
\end{Remark}

\begin{enumerate}[label=B\arabic*,ref=B\arabic*]

\item \label{B1} 
There exists a function $c_5:\mathcal{S}_T \mapsto \mathbb{R}_{\geq 0}$ such that for every $(\wt \z, \z) \in  (\mathcal{S}_{\Z p}\cup\mathcal{S}_{\Z q})^2$
$$|q_{T\mid\Z}(t,\z;\bt_0) - q_{T\mid\Z}(t,\wt \z;\bt_0)| \leq \|\z - \wt \z\|_1 c_5(t),$$
    with $0<E\{c_5(T)/q_T(T)\}< +\infty$.
Let $M_5(t) \equiv (\operatorname{diam} \mathcal{S}_{\Z
  p})c_5(t)/q_T(t)+|q_{T \mid \Z}(t, \z_0 ; \bt_0)/q_T(t)|$
for some fixed $\z_0 \in \mathcal{S}_{\Z p}$  where
$\operatorname{diam} \mathcal{S}_{\Z 
  p} \equiv \sup_{\z,\wt\z \in \mathcal{S}_{\Z p}}\|\z - \wt
\z\|_1$  and $E\{|q_{T \mid \Z}(T, \z_0 ; \bt_0)/q_T(T)|\}<\infty$. Then for every $t \in \mathcal{S}_T$ and $\z \in
\mathcal{S}_{\Z p}$, $|q_{T\mid\Z}(t,\z;\bt_0)/q_T(t)| \leq
M_5(t)$. 

\item \label{B2} 
There exists a function $c_6:\mathcal{S}_T \mapsto \mathbb{R}_{\geq 0}$ such that for every $(\wt \z, \z) \in (\mathcal{S}_{\Z p}\cup\mathcal{S}_{\Z q})^2$
$$\left|\tfrac{\partial}{\partial \bt}q_{T\mid\Z}(t,\z;\bt_0) - \tfrac{\partial}{\partial \bt}q_{T\mid\Z}(t,\wt \z;\bt_0)\right| \leq \|\z - \wt \z\|_1 c_6(t),$$
with $0<E\{c_6(T)/q_T(T)\}< +\infty$. Let
$M_6(t) \equiv (\operatorname{diam} \mathcal{S}_{\Z
  p})c_6(t)/q_T(t)+|\tfrac{\partial}{\partial \bt}q_{T \mid \Z}(t, \z_0 ;
\bt_0)/q_T(t)|$ for some fixed $\z_0 \in \mathcal{S}_{\Z p}$ where
$E\{|\tfrac{\partial}{\partial \bt}q_{T \mid \Z}(T, \z_0 ;
\bt_0)/q_T(T)|\}<\infty$. Then for every 
$t \in \mathcal{S}_T$ and $\z \in \mathcal{S}_{\Z p}$,
$|\tfrac{\partial}{\partial \bt} q_{T\mid\Z}(t,\z;\bt_0)/q_T(t)| \leq
M_6(t)$. 

There exists a function $c_7:\mathcal{S}_{\Z q} \mapsto \mathbb{R}_{\geq 0}$ such that for every $(\wt t, t) \in \mathcal{S}_T^2$
$$\left|\tfrac{\partial}{\partial \bt}q_{T\mid\Z}(t,\z;\bt_0) - \tfrac{\partial}{\partial \bt}q_{T\mid\Z}(\wt t, \z;\bt_0)\right| \leq \|t - \wt t\|_1 c_7(\z),$$
with $0<E[\{(1-R) c_7(\Z)\}^2]< +\infty$.

\item \label{B3} Let $\Phi_1(t) \equiv  M_5(t)$, $\Phi_2(t) \equiv
  M_6(t)$, $\Phi_3(t) \equiv  M_5(t)c_5(t)/q_T(t)$, $\Phi_4(t) \equiv
  M_5(T)c_6(t)/q_T(t)$ and $\Phi_5(t) \equiv  M_6(t)c_5(t)/q_T(t)$. It
  follows that
  \bse
&& E[\Phi_i(T)\{1-P_C(T)\}^{-2}\{1-P_T(T)\}^{-5}] < \infty, \\
&& E[\Phi_i^2(T)\{1-P_C(T)\}^{-2}\{1-P_T(T)\}^{-3}] < \infty,\\
&& E[\{1-P_C(T)\}^{-2}] < \infty;
\ese
for $i=1,2,3,4,5$. 

\item \label{B4} The functions $q_{T\mid\Z}(t,\z;\bt_0)$, $\frac{\partial}{\partial
    \bt}q_{T\mid\Z}(t,\z;\bt_0)$, $q_{T\mid\Z}(t,\z;\bt_0)/q_T(t)$ and
  $\tfrac{\partial}{\partial \bt}q_{T\mid\Z}(t,\z;\bt_0)/q_T(t)$ are
  bounded on $\mathcal{S}_T \times (\mathcal{S}_{\Z_p} \cup
  \mathcal{S}_{\Z q})$.  

\item \label{B5} The map $\bt \mapsto E\{\ell(X, \Z, \Delta, R; \bt)\}$ is twice differentiable at $\bt_0$ with a non-singular second derivative matrix.
\end{enumerate}

\begin{Remark}
Condition \ref{B3} guarantees that the classes defined in the proof of
Lemma \ref{lm:2.5} have envelop functions satisfying the Condition (a)
in Lemma~C.1. 
\end{Remark}

\subsubsection{Proof of Lemma~C.1}

\begin{proof}
Conditions A1, (a) and (b1) ensures that the class $\{\varphi\}$
    satisfies the conditions in Theorem 1 of
    \cite{sellero2005uniform}. We introduce an alternative option,
    Condition (b2), to replace Condition (b1). In the proof of Theorem
    1 in \cite{sellero2005uniform}, Condition (b1) was used to apply
    Corollary 5.2.3 and Theorem 5.4.1 from
    \cite{victor1999decoupling}. However, based on Corollary 5.2.9 and
    Remark 5.3.9 from \cite{victor1999decoupling}, Condition (b2) can
    also lead to the same results as in Theorem 1 of
    \cite{sellero2005uniform}.

 The expansion of $\int\varphi(t)d\wh P_T(t)$ in (1) is a
    direct application of Theorem 1 of \cite{stute1995central}.
    Under Conditions (a) and (b), the class $\{\varphi\}$ is Donsker
    by \cite{van1996weak}.  For $\gamma^\varphi_1(x)$, it follows that 
$$\int I{(x<w)} \varphi(w) \gamma_0(w)
    \wt{H}^1(d w) = \int I{(x<w)} \varphi^+(w) \gamma_0(w)
    \wt{H}^1(d w) - \int I{(x<w)} \varphi^-(w) \gamma_0(w)
    \wt{H}^1(d w),$$
where $\varphi^+(w)$ and $\varphi^-(w)$ denote the positive part and
the negative part of $\varphi(w)$ respectively. Hence, the class $\{x
\mapsto  \int I{(x<w)} \varphi(w) \gamma_0(w) 
    \wt{H}^1(d w)\}$ is the difference of two monotone, bounded
    functions by Condition (a). By \cite{van1996weak}[Theorem 2.7.5],
    for every $\varepsilon > 0$,  the $L_2(P)$-bracketing numbers for
    the class $\{x \mapsto  \int I{(x<w)} \varphi(w) \gamma_0(w) 
    \wt{H}^1(d w)\}$ is bounded by $\exp(K\varepsilon^{-1})$, where
    $K$ denotes some constant. Since the class $\{x \mapsto  \int
    I{(x<w)} \varphi(w) \gamma_0(w) 
    \wt{H}^1(d w)\}$ is uniformly bounded, the $L_2(P)$-bracketing
    integral defined in \cite{van1996weak}[Section 2.5.2] is finite,
    which implies the class $\{x \mapsto  \int I{(x<w)} \varphi(w)
    \gamma_0(w)\wt{H}^1(d w)\}$ is Donsker. 
   Multiplying the function ${1}/[{\{1-P_T(x)\}\{1-P_C(x)\}}]$
   preserves the Donsker property because the function is fixed and
   boundedness is maintained by Condition (a). Hence, the class
   $\{\gamma_1^\varphi\}$ is Donsker. 
    
Similarly, $\{\gamma_2^\varphi\}$ can also be expressed as the
difference of two monotone bounded functions, that is, 
$$\gamma^\varphi_2(x) =  \iint \frac{I{(v<x, v<w)} \varphi^+(w)
  \gamma_0(w)}{[\{1-P_T(v)\}\{1-P_C(v)\}]^2} \wt{H}^0(d v) \wt{H}^1(d
w) - \iint \frac{I{(v<x, v<w)} \varphi^-(w)
  \gamma_0(w)}{[\{1-P_T(v)\}\{1-P_C(v)\}]^2} \wt{H}^0(d v) \wt{H}^1(d
w).$$ 
Applying \cite{van1996weak}[Theorem 2.7.5] and by Condition (a),
$$\int_0^{\infty} \sqrt{\log
  \mathcal{N}_{[]}(\varepsilon,\{\gamma_2^\varphi\}, L_2(P))} d
\varepsilon<\infty.$$ Hence, $\{\gamma_2^\varphi\}$ is
Donsker. 
As $\gamma_0$ is a fixed function, the class $\{\eta^\varphi\}$ is
Donsker. Consequently, Results (2) and (3) follow from the properties
of the Donsker class and Corollary 2.3.12 in \cite{van1996weak}.
\end{proof}

\subsubsection{Technical Lemmas}

\begin{Lemma}\label{lm:pgc}
    \begin{enumerate}[label=(\arabic*)]
        Under Conditions \ref{A1}-\ref{A4}, the following holds:
        
        \item the class $\mathcal{F}_1 \equiv \left\{(x, \z, \delta) \mapsto q_{T \mid \Z}(x, \z ; \bt)^{ \delta}: \bt \in \Theta \right\}$ is P-Glivenko-Cantelli,

        \item the class $\mathcal{F}_2 \equiv \left\{(\z,r) \mapsto (1-r)q_{T \mid \Z}(t, \z ; \bt): \bt \in \Theta, t \in \mathcal{S}_T\right\}$ is Q-Glivenko-Cantelli,

        \item the class $\mathcal{F}_3 \equiv \left\{(x, \delta) \mapsto \{\int q_{T \mid \Z}(x, \z ; \bt) d Q_\Z(\z)\}^\delta : \bt \in \Theta \right\}$ is P-Glivenko-Cantelli,

        \item the class $\mathcal{F}_4 \equiv \left\{(x, \z,\delta) \mapsto \delta \log q_{T \mid \Z}(x, \z ; \bt): \bt \in \Theta\right\}$ is P-Glivenko-Cantelli,

        \item the class $\mathcal{F}_5 \equiv \left\{(x,\delta) \mapsto \delta \log \int q_{T \mid \Z}(x, \z ; \bt)dQ_\Z(\z): \bt \in \Theta\right\}$ is P-Glivenko-Cantelli,

        \item the class $\mathcal{F}_6 \equiv \left\{(x, \z, \delta) \mapsto \left \{ \int \frac{I(s>x) q_{T \mid \Z}(s, \z ; \bt)}{\int q_{T \mid \Z}(s, \z ; \bt) d Q_\Z(\z)} d P_T(s) \right \}^{1- \delta }: \bt \in \Theta\right\}$ is P-Glivenko-Cantelli,

        \item the class $\mathcal{F}_7 \equiv \left\{(x, \z, \delta) \mapsto (1-\delta) \log \int \frac{I(s>x) q_{T \mid \Z}(s, \z ; \bt)}{\int q_{T \mid \Z}(s, \z ; \bt) d Q_\Z(\z)} d P_T(s): \bt \in \Theta\right\}$ is P-Glivenko-Cantelli,

        \item the class $\mathcal{F}_8 \equiv \left\{(x, \delta) \mapsto {1}/\left\{\int  q_{T \mid \Z}(x, \z ; \bt) d Q_\Z(\z)\right\}^\delta: \bt \in \Theta  \right\}$ is P-Glivenko-Cantelli,

        \item the class $\mathcal{F}_9 \equiv \left\{(x,\z, \delta) \mapsto {1}/{\left\{\int \frac{I(s>x) q_{T \mid \Z}(s, \z ; \bt)}{\int q_{T \mid \Z}(s, \z ; \bt) d Q_\Z(\z)} d P_T(s)\right\}^{1- \delta } }: \bt \in \Theta \right\}$ is P-Glivenko-Cantelli.

    \end{enumerate}
\end{Lemma}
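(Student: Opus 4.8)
The plan is to verify each of the nine classes is Glivenko-Cantelli by exhibiting an integrable envelope together with finite $L_1$-bracketing numbers at every scale, and then invoking the bracketing Glivenko-Cantelli theorem \citep[Theorem~2.4.1]{van1996weak}. All nine classes are built from one of three parameter-indexed families --- $\mathcal{A}\equiv\{q_{T\mid\Z}(x,\z;\bt):\bt\in\Theta\}$, $\mathcal{B}\equiv\{\int q_{T\mid\Z}(x,\z;\bt)dQ_\Z(\z):\bt\in\Theta\}$ and $\mathcal{C}\equiv\{\int I(s>x)q_{T\mid\Z}(s,\z;\bt)\{\int q_{T\mid\Z}(s,\z;\bt)dQ_\Z(\z)\}^{-1}dP_T(s):\bt\in\Theta\}$ --- by applying a pointwise transformation decorated with the $\{0,1\}$-valued factor $\delta$ or $1-\delta$: namely $g\mapsto\delta g+(1-\delta)$ (which equals $g^{\delta}$), $g\mapsto\delta\log g$, $g\mapsto\delta g^{-1}+(1-\delta)$ (which equals $g^{-\delta}$), and the corresponding $1-\delta$ versions; $\mathcal{F}_2$ is the analogue of $\mathcal{A}$ with $t$ also ranging over $\mathcal{S}_T$, multiplied by the fixed indicator $1-r$. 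So the first step is to control $\mathcal{A}$, $\mathcal{B}$, $\mathcal{C}$, and the second is to push each transformation through, keeping track of envelopes.

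For the underlying families I would use the Lipschitz-in-parameter bracketing bound: $\Theta$ is compact (Condition~A1), and Conditions~A2, A3 and Remark~\ref{rem:c4} (which follows from A4) give that $\bt\mapsto q_{T\mid\Z}(x,\z;\bt)$, $\bt\mapsto\int q_{T\mid\Z}(x,\z;\bt)dQ_\Z(\z)$, and the analogous map for $\mathcal{C}$ are Lipschitz in $\|\bt-\wt\bt\|_1$ with integrable Lipschitz constants $c_1(x,\z)$, $c_3(x)$, $c_4(\z)$; hence brackets centred at the points of an $\eta$-net of $\Theta$ have $L_1$-width $O(\eta)$, so the $L_1$-bracketing numbers are finite at every scale. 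The envelopes are $M_1$, $M_2$, $M_3$ from A2--A4, which are square-integrable hence $L_1$-integrable --- here one uses that the $\delta$, respectively $1-\delta$, factor confines the nontrivial part of each $\mathcal{F}_k$ to $\{\Delta=1\}$ (event time, where the bounds of A2--A3 apply) or $\{\Delta=0\}$ (censoring time, where the bounds of A4 stated in terms of $C$ apply), so that the envelope integrability matches the stated hypotheses exactly. For $\mathcal{F}_2$ one uses instead the joint Lipschitz bound in $(t,\bt)$ with constant $c_2(\z)$ (A2) over the compact set $\mathcal{S}_T\times\Theta$, a $\calQ$-integrable envelope built from that bound together with $E_q\{c_2(\Z)^2\}<\infty$, and the empirical measure of the $\calQ$ sample, yielding a Q-Glivenko-Cantelli class.

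Pushing through the transformations, multiplication by $\delta$ (or $1-\delta$) and addition of the complement only shrinks bracket widths and preserves envelope integrability, which settles $\mathcal{F}_1$, $\mathcal{F}_3$, $\mathcal{F}_6$ once the base-class step is done. The logarithmic classes $\mathcal{F}_4$, $\mathcal{F}_5$, $\mathcal{F}_7$ and the reciprocal classes $\mathcal{F}_8$, $\mathcal{F}_9$ are the delicate ones, since $\log$ and $g\mapsto g^{-1}$ are not globally Lipschitz, so a blanket composition-with-Lipschitz-map permanence theorem does not apply. Here I would exploit the two-sided pointwise sandwiches $m_i\le g\le M_i$ from A2--A4: truncate the Lipschitz brackets $[l_j,u_j]$ of the base family so that $m_i\le l_j\le u_j\le M_i$, and transform them monotonically to $[\log l_j,\log u_j]$, respectively $[u_j^{-1},l_j^{-1}]$; the resulting width is at most $\min\{\log(M_i/m_i),\,2\eta c/m_i\}$, respectively $\min\{1/m_i,\,2\eta c/m_i^{2}\}$, which tends to $0$ pointwise as $\eta\to0$ and is dominated by the $P$-integrable function $|\log m_i|+|\log M_i|$, respectively $1/m_i$ --- whose integrability is exactly $E_p\{1/m_2\}<\infty$ and $E_p\{1/m_3\}<\infty$ from A3 and A4. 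Dominated convergence then drives all $L_1$-bracket widths below any prescribed $\varepsilon$ for $\eta$ small, the envelopes $|\log m_i|+|\log M_i|$ and $\max(1/m_i,1)$ are integrable by A2--A4, and the bracketing theorem closes the argument.

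The main obstacle I anticipate is precisely this last point: the nonlinearity of $\log$ and of the reciprocal near the boundary of the sandwich blocks a routine appeal to ``Glivenko-Cantelli is preserved under Lipschitz transformations'', so one must interleave the Lipschitz-in-parameter bracketing construction with the $\min$-against-a-dominating-function plus dominated-convergence device. This is exactly the place where the log-integrability of $m_i$ and $M_i$ and the integrability of $1/m_2$ and $1/m_3$ in Conditions~A2--A4 are consumed. The remaining work --- matching each class's $\delta$ or $1-\delta$ exponent to the correct censoring stratum so the envelope conditions apply, and checking $\calQ$-integrability of the $\mathcal{F}_2$ envelope --- is routine bookkeeping.
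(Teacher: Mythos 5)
Your proposal is correct, and for the base families $\mathcal{F}_1$--$\mathcal{F}_3$, $\mathcal{F}_6$ it coincides with the paper's argument: compactness of $\Theta$ (resp.\ $\Theta\times\mathcal{S}_T$) gives covering numbers $O(\varepsilon^{-d})$, the Lipschitz-in-parameter bounds with constants $c_1,c_2,c_3,c_4$ convert these into finite $L_1$-bracketing numbers via Theorem~2.7.11 of \cite{van1996weak}, and Theorem~2.4.1 closes the argument. Where you genuinely diverge is in the treatment of the logarithmic and reciprocal classes $\mathcal{F}_4$, $\mathcal{F}_5$, $\mathcal{F}_7$--$\mathcal{F}_9$. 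The paper does not attempt a direct bracketing construction there: it observes that $\log(\cdot)$ and $1/(\cdot)$ are continuous, exhibits the integrable envelopes $\delta(|\log m_i|+|\log M_i|)$ and $1/m_i$ supplied by A2--A4, and invokes the Glivenko--Cantelli preservation theorem \citep[Theorem~3]{van2000preservation}, which upgrades a continuous transformation of a G-C class with integrable envelope to a G-C class. You instead carry the brackets through by hand: truncate them to the pointwise sandwich $[m_i,M_i]$, map them monotonically under $\log$ or the reciprocal, bound the transformed width by $\min$ of a fixed integrable function and an $O(\eta)$ term, and let dominated convergence drive the $L_1$-widths to zero. Both arguments consume exactly the same hypotheses (the log-integrability of $m_i,M_i$ and the integrability of $1/m_2$, $1/m_3$); yours is more self-contained and avoids citing the preservation theorem at the cost of the explicit truncation-plus-DCT device, while the paper's is shorter and needs only continuity of the outer map rather than monotonicity and the two-sided sandwich. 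Your bookkeeping of the $\delta$ versus $1-\delta$ strata (matching the envelope conditions stated at the event time for A2--A3 and at the censoring time for A4) also agrees with the paper. I see no gap.
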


\begin{proof}

Let $\mathcal{N}_{[]}(\varepsilon, \mathcal{F},\|\cdot\|)$ and
$\mathcal{N}(\varepsilon, \mathcal{F},\|\cdot\|)$ denote the
$\varepsilon$-bracketing number and $\varepsilon$-covering number of
some metric space $(\mathcal{F},\|\cdot\|)$. 

 Result (1):
We first show that the class $\mathcal{F}_1$ is
Glivenko-Cantelli.  Let $\Theta_0$ denote a fixed big $\ell_1$-ball
that contains $\Theta$, i.e.
 $\Theta\subset\Theta_0$. Since the $\varepsilon$-covering number of
 $\Theta_0$ is $O(\varepsilon^{-d})$, i.e., $\mathcal{N}(\varepsilon,
 \Theta_0,\|\cdot\|_1)=O(\varepsilon^{-d})$, hence
 $\mathcal{N}(\varepsilon,
 \Theta,\|\cdot\|_1)=O(\varepsilon^{-d})$.
By \cite{van1996weak}[Theorem 2.7.11],
  for the $L_1(P)$-norm, we have $$
\mathcal{N}_{[]}(2 \varepsilon\|{c_1}\|_{L_1(P)}, \mathcal{F}_1, L_1(P))
\leq \mathcal{N}(\varepsilon, \Theta,\|\cdot\|_1)\leq K_1
\varepsilon^{-d}, 
$$
for some $K_1>0$.
Since $0<\|c_1\|_{L_1(P)}<+\infty$ by Condition \ref{A1},
$\mathcal{N}_{[]}\left(\varepsilon, \mathcal{F}_1,
  L_1(P)\right)<+\infty$ for every fixed $\varepsilon>0$. Then,
class $\mathcal{F}_1$ is Glivenko-Cantelli~\citep[{Theorem
  2.4.1}]{van1996weak}.

Result (4):
We next prove the class $\mathcal{F}_4$ is Glivenko-Cantelli.
Note that $\log (\cdot)$ is a continuous mapping. For every $x \in \mathcal{S}_T$, $\bt \in \Theta$, $\delta|\log q_{T \mid \Z}(x, \z ; \bt)|\le\delta|\log\{m_1(x,\z)\}|+\delta|\log\{
M_1(x,\z)\}|$ and $E_p[|\log\{m_1(T,\Z)\}|+|\log
\{M_1(T,\Z)\}|]$ is finite by Condition A2.
Thus
$\delta|\log\{m_1(x,\z)\}|+\delta|\log\{
M_1(x,\z)\}|$ is an envelope function of
${\cal F}_4$ and it is integrable.  Then applying \cite{van2000preservation}[Theorem 3] and Result (1),
${\cal F}_4$ is P-Glivenko-Cantelli.

Result (2):
The class $\mathcal{F}_2$ is indexed by $(\bt\trans,t)\trans \in \Theta\times\mathcal{S}_T \subset \Theta_0 \times \mathcal{S}_{T0}$ where $\Theta_0 \times \mathcal{S}_{T0}$ denotes a fixed big $\ell_1$-ball that contains $\Theta\times\mathcal{S}_T$. Since the $\varepsilon$-covering number of $\Theta_0 \times \mathcal{S}_{T0}$ is $O(\varepsilon^{-d-1})$, $\mathcal{N}(\varepsilon, \Theta \times \mathcal{S}_{T},\|\cdot\|_1)=O(\varepsilon^{-d-1})$. Similar to Result (1), we have $$
\mathcal{N}_{[]}(2 \varepsilon\|c_2\|_{L_1(Q)}, \mathcal{F}_2, L_1(Q)) \leq \mathcal{N}(\varepsilon, \Theta \times \mathcal{S}_{T},\|\cdot\|_1)\leq K_2 \varepsilon^{-d-1},
$$
for some $K_2>0$. Since $0<\|c_2 \|_{L_1(Q)}<+\infty$ by Condition A2, $\mathcal{N}_{[]}\left(\varepsilon, \mathcal{F}_1, L_1(Q)\right)<+\infty$ for every fixed $\varepsilon>0$. Hence, $\mathcal{F}_2$ is Q-Glivenko-Cantelli.

Result (3):
As $\mathcal{F}_3$ is indexed by $\bt \in \Theta$, applying
the same argument as in Result (1) we have $$
\mathcal{N}_{[]}(2 \varepsilon\|c_3\|_{L_1(P)}, \mathcal{F}_3, L_1(P)) \leq \mathcal{N}(\varepsilon, \Theta,\|\cdot\|_1)\leq K_1 \varepsilon^{-d}.
$$
 Since $0<\|c_3\|_{L_1(P)}<+\infty$ by Condition A3,
 $\mathcal{N}_{[]}\left(\varepsilon, \mathcal{F}_3,
   L_1(P)\right)<+\infty$ for every fixed $\varepsilon>0$. Hence $\mathcal{F}_3$ is Glivenko-Cantelli.

Result (5):
The proof follows the same line as that for Result (4).
First, $\log(\cdot)$ is a continuous mapping.
For every $x\in \mathcal{S}_T$ and $\bt \in \Theta$, $\delta | \log \{\int
q_{T|\Z}(x,\z;\bt)d Q_\Z(\z)\}| \leq \delta| \log\{ M_2(x)\} | + \delta| \log
\{m_2(x)\} |$ and $E_p[\Delta | \log\{ M_2(X)\} | +\Delta  | \log \{m_2(X)\} |]$ is
finite by Condition A3. Thus the class $\mathcal{F}_5$ has
integrable envelope function $\delta| \log\{ M_2(x)\} | + \delta| \log
\{m_2(x)\} |$. Again by \cite{van2000preservation}[Theorem 3] 
  and Result (3), $\mathcal{F}_5$ is also
P-Glivenko-Cantelli. 

Result (8): The proof follows the same line as that for Result (5).
The class $\mathcal{F}_8$ is a continuous transformation of
$\mathcal{F}_3$. For every $x \in \mathcal{S}_T$ and $\bt \in \Theta$,
$| 1/ \{\int q_{T|\Z}(x,\z;\bt)d Q_\Z(\z)\}^\delta| \leq 1/m_2(x)^\delta$ and
$E_p\{1/m_2(T)\}$ is finite by Condition A3. Thus $1/m_2(x)^\delta$ is an
integrable envelope function of the class $\mathcal{F}_8$. Then the
class $\mathcal{F}_8$ is P-Glivenko-Cantelli. 

Result (6): 
The class $\mathcal{F}_6$ is indexed by $\bt \in \Theta$. Then
applying the same derivation as in
Result (1), using the Lipschitz condition derived in Remark \ref{rem:c4} and \cite{van1996weak}[Theorem 2.7.11],
  for the $L_1(P)$-norm, we have $$
\mathcal{N}_{[]}(2 \varepsilon\|c_4\|_{L_1(P)}, \mathcal{F}_6, L_1(P))
\leq \mathcal{N}(\varepsilon, \Theta,\|\cdot\|_1)\leq K_1
\varepsilon^{-d}.
$$
By Condition A4, since $0<\|c_4\|_{L_1(P)}<+\infty$, $\mathcal{N}_{[]}\left(\varepsilon, \mathcal{F}_6,
  L_1(P)\right)<+\infty$ for every fixed $\varepsilon>0$. Hence
$\mathcal{F}_6$ is Glivenko-Cantelli.  

Result (7):
The proof follows the same line as that for Result (5). 
 For every $x \in \mathcal{S}_T$, $\z \in
\mathcal{S}_{\Z p}$ and $\bt \in \Theta$, $(1-\delta)\left| \log \int \frac{I(s>
    x) q_{T \mid \Z}(s, \z ; \bt)}{\int q_{T \mid \Z}(s, \z ; \bt) d
    Q_\Z(\z)} d P_T(s) \right| \leq
(1-\delta)|\log\{m_3(x,\z)\}|+(1-\delta)|\log\{M_3(x,\z)\}|$ and
$E_p[|\log\{m_3(C,\Z)\}|+|\log\{M_3(C,\Z)\}|]$ is finite by Condition
A4. Hence, $(1-\delta)|\log\{m_3(x,\z)\}|+(1-\delta)|\log\{M_3(x,\z)\}|$ is an integrable
envelope function of the class $\mathcal{F}_7$. The class
$\mathcal{F}_7$ is Glivenko-Cantelli by
\cite{van2000preservation}[Theorem 3]. 

Result (9): 
The proof follows the same line as that for Result (8).
The class $\mathcal{F}_9$ is a continuous transformation of the class
$\mathcal{F}_6$. For every $x \in \mathcal{S}_T$, $\z \in
\mathcal{S}_{\Z p}$ and $\bt \in \Theta$, $\left|{1}/{\int \frac{I(s>
x) q_{T \mid \Z}(s, \z ; \bt)}{\int q_{T \mid \Z}(s, \z ; \bt) d
      Q_\Z(\z)} d P_T(s)}\right| \leq 1/m_3(x,\z)$ and $E_p\{
1/m_3(C,\Z)\}$ is finite from Condition A4. Then $1/m_3(x,\z)^{(1-\delta)}$ is an
integrable envelope function of the class $\mathcal{F}_9$. Hence,
$\mathcal{F}_9$ is Glivenko-Cantelli~\citep[Theorem
3]{van2000preservation}.

\end{proof}

\begin{Lemma}\label{lem2.2}
Under A1-A5,  we have
     $$\sup_{\bt \in \Theta}|\ell_n (\bt)-E\{\ell(X, \Z, \Delta, R ; \bt)\}| \povr 0.$$
\end{Lemma}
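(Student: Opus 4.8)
The plan is to decompose $\ell_n(\bt)$ into the three sample averages produced by the three summands of~(\ref{eq:llh}) --- the uncensored log-density term, the normalizing-constant term, and the censored Kaplan--Meier-integral term --- and to show that each converges, uniformly over $\bt\in\Theta$ and in probability, to the matching term of $E\{\ell(X,\Z,\Delta,R;\bt)\}$; the claim then follows by the triangle inequality. The first average, $n_1^{-1}\sum_{i=1}^{n_1}\delta_i\log q_{T\mid \Z}(x_i,\z_i;\bt)$, is an empirical process over $\calP$ indexed by the class $\mathcal{F}_4$, which is $P$-Glivenko--Cantelli by Lemma~\ref{lm:pgc}(4); hence it converges uniformly in $\bt$, and no plug-in estimators are involved.

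For the second term the plug-in $\wh Q_\Z$ enters through $\int q_{T\mid \Z}(x_i,\z;\bt)\,d\wh Q_\Z(\z)$, and I would split its error into (i) replacing $\wh Q_\Z$ by $Q_\Z$ and (ii) the remaining average against the true measure. For (i), Lemma~\ref{lm:pgc}(2) ($\mathcal{F}_2$ is $Q$-Glivenko--Cantelli, with index set $\Theta\times\mathcal{S}_T$) gives $\varepsilon_n\equiv\sup_{\bt\in\Theta,\,t\in\mathcal{S}_T}\big|\int q_{T\mid \Z}(t,\z;\bt)\,d\wh Q_\Z(\z)-\int q_{T\mid \Z}(t,\z;\bt)\,dQ_\Z(\z)\big|\povr 0$, and the logarithm of the perturbed denominator is controlled using the lower bound $\int q_{T\mid \Z}(x,\z;\bt)\,dQ_\Z(\z)\ge m_2(x)$ together with $E_p\{1/m_2(T)\}<\infty$ and $E_p\{|\log m_2(T)|\}<\infty$ from Condition~\ref{A3}; this makes the supremum over $\bt$ of the error from (i) equal to $o_p(1)$. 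For (ii), $n_1^{-1}\sum_{i=1}^{n_1}\delta_i\log\int q_{T\mid \Z}(x_i,\z;\bt)\,dQ_\Z(\z)$ is again a $P$-Glivenko--Cantelli average over $\calP$, now of $\mathcal{F}_5$ (Lemma~\ref{lm:pgc}(5)), so it converges uniformly in $\bt$.

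The third term is the crux, because the $\wh Q_\Z$-dependent integrand $\varphi_{\bt,x_i,\z_i}(t)=I(t>x_i)q_{T\mid \Z}(t,\z_i;\bt)/\int q_{T\mid \Z}(t,\z;\bt)\,d\wh Q_\Z(\z)$ lives inside a Kaplan--Meier integral against $\wh P_T$. Here I would perform a two-stage replacement followed by a Glivenko--Cantelli step. Stage (i): replace $\wh Q_\Z$ by $Q_\Z$ in the denominator of $\varphi$; the resulting integrand difference is bounded, uniformly in $(\bt,x,\z)$, by $\varepsilon_n$ times (a multiple, depending on the lower bound $m_2$, of) the envelope $M_4$ of Condition~\ref{A5}, whose Kaplan--Meier integral against $\wh P_T$ is $O_p(1)$ by a strong law for Kaplan--Meier integrals and $E_p\{M_4(T)\}<\infty$. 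Stage (ii): replace $\wh P_T$ by $P_T$; the class $\{\varphi_{\bt,x,\z}\}$ with the true-denominator integrand is exactly the class in Condition~\ref{A6}, hence satisfies the hypotheses of Lemma~\ref{lm:c1}, and Lemma~\ref{lm:c1}(2) yields $\sup_{\bt,x,\z}\big|\int\varphi_{\bt,x,\z}(t)\,d\{\wh P_T(t)-P_T(t)\}\big|=O_p(n_1^{-1/2})\povr 0$. In both stages the outer logarithm is handled as in the second term, now using the lower bound $m_3(x,\z)$ for the censored integral and $E_p\{1/m_3(C,\Z)\}<\infty$, $E_p\{|\log m_3(C,\Z)|\}<\infty$ from Condition~\ref{A4} (recall that for a censored observation $x_i$ is the censoring time, matching the way those conditions are stated). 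What remains is $n_1^{-1}\sum_{i=1}^{n_1}(1-\delta_i)\log\int \frac{I(t>x_i)\,q_{T\mid \Z}(t,\z_i;\bt)}{\int q_{T\mid \Z}(t,\z;\bt)\,dQ_\Z(\z)}\,dP_T(t)$, a $P$-Glivenko--Cantelli average over $\calP$ of $\mathcal{F}_7$ (Lemma~\ref{lm:pgc}(7)), which converges uniformly in $\bt$.

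The step I expect to be genuinely hard is Stages (i)--(ii) of the third term: making it rigorous that a function class indexed by the multivariate empirical measure $\wh Q_\Z$ can be pushed through the Kaplan--Meier integral based on $\wh P_T$, uniformly over the infinite, structurally unconstrained index $(\bt,x,\z)$. This is precisely where the covering- and bracketing-number refinement of \cite{sellero2005uniform} encapsulated in Lemma~\ref{lm:c1}, and the envelope/moment Conditions~\ref{A5}--\ref{A6}, are needed; once both $\wh Q_\Z$ and $\wh P_T$ have been traded for their population counterparts, the remaining uniform control is routine, following from the Glivenko--Cantelli classes of Lemma~\ref{lm:pgc} and the integrability conditions in \ref{A2}--\ref{A4}.
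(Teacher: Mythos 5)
Your proposal is correct and follows essentially the same route as the paper: the same three-term decomposition, the same Glivenko--Cantelli classes from Lemma~\ref{lm:pgc} for the true-measure averages, the same mean-value/lower-bound control of the logarithms via $m_2$, $m_3$ and the moment conditions in A3--A4, and the same two-stage replacement (first $\wh Q_\Z\to Q_\Z$ via the envelope $M_4$ and the strong law for Kaplan--Meier integrals, then $\wh P_T\to P_T$ via Lemma~\ref{lm:c1} under Condition A6) for the censored term. No gaps.
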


\begin{proof}
First, we divide $\ell_n(\cdot)$ and $\ell(\cdot)$ into three parts
for ease of proof, using the following notation: 
    \bse
    \ell(x,\z, \delta,r; \bt)
    &=& \ell_1(x, \z ,\delta,r ; \bt)-\ell_2(x ,\delta,r; \bt)+\ell_3(x, \z ,\delta,r ; \bt)\\
&=&r\delta \log \{q_{T\mid \Z}(x,\z;\bt)\}
-r\delta
\log\int q_\Z(\z) q_{T\mid \Z}(x,\z;\bt)d\z
+
r(1-\delta)\log
\int_x^{\infty}\frac{ p_T(t)q_{T\mid \Z}(t,\z;\bt)}{
\int q_\Z(\z)q_{T\mid \Z}(t,\z;\bt)d\z}dt,
\ese
and
\bse
\ell_n(\bt) &=&  \ell_{1n}(\bt) - \ell_{2n}(\bt) + \ell_{3n}(\bt)\\
&=&
n_1^{-1}\sum_{i=1}^{n_1}
r_i\delta_i \log\{q_{T\mid \Z}(x_i,\z_i;\bt)\}
-n_1^{-1}\sum_{i=1}^{n_1} r_i \delta_i
\log \int q_{T\mid \Z}(x_i,\z;\bt)d \wh Q_{\Z}(\z)\\
&&+n_1^{-1}\sum_{i=1}^{n_1} r_i (1-\delta_i)\log
\left(\int \frac{ I(t>x_i)
  q_{T\mid \Z}(t,\z_i;\bt)}{
\int q_{T\mid \Z}(t,\z;\bt)d \wh Q_\Z(\z)}d \wh P_T(t)\right),
\ese
where $\wh P_T(t)$ denotes the Kaplan-Meier estimator of
$P_T(t)$.  
Then the objective function can be separated into three parts
\bse
&&\sup_{\bt \in \Theta}|\ell_n(\bt)-E\{\ell(X, \Z, \Delta, R ; \bt)\}| \\ & \leq & \sup _{\bt \in \Theta}[|\ell_{1n}(\bt)-E\{\ell_1(X, \Z, \Delta,R ; \bt)\}|+|\ell_{2n}(\bt)-E\{\ell_2(X, \Delta,R ; \bt)\}|+|\ell_{3n}(\bt)-E\{\ell_3(X,\Z,\Delta,R ; \bt)\}|].
\ese
To simplify the proof, we introduce the  notation $q_T(t;\bt) \equiv
\int q_\Z(\z)q_{T\mid \Z}(t,\z;\bt)d\z$ and $\wh q_T(t;\bt) \equiv
\int q_{T\mid \Z}(t,\z;\bt)d \wh Q_\Z(\z)$. 
From Lemma \ref{lm:pgc} Result (4), 
$\sup_{\bt \in \Theta}\left|\ell_{1n}(\bt)-E\left\{\ell_1(X, \Z, \Delta,R ; \bt)\right\}\right| \povr 0$.

Regarding $\ell_{2n}(\bt)$, it follows that 
\bse
&& \sup_{\bt \in \Theta}\left|\ell_{2n}(\bt)-E\left\{\ell_2(X, \Delta, R ; \bt)\right\}\right| \\
&\leq& \sup _{\bt \in \Theta}\left[\left|\ell_{2n}(\bt)-n_1^{-1}\sum_{i=1}^{n_1} \ell_2(X_i, \Delta_i, R_i ; \bt)\right|+ \left | n_1^{-1}\sum_{i=1}^{n_1} \ell_2(X_i, \Delta_i, R_i; \bt)-E\left\{\ell_2(X, \Delta, R ; \bt)\right\}\right|\right].
\ese
According to Result (5) in Lemma \ref{lm:pgc}, the second term goes to
zero, i.e.,
$$
\sup _{\bt \in \Theta}\left|n_1^{-1}\sum_{i=1}^{n_1} \ell_2(X_i, \Delta_i,
  R_i ; \bt)-E\left\{\ell_2(X, \Delta, R ; \bt)\right\}\right| \povr
0. 
$$
For the first term, we have that
\bse
&&\sup _{\bt \in \Theta}\left|\ell_{2n}(\bt)-n_1^{-1}\sum_{i=1}^{n_1}
  \ell_2(X_i, \Delta_i, R_i ; \bt)\right| \\ 
& = & \sup _{\bt \in \Theta} \left| n_1^{-1}\sum_{i=1}^{n_1} R_i
  \Delta_i\left\{\log \wh q_T(X_i;\bt)-\log q_T(X_i;\bt)\right\}
\right| \\ 
 & = & \sup_{\bt \in \Theta} \left| n_1^{-1}\sum_{i=1}^{n_1} \frac{\wh
     q_T(X_i;\bt)^{R_i \Delta_i}-q_T(X_i;\bt)^{R_i
       \Delta_i}}{q_T(X_i;\bt)^{R_i \Delta_i}+\xi_{1i} \left\{\wh
       q_T(X_i;\bt)^{R_i \Delta_i}-q_T(X_i;\bt)^{R_i
         \Delta_i}\right\}} \right| \\ 
& \leq &\sup_{\bt \in \Theta} \left|n_1^{-1}\sum_{i=1}^{n_1}\frac{q_T(X_i;\bt)^{R_i \Delta_i}}{q_T(X_i;\bt)^{R_i
      \Delta_i}+\xi_{1i} \left\{\wh q_T(X_i;\bt)^{R_i
        \Delta_i}-q_T(X_i;\bt)^{R_i
        \Delta_i}\right\}}\frac{1}{q_T(X_i;\bt)^{R_i \Delta_i}}
\right| \\ && \times \sup _{\bt \in \Theta, t \in
  \mathcal{S}_T}\left|\wh q_T(t;\bt)-q_T(t;\bt)\right|,  
\ese
where $\xi_{1i} \in (0,1)$ for $i = 1,\ldots,N$. The second equation
holds due to the mean-value theorem. From Result (2) of Lemma
  \ref{lm:pgc}, $$\sup _{\bt \in \Theta, t \in \mathcal{S}_T}\left|\wh
    q_T(t;\bt)-q_T(t;\bt)\right| \povr 0. $$ 
$E[|{1}/{q_T(X;\bt)^{R \Delta}} |] < +\infty$ by Condition A3.
  Condition A2,  Result (2) and Result (8)
  of Lemma \ref{lm:pgc} imply that $$
  \sup_{x,\bt,r,\delta,\xi_1}\left|\frac{q_T(x;\bt)^{r
        \delta}}{q_T(x;\bt)^{r \delta}+\xi_1 \left\{\wh q_T(x;\bt)^{r
          \delta}-q_T(x;\bt)^{r \delta}\right\}} \right|=O_p(1),$$ 
and $$\sup _{\bt \in \Theta} \left| n_1^{-1}\sum_{i=1}^{n_1}
  \frac{1}{q_T(X_i;\bt)^{R_i \Delta_i}} - E\left\{
    \frac{1}{q_T(X;\bt)^{R \Delta}} \right\} \right| \povr 0.$$ 
  Hence, $\sup _{\bt \in \Theta}\left|\ell_{2n}(\bt)-n_1^{-1}\sum_{i=1}^{n_1}\ell_2(X_i, \Delta_i, R_i ;
    \bt)\right| \povr 0$.

Lastly, we proceed to prove $\sup _{\bt \in \Theta}\left|\ell_{3
    n}(\bt)-E\left\{\ell_3(X,\Z, \Delta,R ; \bt)\right\}\right| \povr
0$, 
\bse
&&\sup _{\bt \in \Theta}\left|\ell_{3 n}(\bt)-E\left\{\ell_3(X,\Z, \Delta,R ; \bt)\right\}\right| \\
& \leq & \sup _{\bt \in \Theta}  \left|  n_1^{-1}\sum_{i=1}^{n_1} R_i (1-\Delta_i) \left\{\log
\int_{t>X_i}\frac{q_{T\mid \Z}(t,\Z_i;\bt)}{
\wh q_T(t;\bt)}d \wh P_T(t) -\log
\int_{t>X_i}\frac{q_{T\mid \Z}(t,\Z_i;\bt)}{
q_T(t;\bt)}dP_T(t) \right\} \right |\\
&& +\sup_{\bt \in \Theta} \left| n_1^{-1}\sum_{i=1}^{n_1} R_i (1-\Delta_i)\log
\int_{t>X_i}\frac{q_{T\mid \Z}(t,\Z_i;\bt)}{
q_T(t;\bt)}dP_T(t) -  E \left\{ R (1-\Delta)\log
\int_{t>X}\frac{q_{T\mid \Z}(t,\Z;\bt)}{
q_T(t;\bt)}dP_T(t) \right\} \right|.
\ese  
The second term goes to 0 in probability according to Result (7) in
Lemma \ref{lm:pgc}.  The first term satisfies
\be
\notag && \sup _{\bt \in \Theta}  \left|  n_1^{-1}\sum_{i=1}^{n_1} R_i (1-\Delta_i)\left \{ \log
\int_{t>X_i}\frac{q_{T\mid \Z}(t,\Z_i;\bt)}{
\wh q_T(t;\bt)}d \wh P_T(t) - \log
\int_{t>X_i}\frac{q_{T\mid \Z}(t,\Z_i;\bt)}{
q_T(t;\bt)}dP_T(t) \right \} \right |\\ 
\notag &\leq & \sup_{\bt \in \Theta} \left| n_1^{-1}\sum_{i=1}^{n_1} \frac{ \left\{\int_{t>X_i}\frac{q_{T\mid \Z}(t,\Z_i;\bt)}{
q_T(t;\bt)}dP_T(t) \right\}^{R_i(1-\Delta_i)}}{
\xi_{2i} \left\{ \int_{t>X_i}\frac{q_{T\mid \Z}(t,\Z_i;\bt)}{
 \wh q_T(t;\bt)}d \wh P_T(t) \right\}^{R_i(1-\Delta_i)} + (1-\xi_{2i}) \left\{
\int_{t>X_i}\frac{q_{T\mid \Z}(t,\Z_i;\bt)}{   
q_T(t;\bt)}dP_T(t)\right\}^{R_i(1-\Delta_i)}} \right. \\ 
\notag && \left. \times \frac{1}{\left\{\int_{t>X_i}\frac{q_{T\mid \Z}(t,\Z_i;\bt)}{
q_T(t;\bt)}dP_T(t)\right\}^{R_i(1-\Delta_i)}} \right|  \times \sup _{\bt, x, \z} \left| 
\int_{t>x}\frac{q_{T\mid \Z}(t,\z;\bt)}{
\wh q_T(t;\bt)}d \wh P_T(t)-
\int_{t>x}\frac{q_{T\mid \Z}(t,\z;\bt)}{
q_T(t;\bt)}dP_T(t) \right |\\ \label{eq2.3}
\ee
where $\xi_{2i} \in (0,1)$ for $i = 1,\ldots,N$. Now,
\be
\notag &&\sup_{\bt, x, \z} \left| 
\int_{t>x}\frac{q_{T\mid \Z}(t,\z;\bt)}{
\wh q_T(t;\bt)}d \wh P_T(t)-
\int_{t>x}\frac{q_{T\mid \Z}(t,\z;\bt)}{
q_T(t;\bt)}dP_T(t) \right |\\
&\leq& \sup _{\bt, x, \z} \left[ \left| 
\int_{t>x} \left \{ \frac{q_{T\mid \Z}(t,\z;\bt)}{
\wh q_T(t;\bt)}-\frac{q_{T\mid \Z}(t,\z;\bt)}{
q_T(t;\bt)} \right\} d \wh P_T(t) \right | + \left | \int_{t>x} \frac{ q_{T\mid \Z}(t,\z;\bt)}{
q_T(t;\bt)}d \{\wh P_T(t) - P_T(t) \} \right | \right]. \label{eq2.4}
\ee

For the first term, we have
\bse
&& \sup _{\bt, x, \z} \left| 
\int_{t>x}\left\{\frac{q_{T\mid \Z}(t,\z;\bt)}{
\wh q_T(t;\bt)}-\frac{q_{T\mid \Z}(t,\z;\bt)}{
q_T(t;\bt)} \right\} d   \wh P_T(t)\right | \\
&\leq &  
\int\sup _{\bt,\z} \left| \frac{ q_{T\mid \Z}(t,\z;\bt)}{
q_T(t;\bt)\wh q_T(t;\bt)} \{\wh q_T(t;\bt) -q_T(t;\bt) \} \right | d\wh P_T(t)\\
&\leq &   
\sup_{\bt,t}|\wh q_T(t;\bt) -q_T(t;\bt)| \times \int \sup_{\bt, \z} \left| \frac{ q_{T\mid \Z}(t,\z;\bt)}{
q_T(t;\bt)\wh q_T(t;\bt)}  \right | d \wh P_T(t)\\
&\leq & \sup_{\bt,t}|\wh q_T(t;\bt) -q_T(t;\bt)| \times
\sup_{\bt,t}\left|\frac{q_T(t;\bt)}{\wh q_T(t;\bt)}\right| \times \int
 M_4(t)   d \wh P_T(t).
\ese
The last inequality holds from
Conditions A3 and A5. By Conditions A3 and A6 and applying Theorem 1.1
of \cite{stute1993strong}, $\int M_4(t)d\wh P_T(t)$ goes to
$\int M_4(t) d P_T(t)$ in probability as $n\rightarrow \infty$,
where $\int M_4(t) d P_T(t)<+\infty$.
By Result (2) in Lemma
\ref{lm:pgc}, $\sup_{\bt,t}|\wh q_T(t;\bt) -q_T(t;\bt)|=o_p(1)$ and
$\sup_{\bt,t}\left|{q_T(t;\bt)}/{\wh q_T(t;\bt)}\right|=O_p(1)$.
Hence, we have $$\sup _{\bt, x, \z} \left| 
\int_{t>x}\left \{ \frac{q_{T\mid \Z}(t,\z;\bt)}{
\wh q_T(t;\bt)}-\frac{q_{T\mid \Z}(t,\z;\bt)}{
q_T(t;\bt)}\right\} d  \wh P_T(t) \right | \povr 0.$$

  From Lemma~C.1, the second term in (\ref{eq2.4}) satisfies
  $$\sup_{\bt,x,\z}\left | \int_{t>x} \frac{ q_{T\mid
        \Z}(t,\z;\bt)}{q_T(t;\bt)}d \{\wh P_T(t) - P_T(t) \} \right |
  \povr 0.$$
Thus, the last multiplier in \eqref{eq2.3} goes to zero.
To handle the first multiplier in (\ref{eq2.3}), 
by Result (6) and Result (9) of Lemma \ref{lm:pgc}, we have $$  \sup_{x,\z,\bt,\delta,r,\xi_2}\left|\frac{ \left\{\int_{t>x}\frac{q_{T\mid \Z}(t,\z;\bt)}{
q_T(t;\bt)}dP_T(t) \right\}^{r(1-\delta)}}{
\xi_{2} \left\{ \int_{t>x}\frac{q_{T\mid \Z}(t,\z;\bt)}{
 \wh q_T(t;\bt)}d \wh P_T(t) \right\}^{r(1-\delta)} + (1-\xi_{2}) \left\{
\int_{t>x}\frac{q_{T\mid \Z}(t,\z;\bt)}{
q_T(t;\bt)}dP_T(t)\right\}^{r(1-\delta)}} \right|= O_p(1),$$ and
$$
\sup_{\bt \in \Theta} \left|n_1^{-1}\sum_{i=1}^{n_1}  \frac{1}{\left\{\int_{t>X_i}\frac{q_{T\mid \Z}(t,\Z_i;\bt)}{
q_T(t;\bt)}dP_T(t)\right\}^{R_i(1-\Delta_i)}} - E\left[ \frac{1}{\left\{\int_{t>X}\frac{q_{T\mid \Z}(t,\Z;\bt)}{
q_T(t;\bt)}dP_T(t)\right\}^{R_i(1-\Delta_i)}} \right] \right| \povr 0,
$$
and 
$$E\left[ \frac{1}{\{\int_{t>X}\frac{q_{T\mid \Z}(t,\Z;\bt)}{
q_T(t;\bt)}dP_T(t)\}^{R(1-\Delta)}} \right]$$ is finite by Condition
A4, hence the first multiplier is bounded.
Hence, $\sup_{\bt \in \Theta}|\ell_{3 n}(\bt)-E\{\ell_3(X,\Z,\Delta,R ; \bt)\}| \povr 0$.
Therefore, we have $\sup_{\bt \in \Theta}|\ell_n (\bt)-E\{\ell(X, \Z, \Delta ; \bt)\}| \povr 0$
\end{proof}

  For any function $g(\cdot, h)$, we write its $k$th Gateaux derivative with respect to $h$ at $h_1$ in the direction $h_2 - h_1$ as
$$
\left.\frac{\partial^k g\left(\cdot, h_1\right)}{\partial h^k}\left[h_2-h_1\right] \equiv \frac{\partial^k g\{\cdot, h_1+\varepsilon(h_2 - h_1)\}}{\partial \varepsilon^k}\right|_{\varepsilon=0}.
$$

\begin{Lemma}
\label{lm:2.4}

Under Conditions \ref{A2} and \ref{B2}, it follows that
\begin{enumerate}[label=(\arabic*)]
     \item the class $\mathcal{F}_{10} \equiv \left\{\z \mapsto q_{T
           \mid \Z}(t, \z ; \bt_0): t \in \mathcal{S}_T\right\}$ is
       Q-Donsker, hence $\|\wh q_T(t) - q_T(t)\|_\infty=o_p(n_2^{-1/4})$,
        \item the class $\mathcal{F}_{11} \equiv \left\{\z \mapsto
            \tfrac{\partial}{\partial \bt}q_{T \mid \Z}(t, \z ; \bt_0):
            t \in \mathcal{S}_T\right\}$ is Q-Donsker, hence
             $\|\wh q^*_T(t) -
            q^*_T(t)\|_\infty=o_p(n_2^{-1/4})$.

\end{enumerate}
\end{Lemma}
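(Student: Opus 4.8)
The plan is to reduce both assertions to a single empirical–process statement. Writing $\mathbb{G}_{n_2}\equiv\sqrt{n_2}(\wh Q_\Z-Q_\Z)$ for the $\calQ$-empirical process (based on the $n_2$ i.i.d.\ draws from $Q_\Z$ defining $\wh Q_\Z$), one has, for every $t$,
$\sqrt{n_2}\{\wh q_T(t)-q_T(t)\}=\mathbb{G}_{n_2}\{q_{T\mid\Z}(t,\cdot\,;\bt_0)\}$ and $\sqrt{n_2}\{\wh q^*_T(t)-q^*_T(t)\}=\mathbb{G}_{n_2}\{\tfrac{\partial}{\partial\bt}q_{T\mid\Z}(t,\cdot\,;\bt_0)\}$, so these differences are exactly $\mathbb{G}_{n_2}$ evaluated at the members of $\mathcal{F}_{10}$ and $\mathcal{F}_{11}$. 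Hence, once $\mathcal{F}_{10}$ (resp.\ $\mathcal{F}_{11}$) is shown to be $Q$-Donsker, the process $\{\mathbb{G}_{n_2}f:f\in\mathcal{F}_{10}\}$ is asymptotically tight in $\ell^\infty(\mathcal{S}_T)$, so $\sup_{t\in\mathcal{S}_T}|\mathbb{G}_{n_2}\{q_{T\mid\Z}(t,\cdot\,;\bt_0)\}|=O_p(1)$, which yields $\|\wh q_T-q_T\|_\infty=O_p(n_2^{-1/2})=o_p(n_2^{-1/4})$, and likewise for the starred version. So the whole task is to verify the Donsker property of the two classes.

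For $\mathcal{F}_{10}$, the key point is that taking $\bt=\wt\bt=\bt_0$ in the last Lipschitz inequality of Condition~\ref{A2} gives $|q_{T\mid\Z}(t,\z;\bt_0)-q_{T\mid\Z}(\wt t,\z;\bt_0)|\le|t-\wt t|\,c_2(\z)$ uniformly in $\z\in\mathcal{S}_{\Z q}$, with $\|c_2\|_{L_2(Q)}<\infty$ by the same condition; thus $\mathcal{F}_{10}$ is a Lipschitz-parametrized class indexed by the scalar $t$. Since $\mathcal{S}_T$ is compact (Condition~\ref{A1}) we have $\mathcal{N}(\varepsilon,\mathcal{S}_T,|\cdot|)=O(\varepsilon^{-1})$, and $q_{T\mid\Z}(\cdot,\cdot\,;\bt_0)$ is bounded on $\mathcal{S}_T\times\mathcal{S}_{\Z q}$ by Condition~\ref{B4}, so $\mathcal{F}_{10}\subset L_2(Q)$ with a bounded envelope. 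Applying \cite{van1996weak}[Theorem~2.7.11] then gives $\mathcal{N}_{[]}(2\varepsilon\|c_2\|_{L_2(Q)},\mathcal{F}_{10},L_2(Q))\le\mathcal{N}(\varepsilon,\mathcal{S}_T,|\cdot|)=O(\varepsilon^{-1})$, so the bracketing entropy integral $\int_0^\infty\sqrt{\log\mathcal{N}_{[]}(\varepsilon,\mathcal{F}_{10},L_2(Q))}\,d\varepsilon$ is finite, and $\mathcal{F}_{10}$ is $Q$-Donsker by the bracketing central limit theorem \cite{van1996weak}[Theorem~2.5.6]. The rate claim for $\|\wh q_T-q_T\|_\infty$ then follows from the reduction above.

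For $\mathcal{F}_{11}$ I would argue coordinatewise. Condition~\ref{B2} provides, for each coordinate $j$, $|\tfrac{\partial}{\partial\theta_j}q_{T\mid\Z}(t,\z;\bt_0)-\tfrac{\partial}{\partial\theta_j}q_{T\mid\Z}(\wt t,\z;\bt_0)|\le|t-\wt t|\,c_7(\z)$ with $\|c_7\|_{L_2(Q)}<\infty$ (this is what $E[\{(1-R)c_7(\Z)\}^2]<\infty$ encodes), and Condition~\ref{B4} again supplies a bounded $L_2(Q)$-envelope. The identical bracketing argument shows each coordinate class is $Q$-Donsker, and a finite Cartesian product of $Q$-Donsker classes is $Q$-Donsker, so $\mathcal{F}_{11}$ is $Q$-Donsker and $\|\wh q^*_T-q^*_T\|_\infty=O_p(n_2^{-1/2})=o_p(n_2^{-1/4})$ follows as before.

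There is no deep obstacle here; the one point requiring care is that the argument needs $L_2(Q)$-bracketing control rather than the mere $L_1$-integrability used for the Glivenko--Cantelli classes in Lemma~\ref{lm:pgc}, which is exactly why Conditions~\ref{A2} and~\ref{B2} impose square-integrability (under $\calQ$) of the Lipschitz slopes $c_2$ and $c_7$. Once that is in place, compactness of $\mathcal{S}_T$ makes the effective index set one-dimensional and the bracketing entropy integral converges trivially. A minor measurability remark—the classes are separable because $t\mapsto q_{T\mid\Z}(t,\cdot\,;\bt_0)$ and $t\mapsto\tfrac{\partial}{\partial\bt}q_{T\mid\Z}(t,\cdot\,;\bt_0)$ are continuous into $L_2(Q)$—ensures the $\mathbb{G}_{n_2}$-suprema are genuine random variables, so no outer-measure bookkeeping is needed.
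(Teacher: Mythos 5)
Your proposal is correct and follows essentially the same route as the paper: both arguments reduce the sup-norm rate to the Donsker property, and establish that property by treating $\mathcal{F}_{10}$ and $\mathcal{F}_{11}$ as Lipschitz-in-$t$ classes over the compact one-dimensional index set $\mathcal{S}_T$, invoking Theorem~2.7.11 of \cite{van1996weak} with the $L_2(Q)$-integrable slopes $c_2$ and $c_7$ to bound the bracketing entropy. The only differences are cosmetic: you spell out the coordinatewise treatment of the vector-valued class $\mathcal{F}_{11}$ and the measurability remark, which the paper leaves implicit.
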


\begin{proof}

The class $\mathcal{F}_{10}$ is indexed by $t \in \mathcal{S}_T
\subset \mathcal{S}_{T0}$ where $\mathcal{S}_{T0}$ denotes a fixed big
$\ell_1$-ball that contains $\mathcal{S}_T$. Since the
$\varepsilon$-covering number of $ \mathcal{S}_{T0}$ is
$O(\varepsilon^{-1})$, $\mathcal{N}(\varepsilon,
\mathcal{S}_{T},\|\cdot\|_1)=O(\varepsilon^{-1})$. Under Condition
\ref{A2}, by \cite{van1996weak}[Theorem 2.7.11], it follows that $$ 
\mathcal{N}_{[]}(2 \|c_2 \|_{L_2(Q)} \varepsilon , \mathcal{F}_{10},
L_2(Q)) \leq \mathcal{N}(\varepsilon, \mathcal{S}_{T},\|\cdot\|_1)\leq
K_3 \varepsilon^{-1},
$$
for some $K_3>0$. Since $0<\|c_2 \|_{L_2(Q)}<+\infty$ by Condition
\ref{A2}, $\int_0^\infty \sqrt{\log \mathcal{N}_{[]}\left(\varepsilon,
    \mathcal{F}_{10}, L_2(Q)\right)} d \varepsilon<+\infty$. Hence,
$\mathcal{F}_{10}$ is Q-Donsker. By the definition of the Donsker class from \cite{van1996weak}, the empirical process $\sqrt{n_2} (\mathbb{Q}_n - Q)$ weakly converges to a tight Borel measurable element in $\ell^\infty(\mathcal{F}_{10})$, where the space $\ell^{\infty}(\mathcal{F}_{10})$ is defined as the set of all uniformly bounded, real functions on $\mathcal{F}_{10}$. Hence, we have $\|\wh q_T(t) - q_T(t)\|_\infty=O_p(n_2^{-1/2})=o_p(n_2^{-1/4})$.

The class $\mathcal{F}_{11}$ is indexed by $t \in \mathcal{S}_T
\subset \mathcal{S}_{T0}$ where $\mathcal{S}_{T0}$ denotes a fixed big
$\ell_1$-ball that contains $\mathcal{S}_T$. Since the
$\varepsilon$-covering number of $ \mathcal{S}_{T0}$ is
$O(\varepsilon^{-1})$, $\mathcal{N}(\varepsilon,
\mathcal{S}_{T},\|\cdot\|_1)=O(\varepsilon^{-1})$. From
\cite{van1996weak}[Theorem 2.7.11], we have $$ 
\mathcal{N}_{[]}(2 \|c_7 \|_{L_2(Q)}  \varepsilon, \mathcal{F}_{11},
L_2(Q)) \leq \mathcal{N}(\varepsilon, \mathcal{S}_{T},\|\cdot\|_1)\leq
K_3 \varepsilon^{-1}, 
$$
for some $K_3>0$. Since $0<\|c_7 \|_{L_2(Q)} <+\infty$ by Condition
\ref{B2}, $\int_0^\infty \sqrt{\log \mathcal{N}_{[]}\left(\varepsilon,
    \mathcal{F}_{11}, L_2(Q)\right)} d \varepsilon<+\infty$. Then,
$\mathcal{F}_{11}$ is also Q-Donsker. The empirical process $\sqrt{n_2} (\mathbb{Q}_n - Q)$ weakly converges to a tight Borel measurable element in $\ell^\infty(\mathcal{F}_{11})$. Hence, we also have $\|\wh q^*_T(t) -
            q^*_T(t)\|_\infty=O_p(n_2^{-1/2})=o_p(n_2^{-1/4})$.
\end{proof}

\begin{Lemma}
\label{lm:2.5}
Under Conditions \ref{B1} and \ref{B2}, the following holds:    
\begin{enumerate}[label=(\arabic*)]
      
        \item the class $\mathcal{F}_{12} \equiv \{t \mapsto q_{T \mid \Z}(t, \z ; \bt_0)/q_T(t): \z \in \mathcal{S}_{\Z p}\}$ satisfies
        $$
\mathcal{N}_{[]}\left(\varepsilon, \mathcal{F}_{12}, L_1(P)\right)<\infty, \quad \text { for every } \varepsilon>0,
$$
and has a measurable envelope function $M_5$ such that $$
\int_0^{\infty} \sup _{\wt P} \sqrt{\log \mathcal{N}\left(\varepsilon\|M_5\|_{L_2(\wt P)}, \mathcal{F}_{12}, L_2(\wt P)\right)} d \varepsilon<\infty,
$$
where the $\sup$ is taken over all the probability measure $\wt P$ on $\mathcal{S}_T$ such that $\|M_5\|_{L_2(\wt P)}<\infty$.

        \item the class $\mathcal{F}_{13} \equiv \left\{t  \mapsto \tfrac{\partial}{\partial \bt}q_{T \mid \Z}(t, \z ; \bt_0)/q_T(t): \z \in \mathcal{S}_{\Z p}\right\}$ satisfies
        $$
\mathcal{N}_{[]}\left(\varepsilon, \mathcal{F}_{13}, L_1(P)\right)<\infty, \quad \text { for every } \varepsilon>0,
$$
and has a measurable envelope function $M_6$ such that $$
\int_0^{\infty} \sup_{\wt P} \sqrt{\log \mathcal{N}\left(\varepsilon\|M_6\|_{L_2(\wt P)}, \mathcal{F}_{13}, L_2(\wt P)\right)} d \varepsilon<\infty,
$$
where the $\sup$ is taken over all the probability measure $\wt P$ on $\mathcal{S}_T$ such that $\|M_6\|_{L_2(\wt P)}<\infty$.

\item the class $\mathcal{F}_{14}\equiv\{t\mapsto \int q_{T\mid \Z}(t,\z;\bt_0)d \wt Q_{\Z}(\z)/q_T(t): \text{any empirical distribution function } \wt Q_\Z \text{ on } \mathcal{S}_{\Z q}\\ \text{ s.t. } \sup_{\z \in \mathcal{S}_{\Z q}}|\wt Q_\Z(\z)-Q_\Z(\z)|\leq 1/4\}$, satisfies
        $$
\mathcal{N}_{[]}\left(\varepsilon, \mathcal{F}_{14}, L_1(P)\right)<\infty, \quad \text { for every } \varepsilon>0,
$$
and has a measurable envelope function $M_7$ such that $$
\int_0^{\infty} \sup _{\wt P} \sqrt{\log \mathcal{N}\left(\varepsilon\|M_7\|_{L_2(\wt P)}, \mathcal{F}_{14}, L_2(\wt P)\right)} d \varepsilon<\infty,
$$
where the $\sup$ is taken over all the probability measure $\wt P$ on $\mathcal{S}_T$ such that $\|M_7\|_{L_2(\wt P)}<\infty$. 

\item the class $\mathcal{F}_{15}\equiv\{t\mapsto \int \tfrac{\partial}{\partial \bt}q_{T\mid \Z}(t,\z;\bt_0)d \wt Q_{\Z}(\z)/q_T(t): \text{any empirical distribution function } \wt Q_\Z \text{ on } \mathcal{S}_{\Z q}\\ \text{ s.t. } \sup_{\z \in \mathcal{S}_{\Z q}}|\wt Q_\Z(\z)-Q_\Z(\z)|\leq 1/4 \}$, satisfies
        $$
\mathcal{N}_{[]}\left(\varepsilon, \mathcal{F}_{15}, L_1(P)\right)<\infty, \quad \text { for every } \varepsilon>0,
$$
and has a measurable envelope function $M_8$ such that $$
\int_0^{\infty} \sup _{\wt P} \sqrt{\log \mathcal{N}\left(\varepsilon\|M_8\|_{L_2(\wt P)}, \mathcal{F}_{15}, L_2(\wt P)\right)} d \varepsilon<\infty,
$$
where the $\sup$ is taken over all the probability measure $\wt P$ on
$\mathcal{S}_T$ such that $\|M_8\|_{L_2(\wt P)}<\infty$. 

\end{enumerate}

\end{Lemma}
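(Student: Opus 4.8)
The plan is to split the four classes into two groups. The classes $\mathcal{F}_{12}$ and $\mathcal{F}_{13}$ are indexed by a single point $\z$, and I would handle them by the Lipschitz-in-a-finite-dimensional-parameter argument already used for the classes in Lemma~\ref{lm:2.4}; the classes $\mathcal{F}_{14}$ and $\mathcal{F}_{15}$ are indexed by an empirical distribution $\wt Q_\Z$, and I would treat them as convex hulls of classes of the first type. For $\mathcal{F}_{12}$: Condition~\ref{B1} makes $\z\mapsto q_{T\mid\Z}(t,\z;\bt_0)/q_T(t)$ Lipschitz on $\mathcal{S}_{\Z p}$ with Lipschitz function $c_5(t)/q_T(t)$ and supplies the envelope $M_5$, with $\|c_5/q_T\|_{L_1(P)}<\infty$ and $c_5(t)/q_T(t)\le M_5(t)/\operatorname{diam}\mathcal{S}_{\Z p}$. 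Since $\mathcal{S}_{\Z p}$ is a compact subset of $\mathbb{R}^{d_\z}$ (Condition~\ref{A1}) with $\mathcal{N}(\varepsilon,\mathcal{S}_{\Z p},\|\cdot\|_1)=O(\varepsilon^{-d_\z})$, \cite{van1996weak}[Theorem 2.7.11] gives $\mathcal{N}_{[]}(2\varepsilon\|c_5/q_T\|_{L_1(P)},\mathcal{F}_{12},L_1(P))\le\mathcal{N}(\varepsilon,\mathcal{S}_{\Z p},\|\cdot\|_1)<\infty$ for every $\varepsilon>0$, which is the first claim. For the uniform entropy integral, a $(\varepsilon\operatorname{diam}\mathcal{S}_{\Z p})$-net of $\mathcal{S}_{\Z p}$ in $\|\cdot\|_1$ induces, via the Lipschitz bound, an $\varepsilon\|M_5\|_{L_2(\wt P)}$-net of $\mathcal{F}_{12}$ in $L_2(\wt P)$ for every $\wt P$ on $\mathcal{S}_T$ with $\|M_5\|_{L_2(\wt P)}<\infty$, so $\sup_{\wt P}\mathcal{N}(\varepsilon\|M_5\|_{L_2(\wt P)},\mathcal{F}_{12},L_2(\wt P))=O(\varepsilon^{-d_\z})$; this covering number equals $1$ for $\varepsilon$ large and is $O(\varepsilon^{-d_\z})$ near $0$, so the entropy integral is finite. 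Replacing Condition~\ref{B1} by Condition~\ref{B2}, $c_5$ by $c_6$ and $M_5$ by $M_6$ proves the corresponding claims for $\mathcal{F}_{13}$.

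For $\mathcal{F}_{14}$: any admissible $\wt Q_\Z$ is the empirical measure of finitely many points $\z_1,\dots,\z_m\in\mathcal{S}_{\Z q}$, so $t\mapsto\int q_{T\mid\Z}(t,\z;\bt_0)\,d\wt Q_\Z(\z)/q_T(t)=m^{-1}\sum_{k=1}^m q_{T\mid\Z}(t,\z_k;\bt_0)/q_T(t)$ is a convex combination of members of $\mathcal{G}\equiv\{t\mapsto q_{T\mid\Z}(t,\z;\bt_0)/q_T(t):\z\in\mathcal{S}_{\Z q}\}$, hence $\mathcal{F}_{14}\subseteq\operatorname{conv}(\mathcal{G})$. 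Because Condition~\ref{B1} holds over $(\mathcal{S}_{\Z p}\cup\mathcal{S}_{\Z q})^2$ and $\mathcal{S}_{\Z q}$ is compact, $\mathcal{G}$ satisfies everything established above for $\mathcal{F}_{12}$, with envelope $M_7(t)\equiv\sup_{\z\in\mathcal{S}_{\Z q}}|q_{T\mid\Z}(t,\z;\bt_0)/q_T(t)|\le|q_{T\mid\Z}(t,\z_0;\bt_0)/q_T(t)|+\operatorname{diam}(\mathcal{S}_{\Z p}\cup\mathcal{S}_{\Z q})c_5(t)/q_T(t)\le C\,M_5(t)$ for a constant $C$; the moment bounds on $M_7$ needed when this feeds into Lemma~\ref{lm:c1} therefore follow from Condition~\ref{B3} (with $\Phi_1=M_5$), and $M_7$ is an envelope for $\operatorname{conv}(\mathcal{G})\supseteq\mathcal{F}_{14}$. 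It then remains to pass the two entropy properties from $\mathcal{G}$ to $\operatorname{conv}(\mathcal{G})$. For the $L_1(P)$ bracketing number: mixing the finitely many $\varepsilon$-brackets of $\mathcal{G}$ according to a point of the probability simplex yields $\varepsilon$-brackets for $\operatorname{conv}(\mathcal{G})$, and discretising the simplex (using $\|M_7\|_{L_1(P)}<\infty$) leaves finitely many slightly enlarged brackets, so $\mathcal{N}_{[]}(\varepsilon,\operatorname{conv}(\mathcal{G}),L_1(P))<\infty$ for every $\varepsilon>0$. For the uniform entropy: $\mathcal{G}$ has the polynomial bound $\sup_{\wt P}\mathcal{N}(\varepsilon\|M_7\|_{L_2(\wt P)},\mathcal{G},L_2(\wt P))=O(\varepsilon^{-d_\z})$ from the net argument, so the convex-hull entropy bound for Euclidean classes (see \cite{van1996weak}[Theorem 2.6.9]) gives $\log\sup_{\wt P}\mathcal{N}(\varepsilon\|M_7\|_{L_2(\wt P)},\operatorname{conv}(\mathcal{G}),L_2(\wt P))=O(\varepsilon^{-2d_\z/(d_\z+2)})$, whose exponent is $<2$, so the entropy integral is finite. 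The class $\mathcal{F}_{15}$ is handled identically, replacing $q_{T\mid\Z}$ by $\tfrac{\partial}{\partial\bt}q_{T\mid\Z}$, Condition~\ref{B1} by Condition~\ref{B2}, and $M_7\le C\,M_5$ by $M_8\le C'\,M_6$ (so that $\Phi_2=M_6$ in Condition~\ref{B3}).

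I expect the main obstacle to be the convex-hull step for $\mathcal{F}_{14}$ and $\mathcal{F}_{15}$. One has to recognise these distribution-indexed classes as convex hulls of the point-indexed classes $\mathcal{G}$, and then invoke entropy-preservation results valid for the Euclidean (polynomial-uniform-covering) class $\mathcal{G}$ rather than for a literal VC-subgraph class, so one should first verify the Euclidean-type bound for $\mathcal{G}$ directly, as the net argument above does, before applying \cite{van1996weak}[Theorem 2.6.9], or else embed $\mathcal{G}$ in a VC-subgraph class. The bracketing-number transfer to the convex hull also requires the explicit discretisation of the mixing weights rather than an off-the-shelf statement; by contrast, the arguments for $\mathcal{F}_{12}$ and $\mathcal{F}_{13}$ are essentially the Lipschitz-parameter argument already carried out in Lemma~\ref{lm:2.4}.
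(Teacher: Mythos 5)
Your proposal is correct and follows essentially the same route as the paper: \cite{van1996weak}[Theorem 2.7.11] via the Lipschitz conditions \ref{B1}--\ref{B2} for the point-indexed classes $\mathcal{F}_{12}$, $\mathcal{F}_{13}$, and for $\mathcal{F}_{14}$, $\mathcal{F}_{15}$ the identification as subsets of the convex hull of the $\z$-indexed class over $\mathcal{S}_{\Z q}$ (the paper's $\mathcal{F}_{16}$), with the polynomial uniform covering bound verified directly so that \cite{van1996weak}[Theorem 2.6.9] yields the $\varepsilon^{-2d_\z/(d_\z+2)}$ entropy bound, and an explicit discretisation of the mixing weights for the $L_1(P)$ bracketing numbers. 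The only differences are cosmetic: the paper discretises the vector of cell masses $\int I\{\z\in\mathcal{S}_{\Z qi}\}\,d(\wt Q_\Z-Q_\Z)\in[-1/4,1/4]$ over a partition of $\mathcal{S}_{\Z q}$ rather than the probability simplex itself, which is the same idea.
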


\begin{proof}

Result (1): The class $\mathcal{F}_{12}$ is indexed by $\z \in \mathcal{S}_{\Z p} \subset \mathcal{S}_{\Z p 0}$ where $\mathcal{S}_{\Z p 0}$ denotes a fixed big $\ell_1$-ball that contains $\mathcal{S}_{\Z p}$. Since the $\varepsilon$-covering number of $\mathcal{S}_{\Z p 0}$ is $O(\varepsilon^{-d_\z})$, $\mathcal{N}(\varepsilon,  \mathcal{S}_{\Z p},\|\cdot\|_1)=O(\varepsilon^{-d_\z})$.  From
\cite{van1996weak}[Theorem 2.7.11], we have for any norm $\|\cdot \|$ $$
\mathcal{N}_{[]}(2 \varepsilon \|c_5/q_T\|, \mathcal{F}_{12}, \|\cdot \|) \leq \mathcal{N}(\varepsilon,  \mathcal{S}_{\Z p},\|\cdot\|_1)\leq K_4 \varepsilon^{-d_\z},
$$
for some $K_4>0$. Since $0<\|c_5/q_T\|_{ L_1(P)}<+\infty$ by Condition
B1, $\mathcal{N}_{[]}\left(\varepsilon, \mathcal{F}_{12}, L_1(P)\right)<\infty$ for every $\varepsilon>0$. 

Note that for any norm $\|\cdot\|$, $\mathcal{N}( \varepsilon \|c_5/q_T\|, \mathcal{F}_{12}, \|\cdot \|) \leq \mathcal{N}_{[]}(2 \varepsilon \|c_5/q_T\|, \mathcal{F}_{12}, \|\cdot \|)$ and for any probability measure $\wt P$, $0<\|M_5\|_{L_2(\wt P)}<+\infty$ is equivalent to $0<\|c_5/q_T\|_{L_2(\wt P)}<+\infty$. Hence, $$
\int_0^{\infty} \sup _{\wt P} \sqrt{\log \mathcal{N}\left(\varepsilon\|M_5\|_{L_2(\wt P)}, \mathcal{F}_{12}, L_2(\wt P)\right)} d \varepsilon<\infty,
$$
where the $\sup$ is taken over all the probability measure $\wt P$ on
$\mathcal{S}_T$ such that $\|M_5\|_{L_2(\wt P)}<\infty$. 

Result (2): This proof follows the same steps as in Result (1). The class $\mathcal{F}_{13}$ is indexed by $\z \in \mathcal{S}_{\Z p}$. By Condition \ref{B2}, for any norm $\|\cdot \|$ $$
\mathcal{N}_{[]}(2 \varepsilon \|c_6/q_T\|, \mathcal{F}_{13}, \|\cdot \|) \leq \mathcal{N}(\varepsilon, \mathcal{S}_{\Z p},\|\cdot\|_1)\leq K_4 \varepsilon^{-d_\z},
$$
for some $K_4>0$. Since $0<\|c_6/q_T\|_ {L_1(P)}<+\infty$ by Condition \ref{B2}, $\mathcal{N}_{[]}\left(\varepsilon, \mathcal{F}_{13}, L_1(P)\right)<\infty$ for every $\varepsilon>0$. 

Note that for any norm $\|\cdot\|$, $\mathcal{N}( \varepsilon \|c_6/q_T\|, \mathcal{F}_{13}, \|\cdot \|) \leq \mathcal{N}_{[]}(2 \varepsilon \|c_6/q_T\|, \mathcal{F}_{13}, \|\cdot \|)$ and for any probability measure $\wt P$, $0<\|M_6\|_{L_2(\wt P)}<+\infty$ is equivalent to $0<\|c_6/q_T\|_{L_2(\wt P)}<+\infty$. Hence, $$
\int_0^{\infty} \sup _Q \sqrt{\log \mathcal{N}\left(\varepsilon\|M_6\|_{L_2(\wt P)}, \mathcal{F}_{13}, L_2(\wt P)\right)} d \varepsilon<\infty,
$$
where the $\sup$ is taken over all the probability measure $\wt P$ on $\mathcal{S}_T$ such that $\|M_6\|_{L_2(\wt P)}<\infty$.

Result (3):
First we prove $\mathcal{N}_{[]}(\varepsilon , \mathcal{F}_{14}, \|\cdot \|_{L_1(P)})<\infty$ for every $\varepsilon$.
By Condition \ref{B1}, for every $(\wt \z, \z) \in  (\mathcal{S}_{\Z p}\cup\mathcal{S}_{\Z q})^2$
$$|q_{T\mid\Z}(t,\z;\bt_0) - q_{T\mid\Z}(t,\wt \z;\bt_0)| \leq \|\z - \wt \z\|_1 c_5(t),$$
    with $0<E\{c_5(T)/q_T(T)\}< +\infty$. Let $\mathcal{F}_{16} \equiv \{t \mapsto q_{T \mid \Z}(t, \z ; \bt_0)/q_T(t): \z \in \mathcal{S}_{\Z q}\}$. From
\cite{van1996weak}[Theorem 2.7.11], we have for any norm $\|\cdot \|$ $$
\mathcal{N}_{[]}(2 \varepsilon \|c_5/q_T\|, \mathcal{F}_{16}, \|\cdot \|) \leq \mathcal{N}(\varepsilon,  \mathcal{S}_{\Z q},\|\cdot\|_1)\leq K_5  \varepsilon^{-d_\z},
$$
for some $K_5>0$. Since $0<\|c_5/q_T\|_{ L_1(P)}<+\infty$,
$\mathcal{N}_{[]}\left(\varepsilon, \mathcal{F}_{16},
  L_1(P)\right)<\infty$ for every $\varepsilon>0$.
Let $M_{\mathcal{F}_{16}}(t) \equiv \max\{1,\operatorname{diam} \mathcal{S}_{\Z
  q}\}c_5(t)/q_T(t)+q_{T \mid \Z}(t, \z_0 ;
\bt_0)/q_T(t)$
for some fixed $\z_0 \in \mathcal{S}_{\Z q}$. Then
$M_{\mathcal{F}_{16}}(t)$ is an envelope function of
$\mathcal{F}_{16}$. Since $0<\|c_5/q_T\|_{ L_1(P)}<+\infty$ by
Condition B1, $0<\|M_{\mathcal{F}_{16}}\|_{ L_1(P)}< +\infty$.

Following the proof of \cite{van1996weak}[Theorem 2.7.11], let
$\wt \z_1,\ldots,\wt \z_{K_q}$,
where $K_q\equiv\mathcal{N}(\varepsilon,
\mathcal{S}_{\Z q},\|\cdot\|_1)$, be an $\varepsilon$-net under
$\|\cdot\|_1$ for $\mathcal{S}_{\Z q}$. Hence, the compact set
$\mathcal{S}_{\Z q}$ can be  partitioned into $K_q$ disjoint subsets and each subset 
contains a single $\wt \z_i$.
These subsets are denoted as $\mathcal{S}_{\Z qi},
i = 1,\ldots, K_q$ and for any $\z \in \mathcal{S}_{\Z qi}$,
$\|\z-\wt \z_i\|_1\leq \varepsilon$. Then the brackets to cover
$\mathcal{F}_{16}$ can be constructed
as $$\left[\frac{q_{T\mid\Z}(t,\wt \z_i;\bt_0)}{q_T(t)}-\varepsilon
  \frac{c_5(t)}{q_T(t)},
  \frac{q_{T\mid\Z}(t,\wt \z_i;\bt_0)}{q_T(t)}+\varepsilon
  \frac{c_5(t)}{q_T(t)}\right], i=1,\ldots,K_q,$$ 
such that, for every $i$ and any $\z \in \mathcal{S}_{\Z qi}$, we have $$ \frac{q_{T\mid\Z}(t,\wt \z_i;\bt_0)}{q_T(t)}-\varepsilon \frac{c_5(t)}{q_T(t)} \leq \frac{q_{T \mid \Z}(t,\z;\bt_0)}{q_T(t)} \leq \frac{q_{T\mid\Z}(t,\wt \z_i;\bt_0)}{q_T(t)}+\varepsilon \frac{c_5(t)}{q_T(t)}.$$

For any  empirical distribution function $\wt Q_\Z(\z) \in \{\wt Q_\Z(\z): \sup_\z |\wt Q_\Z(\z)- Q_\Z(\z)|\leq 1/4\}$, we have 
\bse
 \int \frac{q_{T\mid\Z}(t,\z;\bt_0)}{q_T(t)}d \wt Q_\Z(\z) & \leq & \sum_{i=1}^{K_q} \int  I\{\z \in \mathcal{S}_{\Z qi}\}d \wt Q_\Z(\z) \left\{ \frac{q_{T\mid\Z}(t,\wt \z_i;\bt_0)}{q_T(t)}+\varepsilon \frac{c_5(t)}{q_T(t)}\right\} \\
&=& \sum_{i=1}^{K_q} \int  I\{\z \in \mathcal{S}_{\Z qi}\}d \wt Q_\Z(\z) \frac{q_{T\mid\Z}(t,\wt \z_i;\bt_0)}{q_T(t)}+\varepsilon \frac{c_5(t)}{q_T(t)}\\
&=& \sum_{i=1}^{K_q} \int  I\{\z \in \mathcal{S}_{\Z qi}\}d \{\wt Q_\Z(\z) - Q_\Z(\z) \}\frac{q_{T\mid\Z}(t,\wt \z_i;\bt_0)}{q_T(t)}\\
&& +\sum_{i=1}^{K_q} \int  I\{\z \in \mathcal{S}_{\Z qi}\}d Q_\Z(\z) \frac{q_{T\mid\Z}(t,\wt \z_i;\bt_0)}{q_T(t)}+\varepsilon \frac{c_5(t)}{q_T(t)},
\ese 
where $\int  I\{\z \in \mathcal{S}_{\Z qi}\}d \{\wt
Q_\Z(\z)-
Q_\Z(\z)\}$ lies in an interval $\mathcal{S}_{Q0}\equiv[-1/4,1/4]$ for
every $i \in \{1,\ldots,K_q\}$ since $\sup_{\z
  \in \mathcal{S}_{\Z q}} |\wt Q_\Z(\z)-Q_\Z(\z)|\leq
1/4$.

Similarly, we can obtain \bse
 \int \frac{q_{T\mid\Z}(t,\z;\bt_0)}{q_T(t)}d \wt Q_\Z(\z) & \geq &\sum_{i=1}^{K_q} \int  I\{\z \in \mathcal{S}_{\Z qi}\}d \{\wt Q_\Z(\z) - Q_\Z(\z) \}\frac{q_{T\mid\Z}(t,\wt \z_i;\bt_0)}{q_T(t)}\\
&&+\sum_{i=1}^{K_q} \int  I\{\z \in \mathcal{S}_{\Z qi}\}d Q_\Z(\z) \frac{q_{T\mid\Z}(t,\wt \z_i;\bt_0)}{q_T(t)}-\varepsilon \frac{c_5(t)}{q_T(t)}.
\ese 
    
Since $\mathcal{N}_{[]}(\varepsilon^{d_\z+1},\mathcal{S}_{Q0},\|\cdot\|_1)=O(\varepsilon^{-(d_\z+1)})$, we can have $\varepsilon^{d_\z+1}$-brackets for $\mathcal{S}_{Q0}$ denoted as $[C^L_{j},C^U_{j}], j=1,\ldots,\mathcal{N}_{[]}(\varepsilon^{d_\z+1},\mathcal{S}_{Q0},\|\cdot\|_1)$. Then, we have for every $\wt Q_\Z(\z)$  and for every $i=1,\ldots,K_q$, there exist some $j_i \in \{1,\ldots,\mathcal{N}_{[]}(\varepsilon^{d_\z+1},\mathcal{S}_{Q0},\|\cdot\|_1)\}$ such that \bse
 \int \frac{q_{T\mid\Z}(t,\z;\bt_0)}{q_T(t)}d \wt Q_\Z(\z) &\leq&\sum_{i=1}^{K_q} \int  I\{\z \in \mathcal{S}_{\Z qi}\}d \{\wt Q_\Z(\z) - Q_\Z(\z) \}\frac{q_{T\mid\Z}(t,\wt \z_i;\bt_0)}{q_T(t)}\\
&&+\sum_{i=1}^{K_q} \int  I\{\z \in \mathcal{S}_{\Z qi}\}d Q_\Z(\z) \frac{q_{T\mid\Z}(t,\wt \z_i;\bt_0)}{q_T(t)}+\varepsilon \frac{c_5(t)}{q_T(t)}\\
&\leq&\sum_{i=1}^{K_q} C^U_{j_i}\frac{q_{T\mid\Z}(t,\wt \z_i;\bt_0)}{q_T(t)}+\sum_{i=1}^{K_q} \int  I\{\z \in \mathcal{S}_{\Z qi}\}d Q_\Z(\z) \frac{q_{T\mid\Z}(t,\wt \z_i;\bt_0)}{q_T(t)}+\varepsilon \frac{c_5(t)}{q_T(t)},
\ese
and similarly
\bse
 \int \frac{q_{T\mid\Z}(t,\z;\bt_0)}{q_T(t)}d \wt Q_\Z(\z) &\geq&\sum_{i=1}^{K_q} \int  I\{\z \in \mathcal{S}_{\Z qi}\}d \{\wt Q_\Z(\z) - Q_\Z(\z) \}\frac{q_{T\mid\Z}(t,\wt \z_i;\bt_0)}{q_T(t)}\\
&&+\sum_{i=1}^{K_q} \int  I\{\z \in \mathcal{S}_{\Z qi}\}d Q_\Z(\z) \frac{q_{T\mid\Z}(t,\wt \z_i;\bt_0)}{q_T(t)}-\varepsilon \frac{c_5(t)}{q_T(t)}\\
&\geq&\sum_{i=1}^{K_q} C^L_{j_i}\frac{q_{T\mid\Z}(t,\wt \z_i;\bt_0)}{q_T(t)}+\sum_{i=1}^{K_q} \int  I\{\z \in \mathcal{S}_{\Z qi}\}d Q_\Z(\z) \frac{q_{T\mid\Z}(t,\wt \z_i;\bt_0)}{q_T(t)}-\varepsilon \frac{c_5(t)}{q_T(t)}.
\ese 
Then the brackets for the class $\mathcal{F}_{14}=\{t\mapsto \int
q_{T\mid \Z}(t,\z;\bt_0)d \wt Q_{\Z}(\z)/q_T(t): \wt Q_\Z \text{
  s.t. } \sup_{\z \in \mathcal{S}_{\Z q}}|\wt Q_\Z(\z)-Q_\Z(\z)|\leq
1/4\}$ can be constructed as
\bse
&&\left[\sum_{i=1}^{K_q} C^L_{j_i}\frac{q_{T\mid\Z}(t,\wt \z_i;\bt_0)}{q_T(t)}+\sum_{i=1}^{K_q} \int  I\{\z \in \mathcal{S}_{\Z qi}\}d Q_\Z(\z) \frac{q_{T\mid\Z}(t,\wt \z_i;\bt_0)}{q_T(t)}-\varepsilon \frac{c_5(t)}{q_T(t)},\right.\\&&\left.\sum_{i=1}^{K_q} C^U_{j_i}\frac{q_{T\mid\Z}(t,\wt \z_i;\bt_0)}{q_T(t)}+\sum_{i=1}^{K_q} \int  I\{\z \in \mathcal{S}_{\Z qi}\}d Q_\Z(\z) \frac{q_{T\mid\Z}(t,\wt \z_i;\bt_0)}{q_T(t)}+\varepsilon \frac{c_5(t)}{q_T(t)}\right],
\ese 
with
\bse
&&\left\|\left\{\sum_{i=1}^{K_q} C^U_{j_i}\frac{q_{T\mid\Z}(t,\wt \z_i;\bt_0)}{q_T(t)}+\sum_{i=1}^{K_q} \int  I\{\z \in \mathcal{S}_{\Z qi}\}d Q_\Z(\z) \frac{q_{T\mid\Z}(t,\wt \z_i;\bt_0)}{q_T(t)}+\varepsilon \frac{c_5(t)}{q_T(t)}\right\}\right.\\&& -\left.\left\{\sum_{i=1}^{K_q} C^L_{j_i}\frac{q_{T\mid\Z}(t,\wt \z_i;\bt_0)}{q_T(t)}+\sum_{i=1}^{K_q} \int  I\{\z \in \mathcal{S}_{\Z qi}\}d Q_\Z(\z) \frac{q_{T\mid\Z}(t,\wt \z_i;\bt_0)}{q_T(t)}-\varepsilon \frac{c_5(t)}{q_T(t)}\right\}\right\|_{L_1(P)}\\
&=& \left\|\sum_{i=1}^{K_q} (C^U_{j_i}-C^L_{j_i}) \frac{q_{T\mid\Z}(t,\wt\z_i;\bt_0)}{q_T(t)}+2\varepsilon\frac{c_5(t)}{q_T(t)} \right\|_{L_1(P)}\\
&\leq&\left\|\sum_{i=1}^{K_q} (C^U_{j_i}-C^L_{j_i}) M_{\mathcal{F}_{16}}+2\varepsilon M_{\mathcal{F}_{16}} \right\|_{L_1(P)}\\
&=& \|M_{\mathcal{F}_{16}}\|_{L_1(P)} \left|\sum_{i=1}^{K_q}
  (C^U_{j_i}-C^L_{j_i}) +2\varepsilon \right|\\
&\leq& \left\|M_{\mathcal{F}_{16}}\right\|_{L_1(P)}\left(K_q \varepsilon^{d_\z+1} + 2 \varepsilon \right)\\
&\leq& (K_5+2) \varepsilon \left\|M_{\mathcal{F}_{16}}\right\|_{L_1(P)},
\ese 
where the last inequality holds since $K_q=\mathcal{N}(\varepsilon,
  \mathcal{S}_{\Z q},\|\cdot\|_1)\leq K_5 \varepsilon^{-d_\z}$.  The
  total number of the brackets is  less or equal to $K_6^{K_q}
  \varepsilon^{-(d_\z+1)K_q}$   for some $K_6>0$. 
  
  Hence, $\mathcal{N}_{[]}((K_5+2) \varepsilon
  \left\|M_{\mathcal{F}_{16}}\right\|_{L_1(P)}, \mathcal{F}_{14},
  \|\cdot \|_{L_1(P)})\leq K_6^{K_q} \varepsilon^{-(d_\z+1)K_q}
  <\infty$ for every $\varepsilon$. Since
  $0<\|M_{\mathcal{F}_{16}}\|_{L_1(P)}< +\infty$, we can obtain
  $\mathcal{N}_{[]}( \varepsilon, \mathcal{F}_{14}, \|\cdot
  \|_{L_1(P)})<\infty$ for every $\varepsilon$.

Next we prove the second claim.
Recall $\mathcal{F}_{14}=\{t\mapsto \int q_{T\mid
  \Z}(t,\z;\bt_0)d \wt Q_{\Z}(\z)/q_T(t): \text{any empirical
  distribution }$ $\text{function} \wt Q_\Z \text{ on } \mathcal{S}_{\Z q}
\text{ s.t. } \sup_{\z \in \mathcal{S}_{\Z q}}|\wt
Q_\Z(\z)-Q_\Z(\z)|\leq 1/4\}$ and $\mathcal{F}_{16} = \{t \mapsto q_{T
  \mid \Z}(t, \z ; \bt_0)/q_T(t): \z \in \mathcal{S}_{\Z q}\}$. We
have $\mathcal{F}_{14} \subset
\overline{\operatorname{conv}}\mathcal{F}_{16}$ where
$\overline{\operatorname{conv}}\mathcal{F}_{16}$ denotes the closure
of the convex hull of $\mathcal{F}_{16}$. Note that for any
norm $\|\cdot\|$, $\mathcal{N}( \varepsilon \|c_5/q_T\|,
\mathcal{F}_{16}, \|\cdot \|) \leq \mathcal{N}_{[]}(2 \varepsilon
\|c_5/q_T\|, \mathcal{F}_{16}, \|\cdot \|)\leq K_5\varepsilon^{-d_\z}$
and  for any probability measure $\wt P$,
$0<\|M_{\mathcal{F}_{16}}\|_{L_2(\wt P)}<+\infty$ is equivalent to
$0<\|c_5/q_T\|_{L_2(\wt P)}<+\infty$. Then we have for any probability
measure $\wt P$ such that $0<\|M_{\mathcal{F}_{16}}\|_{L_2(\wt
  P)}<+\infty$, $$\mathcal{N}( \varepsilon
\|M_{\mathcal{F}_{16}}\|_{L_2(\wt P)}, \mathcal{F}_{16}, L_2(\wt P))
\leq \wt K_5\varepsilon^{-d_\z}$$ for some constant $\wt
  K_5$.
By \cite{van1996weak}[Theorem 2.6.9], it follows that for any probability measure $\wt P$ such that $0<\|M_{\mathcal{F}_{16}}\|_{L_2(\wt P)}<+\infty$,
$$\log \mathcal{N}( \varepsilon \|M_{\mathcal{F}_{16}}\|_{L_2(\wt P)}, \overline{\operatorname{conv}}\mathcal{F}_{16}, L_2(\wt P)) \leq K_7 \varepsilon^{-2d_\z/(d_\z+2)},$$
for some $K_7>0$.    Since $\mathcal{F}_{14} \subset \overline{\operatorname{conv}}\mathcal{F}_{16}$, we can obtain that $\mathcal{F}_{14}$ has a measurable envelope function $M_7(t)\equiv M_{\mathcal{F}_{16}}(t)$ such that $$
\int_0^{\infty} \sup _{\wt P} \sqrt{\log \mathcal{N}\left(\varepsilon\|M_7\|_{L_2(\wt P)}, \mathcal{F}_{14}, L_2(\wt P)\right)} d \varepsilon<\infty,
$$
where the $\sup$ is taken over all the probability measure $\wt P$ on
$\mathcal{S}_T$ such that $\|M_7\|_{L_2(\wt P)}<\infty$.

 Result (4): The proof is identical to that of Result (3), but with Condition B1 replaced by Condition B2, and functions $c_5(t)$ and $q_{T\mid \Z}(t,\z;\bt)$ replaced by functions $c_6(t)$ and $\frac{\partial}{\partial \bt}q_{T\mid \Z}(t,\z;\bt)$ respectively. Then we can obtain that $\mathcal{N}_{[]}\left(\varepsilon, \mathcal{F}_{17},
  L_1(P)\right)<\infty$ for every $\varepsilon>0$ and $\mathcal{F}_{17}$ has a measurable envelope function $M_8(t)\equiv \max\{1,\operatorname{diam} \mathcal{S}_{\Z
  q}\}c_6(t)/q_T(t)+\frac{\partial}{\partial \bt} q_{T \mid \Z}(t, \z_0 ;
\bt_0)/q_T(t)$
for some fixed $\z_0 \in \mathcal{S}_{\Z q}$ such that $$
\int_0^{\infty} \sup _{\wt P} \sqrt{\log \mathcal{N}\left(\varepsilon\|M_8\|_{L_2(\wt P)}, \mathcal{F}_{15}, L_2(\wt P)\right)} d \varepsilon<\infty,
$$
where the $\sup$ is taken over all the probability measure $\wt P$ on
$\mathcal{S}_T$ such that $\|M_8\|_{L_2(\wt P)}<\infty$.

\end{proof}

\begin{Lemma}
\label{lm:2.6}
Under Conditions \ref{A1}, \ref{B1}, \ref{B2} and \ref{B3}, we have
\begin{enumerate}[label=(\arabic*)]
\item $$
S_{0n}(x,\z,\wh q_T) - S_{0}(x,\z,q_T)= n^{-1} \sum_{i=1}^n \eta_0(X_i,\Z_i,\Delta_i,R_i;x,\z) + o_p(n_1^{-1/2}) + o_p(n_2^{-1/2}),$$
uniformly in $(x,\z)$ where 
\be
\eta_0(X,\Z,\Delta,R;x,\z)&\equiv& \frac{R}{\pi}
\eta_{0p}(X,\Delta;x,\z)+ \frac{1-R}{1-\pi} \eta_{0q}(\Z;x,\z) -
E_p\{\eta_{0p}(X,\Delta;x,\z)\}\label{eq:eta0}\\
\eta_{0p}(X,\Delta; x,\z) &\equiv& \varphi_0(X; x,\z) \gamma_0(X)\Delta + \gamma_1^{\varphi_0}(X)(1 - \Delta) - \gamma_2^{\varphi_0}(X) \label{eq:eta0p}\\
\eta_{0q}(\Z;x,\z)&\equiv& -\int \frac{I(t>x)}{q^2_T(t)} q_{T \mid \Z}(t, \z ; \bt_0)\{q_{T \mid \Z}(t,\Z;\bt_0)-q_T(t)\}d P_T(t),\label{eq:eta0q}
\ee
and $\left\| S_{0n}(x,\z,\wh q_T) - S_{0}(x,\z,q_T)  \right\|_\infty = o_p(n_1^{-1/4}) + o_p(n_2^{-1/4});$
\item 
\bse
S_{1n}(x,\z,\wh q_T) - S_{1}(x,\z,q_T)=n^{-1}\sum_{i=1}^n \eta_1(X_i,\Z_i,\Delta_i,R_i;x,\z)+ o_p(n_1^{-1/2}) + o_p(n_2^{-1/2}),
\ese
uniformly in $(x,\z)$ where 
\be
\eta_1(X,\Z,\Delta,R;x,\z)&\equiv& \frac{R}{\pi}
\eta_{1p}(X,\Delta;x,\z)+ \frac{1-R}{1-\pi} \eta_{1q}(\Z;x,\z) -
E_p\{\eta_{1p}(X,\Delta;x,\z)\}\label{eq:eta1}\\
\eta_{1p}(X,\Delta; x,\z) &\equiv& \varphi_1(X; x,\z) \gamma_0(X)\Delta + \gamma_1^{\varphi_1}(X)(1 - \Delta) - \gamma_2^{\varphi_1}(X) \label{eq:eta1p}\\
\eta_{1q}(\Z;x,\z)&\equiv& -\int  \frac{I(t>x)}{q^2_T(t)} \tfrac{\partial}{\partial \bt}q_{T \mid \Z}(t, \z ; \bt_0) \{q_{T \mid \Z}(t,\Z;\bt_0)-q_T(t)\}d P_T(t),\label{eq:eta1q}
\ee
and $\left\| S_{1n}(x,\z,\wh q_T) - S_{1}(x,\z,q_T)  \right\|_\infty = o_p(n_1^{-1/4}) + o_p(n_2^{-1/4});$

\item
$$S_{2n}(x,\z,\wh q_T, \wh q^*_T) - S_{2}(x,\z,q_T, q^*_T)= n^{-1} \sum_{i=1}^n \eta_2(X_i,\Z_i,\Delta_i,R_i;x,\z)+ o_p(n_1^{-1/2}) + o_p(n_2^{-1/2}),$$
uniformly in $(x,\z)$ where
\be
\eta_2(X,\Z,\Delta,R;x,\z)&\equiv& \frac{R}{\pi}
\eta_{2p}(X,\Delta;x,\z)+ \frac{1-R}{1-\pi} \eta_{2q}(\Z;x,\z) -
E_p\{\eta_{2p}(X,\Delta;x,\z)\}\label{eq:eta2}\\
\eta_{2p}(X,\Delta; x,\z) &\equiv& \varphi_2(X;x,\z) \gamma_0(X)\Delta_i+\gamma^{\varphi_2}_1(X)(1-\Delta_i)- \gamma^{\varphi_2}_2(X) \label{eq:eta2p}\\
\eta_{2q}(\Z;x,\z)&\equiv& \int  \frac{I(t>x)}{q^2_T(t)} q_{T \mid \Z}(t, \z ; \bt_0)\{\tfrac{\partial}{\partial \bt} q_{T \mid \Z}(t,\Z;\bt_0)-q^*_T(t)\}d P_T(t) \notag \\
&& - \int  \frac{2 I(t>x)q^*_T(t)}{q^3_T(t)} q_{T \mid \Z}(t, \z ; \bt_0) \{q_{T \mid \Z}(t,\Z;\bt_0)-q_T(t)\}d P_T(t),\label{eq:eta2q}
\ee
and $\left\| S_{2n}(x,\z,\wh q_T, \wh q^*_T) - S_{2}(x,\z,q_T, q^*_T)
\right\|_\infty = o_p(n_1^{-1/4}) + o_p(n_2^{-1/4}).$
    \end{enumerate}
  \end{Lemma}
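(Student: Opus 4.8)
All three parts follow the same scheme, so the plan is to carry out (1) in detail and indicate the changes for (2) and (3). Writing $\varphi_0(t;x,\z)=I(t>x)q_{T\mid\Z}(t,\z;\bt_0)/q_T(t)$ as in the Notations, we have $S_0(x,\z,q_T)=\int\varphi_0(t;x,\z)\,dP_T(t)$ and $S_{0n}(x,\z,\wh q_T)=\int\varphi_0(t;x,\z)\{q_T(t)/\wh q_T(t)\}\,d\wh P_T(t)$, so that
\begin{align*}
S_{0n}-S_0=\int\varphi_0(t;x,\z)\,\frac{q_T(t)-\wh q_T(t)}{\wh q_T(t)}\,d\wh P_T(t)+\int\varphi_0(t;x,\z)\,d\{\wh P_T(t)-P_T(t)\},
\end{align*}
the first summand isolating the replacement of $Q_\Z$ by $\wh Q_\Z$ and the second that of $P_T$ by $\wh P_T$. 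I would first record the sup-norm bounds $\|\wh q_T-q_T\|_\infty=O_p(n_2^{-1/2})$ and $\|\wh q_T^*-q_T^*\|_\infty=O_p(n_2^{-1/2})$ from Lemma~\ref{lm:2.4}, which keep $\wh q_T$ bounded away from zero with probability tending to one.

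For the first summand, write $(q_T-\wh q_T)/\wh q_T=(q_T-\wh q_T)/q_T+(q_T-\wh q_T)^2/(q_T\wh q_T)$. The quadratic term contributes $O_p(n_2^{-1})=o_p(n_2^{-1/2})$ uniformly, since $\varphi_0$ is bounded (Condition~\ref{B4}), $q_T$ and $\wh q_T$ are bounded below, and $\wh P_T$ has total mass at most one. In the linear term I would then replace $d\wh P_T$ by $dP_T$; the error $\int\varphi_0(t;x,\z)\{q_T(t)-\wh q_T(t)\}q_T(t)^{-1}\,d\{\wh P_T-P_T\}(t)$ is bounded by $\|(q_T-\wh q_T)/q_T\|_\infty\cdot\sup_{x,\z}\big|\int\varphi_0(\cdot;x,\z)\,d\{\wh P_T-P_T\}\big|$, and by Lemma~C.1(2) the second factor is $O_p(n_1^{-1/2})$, so this error is $O_p(n_1^{-1/2}n_2^{-1/2})=o_p(n_1^{-1/2})+o_p(n_2^{-1/2})$. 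What remains equals $-\int\varphi_0(t;x,\z)q_T(t)^{-1}\{\int q_{T\mid\Z}(t,\z';\bt_0)\,d(\wh Q_\Z-Q_\Z)(\z')\}\,dP_T(t)$; by Fubini (justified by the envelope conditions) and the fact that integration against $d(\wh Q_\Z-Q_\Z)$ annihilates constants in $\z'$, so that $q_{T\mid\Z}(t,\z';\bt_0)$ may be replaced by $q_{T\mid\Z}(t,\z';\bt_0)-q_T(t)$, this becomes $\int\eta_{0q}(\z';x,\z)\,d(\wh Q_\Z-Q_\Z)(\z')=n_2^{-1}\sum_{j:R_j=0}\eta_{0q}(\Z_j;x,\z)$, the last step using $E_q\{\eta_{0q}(\Z;x,\z)\}=0$.

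For the second summand I would invoke Lemma~C.1(1) with $\varphi=\varphi_0(\cdot;x,\z)$, identifying $\eta^{\varphi_0}(X_i)$ with $\eta_{0p}(X_i,\Delta_i;x,\z)$ and using the Stute identity $E_p\{\eta_{0p}(X,\Delta;x,\z)\}=\int\varphi_0\,dP_T=S_0(x,\z,q_T)$, which yields $\int\varphi_0\,d\{\wh P_T-P_T\}=n_1^{-1}\sum_{i:R_i=1}\eta_{0p}(X_i,\Delta_i;x,\z)-E_p\{\eta_{0p}(X,\Delta;x,\z)\}+R_{n_1}(\varphi_0)$ with $\sup_{x,\z}|R_{n_1}(\varphi_0)|=O(\log^3 n_1/n_1)=o_p(n_1^{-1/2})$. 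Checking the hypotheses of Lemma~C.1 for $\{\varphi_0(\cdot;x,\z):x\in\mathcal{S}_T,\z\in\mathcal{S}_{\Z p}\}$ is where Lemma~\ref{lm:2.5} and Condition~\ref{B3} enter: $I(t>x)$ is a VC class, $q_{T\mid\Z}(\cdot,\z;\bt_0)/q_T$ is controlled through $\mathcal{F}_{12}$, their product preserves the covering and bracketing bounds, and $\Phi_1=M_5$ supplies an envelope with the moments required in Lemma~C.1(a). Adding the two summands, rewriting $n_1^{-1}\sum_{i:R_i=1}$ and $n_2^{-1}\sum_{j:R_j=0}$ with the weights $R_i/\pi$ and $(1-R_i)/(1-\pi)$ as in the Notations, and collecting the influence functions into $\eta_0$ of \eqref{eq:eta0}, gives (1); the $\|\cdot\|_\infty$ claim follows because the leading average is $O_p(n_1^{-1/2})+O_p(n_2^{-1/2})$ uniformly, which is $o_p(n_1^{-1/4})+o_p(n_2^{-1/4})$. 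Part (2) is identical with $q_{T\mid\Z}$ replaced by $\tfrac{\partial}{\partial\bt}q_{T\mid\Z}$ (using $\mathcal{F}_{13}$ and $\Phi_2=M_6$), while for (3) I would Taylor-expand $(a,b)\mapsto b/a^2$ at $(q_T,q_T^*)$, giving $\wh q_T^*/\wh q_T^2-q_T^*/q_T^2=(\wh q_T^*-q_T^*)/q_T^2-2q_T^*(\wh q_T-q_T)/q_T^3+O_p(n_2^{-1})$ uniformly; the two linear terms produce the two $P_T$-integrals in $\eta_{2q}$ of \eqref{eq:eta2q}, and the $d\{\wh P_T-P_T\}$ part produces $\eta_{2p}=\eta^{\varphi_2}$ via Lemma~C.1(1).

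The step I expect to be the main obstacle is the simultaneous presence of both nonparametric estimators inside one integral: the function class carrying the factor $\wh q_T-q_T$ (equivalently $\mathcal{F}_{14}$ and $\mathcal{F}_{15}$ of Lemma~\ref{lm:2.5}) is itself data-dependent, so the cross term can only be controlled after factoring out its uniformly small sup-norm and then applying the uniform Kaplan--Meier-integral bounds of Lemma~C.1(2)--(3) over a fixed envelope class. Relatedly, one must verify that \emph{all} classes feeding into the Kaplan--Meier integral --- the $\varphi_i(\cdot;x,\z)$ and, crucially, the induced maps $x\mapsto\gamma_1^{\varphi_i}(x)$ and $x\mapsto\gamma_2^{\varphi_i}(x)$ --- meet the covering/bracketing and envelope-moment conditions of Lemma~C.1; this, supported by Lemma~\ref{lm:2.5} and Condition~\ref{B3}, is the real technical content of the proof.
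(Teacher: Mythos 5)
Your overall architecture matches the paper's: split $S_{0n}-S_0$ into a piece coming from replacing $Q_\Z$ by $\wh Q_\Z$ and a Kaplan--Meier piece, linearize in $\wh q_T$, identify $\eta_{0q}$ via Fubini and the centering $\int d(\wh Q_\Z-Q_\Z)=0$, and obtain $\eta_{0p}$ from Lemma~C.1(1); the treatment of the pure $\wh Q_\Z$-term and the pure $\wh P_T$-term is correct and is essentially what the paper does.

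The gap is in the cross term. You bound
$\bigl|\int\varphi_0(t;x,\z)\{q_T(t)-\wh q_T(t)\}q_T(t)^{-1}\,d\{\wh P_T-P_T\}(t)\bigr|$
by $\|(q_T-\wh q_T)/q_T\|_\infty\cdot\sup_{x,\z}\bigl|\int\varphi_0\,d\{\wh P_T-P_T\}\bigr|$, but this inequality is false: for a signed measure $\mu$ one has $|\int fg\,d\mu|\le\|g\|_\infty\int|f|\,d|\mu|$, not $\|g\|_\infty\,|\int f\,d\mu|$ (take $\mu=\delta_1-\delta_2$, $f\equiv 1$, $g(1)=1$, $g(2)=-1$). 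The total-variation bound $\int|f|\,d|\wh P_T-P_T|$ is only $O(1)$, so this route gives nothing better than $O_p(n_2^{-1/2})$ for the cross term, which does not suffice when $n_2\gg n_1$. The paper instead shows that the product $t\mapsto I(t>x)q_{T\mid\Z}(t,\z;\bt_0)\wt q_T(t)/q_T^2(t)$, with $\wt q_T$ ranging over all empirical versions close to $q_T$, forms a fixed class $\mathcal{F}_{S_0}$ satisfying the bracketing/uniform-entropy hypotheses of Lemma~C.1 (this is exactly what the classes $\mathcal{F}_{14}$, $\mathcal{F}_{15}$ and the convex-hull entropy bound in the supplement's Lemma on $\mathcal{F}_{12}$--$\mathcal{F}_{15}$ are built for), and then invokes the asymptotic equicontinuity statement C.1(3): since $\|\wh q_T-q_T\|_\infty\to 0$, the variance of the associated influence function $\eta^{\varphi_{S_0}}$ tends to zero, whence the cross term is $o_p(n_1^{-1/2})$ uniformly in $(x,\z)$. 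You correctly flag this as the main obstacle in your closing paragraph and even cite C.1(2)--(3), but the mechanism you describe (``factoring out its uniformly small sup-norm'') is the invalid step above rather than the equicontinuity argument; as written, the proof of the $o_p(n_1^{-1/2})+o_p(n_2^{-1/2})$ rate for the cross term does not go through. The same repair is needed for the analogous cross terms in parts (2) and (3).
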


\begin{proof}
  
\begin{enumerate}[label=(\arabic*)]

\item

Lemma \ref{lm:2.4} implies that $\|\wh q_T -
q_T\|_{\infty}=o_p(n_2^{-1/4})$
and by applying Taylor expansion, we can have 
\bse
&& S_{0n}(x,\z,\wh q_T) - S_{0}(x,\z,q_T)\\
&=& S_{0n}(x,\z,q_T) - S_{0}(x,\z,q_T) + \frac{\partial S_{0n}(x,\z,q_T)}{\partial q_T}[\wh q_T - q_T] + o_p(n_2^{-1/2})\\
&=& \int \frac{I(t>x) q_{T \mid \Z}(t, \z ; \bt_0)}{\int q_{T \mid \Z}(t, \z ; \bt_0) d Q_\Z(\z)} d \{\wh P_T(t) - P_T(t) \} +  \frac{\partial S_{0n}(x,\z,q_T)}{\partial q_T}[\wh q_T - q_T]  + o_p(n_2^{-1/2}) \\
 & \equiv & \int \varphi_0(t;x,\z) d \{\wh P_T(t) - P_T(t) \}  +  \frac{\partial S_{0n}(x,\z,q_T)}{\partial q_T}[\wh q_T - q_T]  + o_p(n_2^{-1/2}) ,
\ese
uniformly in $(x,\z)$.

First, we handle the first term.
We prove the class $\{t \mapsto \varphi_0(t;x,\z)\equiv\frac{I(t>x) q_{T \mid \Z}(t, \z ; \bt_0)}{\int q_{T \mid \Z}(t, \z ; \bt_0) d Q_\Z(\z)}: \z \in \mathcal{S}_{\Z p}, x \in \mathcal{S}_T\}$ satisfies the conditions of Lemma~C.1. From Result (1) in Lemma \ref{lm:2.5}, the class $\{t \mapsto q_{T \mid \Z}(t, \z ; \bt_0)/q_T(t): \z \in \mathcal{S}_{\Z p}\}$ satisfies the conditions of Lemma~C.1. After multiplying by $I(t>x)$, by Condition B1, conditions in Lemma~C.1 still hold for the class $\{\varphi_0\}$. Then, by applying Lemma~C.1, we obtain
\bse
 \int \varphi_0 d \wh P_T&=&n_1^{-1} \sum_{i=1}^{n_1} \{\varphi_0(X_i;x,\z) \gamma_0(X_i)\Delta_i+\gamma^{\varphi_0}_1(X_i)(1-\Delta_i)- \gamma^{\varphi_0}_2(X_i)\}+R_{n_1}(\varphi_0)\\
 &=&n^{-1} \sum_{i=1}^n \frac{R_i}{\pi}\{\varphi_0(X_i;x,\z)
 \gamma_0(X_i)\Delta_i+\gamma^{\varphi_0}_1(X_i)(1-\Delta_i)-
 \gamma^{\varphi_0}_2(X_i)\}+R_{n_1}(\varphi_0)\\
 &=&n^{-1} \sum_{i=1}^n \frac{R_i}{\pi}\eta_{0p}(\X_i,\Delta_i;x,\z) +R_{n_1}(\varphi_0),
\ese 
where $\sup _{\varphi_0}|R_{n_1}(\varphi_0)|=O({\ln ^3 n_1}/{n_1})$
and the class $\{\eta_{0p}\}$ is P-Donsker. Hence, we can write the first term as
\bse
\int \varphi_0(t;x,\z) d \{\wh P_T(t) - P_T(t) \} &=& n^{-1} \sum_{i=1}^n \frac{R_i}{\pi} \eta_{0p}(X_i,\Delta_i;x,\z) - E_p\{\eta_{0p}(X,\Delta;x,\z)\} +o_p(n_1^{-1/2}).
\ese
Next, we handle the second term. We have 
\bse
\frac{\partial S_{0n}(x,\z,q_T)}{\partial q_T}[\wh q_T - q_T]
&=& \int \frac{\partial \varphi_{0}(t,x,\z,q_T)}{\partial q_T}[\wh q_T - q_T]d \wh P_T(t)\\
&=&-n_2^{-1} \sum_{i=1}^{n_2} \int  \frac{I(t>x)}{q^2_T(t)} q_{T \mid \Z}(t, \z ; \bt_0)\{q_{T \mid \Z}(t,\Z_i;\bt_0)-q_T(t)\}d \wh P_T(t)\\
&=&  -n^{-1} \sum_{i=1}^{n} \frac{1-R_i}{1-\pi} \int  \frac{I(t>x)}{q^2_T(t)} q_{T \mid \Z}(t, \z ; \bt_0)\{q_{T \mid \Z}(t,\Z_i;\bt_0)-q_T(t)\}d P_T(t) \\
&& -\int  \frac{I(t>x)}{q_T^2(t)}q_{T \mid \Z}(t,
  \z ; \bt_0) \{\wh q_T(t)-q_T(t)\}d\{\wh P_T(t) - P_T(t)\}.
\ese

Let  $\mathcal{F}_{S_0}\equiv\{t \mapsto \frac{I(t>x)}{q_T^2(t)}q_{T \mid \Z}(t,
  \z ; \bt_0)\wt q_T(t): \z \in \mathcal{S}_{\Z p}, x \in \mathcal{S}_T, \wt q_T(t)/q_T(t) \in \mathcal{F}_{14}\}$, where $\mathcal{F}_{14}$ is defined in Lemma \ref{lm:2.5}.

From Result (1) and Result (3) in Lemma \ref{lm:2.5}, the
class $\mathcal{F}_{12}$ and $\mathcal{F}_{14}$ satisfies the Condition (b2) in Lemma
C.1. Then by Condition B4, Lemma \ref{lm:2.5} Result
  (1), Lemma \ref{lm:2.5} Result
  (3) and \cite{kosorok2007introduction}[Lemma 9.25], we
 obtain  
$$\mathcal{N}_{[]}(C_1\varepsilon,\mathcal{F}_{12}\cdot
\mathcal{F}_{14},L_1(P)) \leq \mathcal{N}_{[]}(\varepsilon
,\mathcal{F}_{12},L_1(P))\mathcal{N}_{[]}(\varepsilon
,\mathcal{F}_{14},L_1(P))<+\infty$$ 
for every $\varepsilon>0$, and $C_1$ denotes some finite constant.
Hence, the class ${\cal F}_{S_0}=I(t>x) \mathcal{F}_{12}\cdot
\mathcal{F}_{14}$ also satisfies 
$$\mathcal{N}_{[]}(\varepsilon, \mathcal{F}_{S_0},L_1(P))<+\infty.$$ 

 Next, noting that the classes $\mathcal{F}_{12}$ and
$\mathcal{F}_{14}$ are bounded from above,
for any $q^{\prime}_T(t)/q_T(t),
q^{\prime \prime}_T(t)/q_T(t) \in \mathcal{F}{14}$ and $q_{T \mid \Z}(t,
\z^{\prime};\bt_0)/q_T(t), q_{T \mid \Z}(t, \z^{\prime \prime};\bt_0)/q_T(t) \in
\mathcal{F}_{12}$, we have
\bse
&&\left|\frac{I(t>x)}{q_T^2(t)}q_{T \mid \Z}(t,
  \z^{\prime} ; \bt_0)
q^{\prime}_{T}(t) - \frac{I(t>x)}{q_T^2(t)}q_{T \mid \Z}(t,
  \z^{\prime \prime} ; \bt_0)
q^{\prime \prime}_{T}(t) \right|^2\\
&\leq& C \left\{\left|\frac{q^{\prime}_{T}(t) - q^{\prime \prime}_{T}(t)}{q_T(t)}\right|^2 + \left|\frac{q_{T \mid \Z}(t,
  \z^{\prime} ; \bt_0) - q_{T \mid \Z}(t,
  \z^{\prime \prime} ; \bt_0)}{q_T(t)}\right|^2\right\},
\ese
for some constant $C$. 
Note further that
${\cal F}_{12}$ and ${\cal F}_{14}$ have the second property of (1) and (3) of Lemma
\ref{lm:2.5}, hence by \cite{van1996weak}[Theorem 2.10.20],
the class
$\mathcal{F}_{S_0}$ also has this property, i.e.,
it satisfies the requirement on covering number in
Condition (b2) of Lemma~C.1. Combining the above results,
$\mathcal{F}_{S_0}$ satisfies Condition (b2) of Lemma~C.1.

Further taking into account of Condition \ref{B3}, the class
$\mathcal{F}_{S_0}$ satisfies all the requirements in Lemma~C.1.
In addition, 
since $\|\wh q_T-q_T\|_{\infty}=o_p(n_2^{-1/4})$,
$\frac{I(t>x)}{q_T^2(t)}q_{T \mid \Z}(t, \z ; \bt_0)\wh q_T(t)$
and  $\frac{I(t>x)}{q_T^2(t)}q_{T \mid \Z}(t, \z ; \bt_0) q_T(t)$
lies in the class $\mathcal{F}_{S_0}$ with
probability tending to 1. Let $\varphi_{S_0} \equiv  \frac{I(t>x)}{q_T^2(t)}q_{T \mid \Z}(t,
  \z ; \bt_0) \{\wh q_T(t)-q_T(t)\}$.  Then,
    $\operatorname{var}(\eta^{\varphi_{S_0}})$ goes to 0 as $n_2
    \rightarrow \infty$. Hence, by result (3) of Lemma~C.1, we have 
$$
\int \frac{I(t>x)}{q_T^2(t)} q_{T \mid \Z}(t, \z ; \bt_0)\{ \wh q_T(t)-q_T(t)\}d\{\wh P_T(t) - P_T(t)\} = o_p(n_1^{-1/2}),
$$
uniformly in $(x,\z)$. By the definition of~(\ref{eq:eta0q}), the
second term can be written as
\bse 
\frac{\partial S_{0n}(x,\z,q_T)}{\partial q_T}[\wh q_T - q_T]
&=& -n^{-1} \sum_{i=1}^{n} \frac{1-R_i}{1-\pi} \int  \frac{I(t>x)}{q^2_T(t)} q_{T \mid \Z}(t, \z ; \bt_0)\{q_{T \mid \Z}(t,\Z_i;\bt_0)-q_T(t)\}d P_T(t)\\&& +o_p(n_1^{-1/2})\\
&\equiv& n^{-1} \sum_{i=1}^{n} \frac{1-R_i}{1-\pi} \eta_{0q}(\Z_i;x,\z)+o_p(n_1^{-1/2}).
\ese

Combining the above results regarding the two terms, we get
\bse
&&S_{0n}(x,\z,\wh q_T) - S_{0}(x,\z,q_T)\\
  & = &  n^{-1} \sum_{i=1}^n \frac{R_i}{\pi} \eta_{0p}(X_i,\Delta_i;x,\z) - E_p\{\eta_{0p}(X,\Delta;x,\z)\} + n^{-1} \sum_{i=1}^{n} \frac{1-R_i}{1-\pi} \eta_{0q}(\Z_i;x,\z)+ o_p(n_1^{-1/2}) + o_p(n_2^{-1/2})\\
  &\equiv& n^{-1} \sum_{i=1}^n \left\{\frac{R_i}{\pi} \eta_{0p}(X_i,\Delta_i;x,\z)  + \frac{1-R_i}{1-\pi} \eta_{0q}(\Z_i;x,\z)\right\} -
E_p\{\eta_{0p}(X,\Delta;x,\z)\}  + o_p(n_1^{-1/2}) + o_p(n_2^{-1/2})\\
 & \equiv &  n^{-1} \sum_{i=1}^n \eta_0(X_i,\Z_i,\Delta_i,R_i;x,\z) + o_p(n_1^{-1/2}) + o_p(n_2^{-1/2})
 \ese
 uniformly in $(x,\z)$,
where $E\{\eta_0(X,\Z,\Delta,R;x,\z)\mid x,\z \}=0$,
and
\bse
 \|S_{0n}(x,\z,\wh q_T) - S_{0}(x,\z,q_T)\|_\infty= o_p(n_1^{-1/4}) +
o_p(n_2^{-1/4}),
\ese
uniformly in $(x,\z)$.

\item  Following the same path as in Result (1), applying Lemma \ref{lm:2.4} and Taylor expansion, it follows that
\bse
&& S_{1n}(x,\z,\wh q_T) - S_{1}(x,\z,q_T)\\
&=& S_{1n}(x,\z,q_T) - S_{1}(x,\z,q_T) + \frac{\partial S_{1n}(x,\z,q_T)}{\partial q_T}[\wh q_T - q_T] + o_p(n_2^{-1/2})\\
&=& \int \frac{I(t>x) \frac{\partial }{\partial \bt}q_{T \mid \Z}(t, \z ; \bt_0)}{\int q_{T \mid \Z}(t, \z ; \bt_0) d Q_\Z(\z)} d \{\wh P_T(t) - P_T(t) \} +  \frac{\partial S_{1n}(x,\z,q_T)}{\partial q_T}[\wh q_T - q_T]  + o_p(n_2^{-1/2}) \\
 & \equiv & \int \varphi_1(t;x,\z) d \{\wh P_T(t) - P_T(t) \}  +  \frac{\partial S_{1n}(x,\z,q_T)}{\partial q_T}[\wh q_T - q_T]  + o_p(n_2^{-1/2}),
\ese
uniformly in $(x,\z)$.
For the first term, we prove the class $\{t \mapsto \varphi_1(t;x,\z)\equiv \frac{I(t>x) \frac{\partial }{\partial \bt}q_{T \mid \Z}(t, \z ; \bt_0)}{\int q_{T \mid \Z}(t, \z ; \bt_0) d Q_\Z(\z)}: x \in \mathcal{S}_T, \z \in  \mathcal{S}_{\Z p}\}$ satisfies the conditions of Lemma~C.1. From Result (2) in Lemma \ref{lm:2.5}, the class $\{t \mapsto \frac{\partial}{\partial \bt}q_{T\mid \Z}(t,\z;\bt_0){ /q_T(t)}\}$ satisfies the Condition (b) of Lemma~C.1. After multiplying by $I(t>x)$, by Conditions \ref{B2} and \ref{B3}, the class $\{\varphi_1\}$ meets the conditions of Lemma~C.1. By applying Lemma~C.1, we have \bse
 \int \varphi_1 d \wh P_T &=&n_1^{-1} \sum_{i=1}^{n_1} \{\varphi_1(X_i;x,\z) \gamma_0(X_i)\Delta_i+\gamma^{\varphi_1}_1(X_i)(1-\Delta_i)- \gamma^{\varphi_1}_2(X_i)\}+R_{n_1}(\varphi_1)\\
 &=&n^{-1} \sum_{i=1}^n \frac{R_i}{\pi}\{\varphi_1(X_i;x,\z)
 \gamma_0(X_i)\Delta_i+\gamma^{\varphi_1}_1(X_i)(1-\Delta_i)-
 \gamma^{\varphi_1}_2(X_i)\}+R_{n_1}(\varphi_1)\\
 &=&n^{-1} \sum_{i=1}^n \frac{R_i}{\pi}\eta_{1p}(X_i,\Delta_i;x,\z) +R_{n_1}(\varphi_1),
\ese
where $\sup _{\varphi_1}|R_{n_1}(\varphi_1)|=O({\ln ^3 n_1}/{n_1})$
and the class $\{\eta_{1p}\}$ is Donsker. Thus, the first term can be expressed as
\bse
\int \varphi_1(t;x,\z) d \{\wh P_T(t) - P_T(t) \} &=& n^{-1} \sum_{i=1}^n \frac{R_i}{\pi} \eta_{1p}(X_i,\Delta_i;x,\z) - E_p\{\eta_{1p}(X,\Delta;x,\z)\} +o_p(n_1^{-1/2}).
\ese
Regarding the second term, it can be written as
\bse
\frac{\partial S_{1n}(x,\z,q_T)}{\partial q_T}[\wh q_T - q_T]
&=& \int \frac{\partial \varphi_{1}(t,x,\z,q_T)}{\partial q_T}[\wh q_T - q_T]d \wh P_T(t)\\
&=&-n_2^{-1} \sum_{i=1}^{n_2} \int  \frac{I(t>x)}{q^2_T(t)} \tfrac{\partial}{\partial \bt} q_{T \mid \Z}(t, \z ; \bt_0)\{q_{T \mid \Z}(t,\Z_i;\bt_0)-q_T(t)\}d \wh P_T(t)\\
&=&-n^{-1} \sum_{i=1}^n \frac{1-R_i}{1-\pi}\int  \frac{I(t>x)}{ q^2_T(t)} \tfrac{\partial}{\partial \bt} q_{T \mid \Z}(t, \z ; \bt_0)\{q_{T \mid \Z}(t,\Z_i;\bt_0)-q_T(t)\}d P_T(t) \\
&& -\int\frac{I(t>x)}{q_T^2(t)}\tfrac{\partial}{\partial \bt} q_{T \mid \Z}(t, \z ; \bt_0)\{ \wh q_T(t)-q_T(t)\}d\{\wh P_T(t) - P_T(t)\}.
\ese

Let $\mathcal{F}_{S_1}\equiv\{t \mapsto
\frac{I(t>x)}{q_T^2(t)}\frac{\partial}{\partial \bt} q_{T \mid \Z}(t,
\z ; \bt_0)\wt q_T(t): \z \in \mathcal{S}_{\Z p}, x \in \mathcal{S}_T,
\wt q_T(t)/q_T(t) \in \mathcal{F}_{14}\}$. Based on Result (2) and
Result (3) in Lemma \ref{lm:2.5}, the classes $\mathcal{F}_{13}$ and
$\mathcal{F}_{14}$ both satisfy the Condition (b2) in Lemma~C.1. Noting that $\mathcal{F}_{13}$ and
$\mathcal{F}_{14}$ are both bounded,
by \cite{kosorok2007introduction}[Lemma 9.25], it follows that 
$$\mathcal{N}_{[]}(C_3\varepsilon,\mathcal{F}_{13}\cdot \mathcal{F}_{14},L_1(P)) \leq \mathcal{N}_{[]}(\varepsilon ,\mathcal{F}_{13},L_1(P))\mathcal{N}_{[]}(\varepsilon ,\mathcal{F}_{14},L_1(P))<+\infty$$
for every $\varepsilon>0$ and $C_3$ denotes some finite constant. Hence, the class ${\cal F}_{S_1}=I(t>x)\mathcal{F}_{13}\cdot
\mathcal{F}_{14}$ also satisfies 
$$\mathcal{N}_{[]}(\varepsilon, \mathcal{F}_{S_1},L_1(P))<+\infty.$$

Since the classes $\mathcal{F}_{13}$ and  
$\mathcal{F}_{14}$ are
bounded above,
for any $q^{\prime}_T(t)/q_T(t),
q^{\prime \prime}_T(t)/q_T(t) \in \mathcal{F}_{14}$ and $\frac{\partial}{\partial \bt}q_{T \mid \Z}(t,
\z^{\prime};\bt_0)/q_T(t)$, $\frac{\partial}{\partial \bt}q_{T \mid \Z}(t, \z^{\prime \prime};\bt_0)/q_T(t) \in
\mathcal{F}_{13}$, we have
\bse
&&\left|\frac{I(t>x)}{q_T^2(t)} \frac{\partial}{\partial \bt}q_{T \mid \Z}(t,
  \z^{\prime} ; \bt_0)
q^{\prime}_{T}(t) - \frac{I(t>x)}{q_T^2(t)}\frac{\partial}{\partial \bt}q_{T \mid \Z}(t,
  \z^{\prime \prime} ; \bt_0)
q^{\prime \prime}_{T}(t) \right|^2\\
&\leq& C_5 \left\{\left|\frac{q^{\prime}_{T}(t) - q^{\prime \prime}_{T}(t)}{q_T(t)} \right|^2 + \left|\frac{\frac{\partial}{\partial \bt}q_{T \mid \Z}(t,
  \z^{\prime} ; \bt_0) - \frac{\partial}{\partial \bt}q_{T \mid \Z}(t,
  \z^{\prime \prime} ; \bt_0)}{q_T(t)}\right|^2\right\},
\ese
for some constant $C_5$. By Result (2) and Result (3) in Lemma \ref{lm:2.5} and \cite{van1996weak}[Theorem 2.10.20], the class $\mathcal{F}_{S_1}$ satisfies the requirement on covering number in Condition (b2) of Lemma~C.1. Hence, $\mathcal{F}_{S_1}$ satisfies Condition (b2) of Lemma~C.1. Combining Condition \ref{B3}, the class $\mathcal{F}_{S_1}$ meets all the requirements of Lemma~C.1. 

Since $\|\wh q_T-q_T\|_{\infty}=o_p(n_2^{-1/4})$,  $\frac{I(t>x)}{q_T^2(t)}\frac{\partial}{\partial \bt} q_{T \mid \Z}(t, \z ; \bt_0)\wh q_T(t)$ and $\frac{I(t>x)}{q_T^2(t)}\frac{\partial}{\partial \bt} q_{T \mid \Z}(t, \z ; \bt_0)q_T(t)$ lie in the class $\mathcal{F}_{S_1}$ with probability tending to 1. Let $\eta^{\varphi_{S_1}} \equiv \frac{I(t>x)}{q_T^2(t)}\frac{\partial}{\partial \bt} q_{T \mid \Z}(t, \z ; \bt_0)\{\wh q_T(t)-q_T(t)\}$. Then,
    $\operatorname{var}(\eta^{\varphi_{S_1}})$ goes to 0 as $n_2
    \rightarrow \infty$. Hence, by Result (3) in Lemma~C.1, we have 
$$\int \frac{I(t>x)}{q_T^2(t)} \tfrac{\partial}{\partial \bt} q_{T \mid \Z}(t, \z ; \bt_0)\{ \wh q_T(t)-q_T(t)\}d\{\wh P_T(t) - P_T(t)\}=o_p(n_1^{-1/2}),$$
uniformly in $(x,\z)$.  According to the definition
of~(\ref{eq:eta1q}), the second term becomes
\bse
\frac{\partial S_{1n}(x,\z,q_T)}{\partial q_T}[\wh q_T - q_T]
&=& -n^{-1} \sum_{i=1}^n \frac{1-R_i}{1-\pi}\int  \frac{I(t>x)}{ q^2_T(t)} \tfrac{\partial}{\partial \bt} q_{T \mid \Z}(t, \z ; \bt_0)\{q_{T \mid \Z}(t,\Z_i;\bt_0)-q_T(t)\}d P_T(t)\\&& +o_p(n_1^{-1/2})\\
&=& n^{-1} \sum_{i=1}^{n} \frac{1-R_i}{1-\pi} \eta_{1q}(\Z_i;x,\z)+o_p(n_1^{-1/2}).
\ese
Bringing the two terms together, it follows \bse
&& S_{1n}(x,\z,\wh q_T) - S_{1}(x,\z,q_T)\\
&=& n^{-1} \sum_{i=1}^n \frac{R_i}{\pi}\eta_{1p}(X_i,\Delta_i;x,\z) - E_p\{\eta_{1p}(X,\Delta;x,\z)\}+ n^{-1} \sum_{i=1}^{n} \frac{1-R_i}{1-\pi} \eta_{1q}(\Z_i;x,\z)+ o_p(n_1^{-1/2})+ o_p(n_2^{-1/2})\\
&\equiv& n^{-1} \sum_{i=1}^n \left\{\frac{R_i}{\pi} \eta_{1p}(X_i,\Delta_i;x,\z)  + \frac{1-R_i}{1-\pi} \eta_{1q}(\Z_i;x,\z)\right\} -
E_p\{\eta_{1p}(X,\Delta;x,\z)\}  + o_p(n_1^{-1/2}) + o_p(n_2^{-1/2})\\
&\equiv&  n^{-1} \sum_{i=1}^n \eta_1(X_i,\Z_i,\Delta_i,R_i;x,\z)+ o_p(n_1^{-1/2})+ o_p(n_2^{-1/2}),
\ese
uniformly in $(x,\z)$, where $E\{\eta_1(X_i,\Z_i,\Delta_i,R_i;x,\z)\mid x,\z\}=0$, and
\bse \|S_{1n}(x,\z,\wh q_T) - S_{1}(x,\z,q_T)\|_\infty = o_p(n_1^{-1/4})+ o_p(n_2^{-1/4}),\ese
uniformly in $(x,\z)$.

\item 
From Lemma \ref{lm:2.4}, we have $\|\wh q_T - q_T\|_{\infty}=o_p(n_2^{-1/4})$ and $\|\wh q^*_T - q^*_T\|_{\infty}=o_p(n_2^{-1/4})$. By applying Taylor expansion, it follows that
\bse
&&  S_{2n}(x,\z,\wh q_T, \wh q^*_T) - S_{2}(x,\z,q_T, q^*_T)\\
&=& S_{2n}(x,\z,q_T,q^*_T) - S_{2}(x,\z,q_T,q^*_T) + \frac{\partial S_{2n}(x,\z,q_T,q^*_T)}{\partial q_T}[\wh q_T - q_T]+\frac{\partial S_{2n}(x,\z,q_T,q^*_T)}{\partial q^*_T}[\wh q^*_T - q^*_T]\\
&& + o_p(n_2^{-1/2})\\
&=& \int \frac{I(t>x) q_{T \mid \Z}(t, \z ; \bt_0) \int \frac{\partial}{\partial \bt} q_{T\mid \Z}(t, \z ; \bt_0) d  Q_\Z(\z)}{\{\int q_{T \mid \Z}(t, \z ; \bt_0) d  Q_\Z(\z)\}^2} d \{\wh P_T(t)- P_T(t)\}+ \frac{\partial S_{2n}(x,\z,q_T,q^*_T)}{\partial q_T}[\wh q_T - q_T]\\
&& + \frac{\partial S_{2n}(x,\z,q_T,q^*_T)}{\partial q^*_T}[\wh q^*_T - q^*_T]+ o_p(n_2^{-1/2})\\
&\equiv&  \int \varphi_2(t;x,\z) d\{\wh P_T(t)-P_T(t)\}+ \frac{\partial S_{2n}(x,\z,q_T,q^*_T)}{\partial q_T}[\wh q_T - q_T]+\frac{\partial S_{2n}(x,\z,q_T,q^*_T)}{\partial q^*_T}[\wh q^*_T - q^*_T]\\
&& + o_p(n_2^{-1/2}).
\ese 
For the first term, we prove the class $\{t \mapsto \varphi_2(t;x,\z) \equiv \frac{I(t>x) q_{T \mid \Z}(t, \z ; \bt_0) \int \frac{\partial}{\partial \bt} q_{T\mid \Z}(t, \z ; \bt_0) d  Q_\Z(\z)}{\{\int q_{T \mid \Z}(t, \z ; \bt_0) d  Q_\Z(\z)\}^2}: \z \in \mathcal{S}_{\Z p}, x \in \mathcal{S}_T\}$ satisfies the conditions of Lemma~C.1. From Result (1) in Lemma 2.5 and Conditions \ref{B1} and \ref{B3}, the class $\{t \mapsto q_{T \mid \Z}(t, \z ; \bt_0)/q_T(t): \z \in \mathcal{S}_{\Z p}\}$ satisfies the conditions of Lemma~C.1. After multiplying by $ q_T^*(t)/q_T(t)$, which is bounded by Condition \ref{B4}, conditions in Lemma~C.1 still hold for the class $\{\varphi_2\}$. From Lemma~C.1, we have
\bse
\int \varphi_2(t;x,\z) d\wh P_T(t)&=&n_1^{-1} \sum_{i=1}^{n_1} \{\varphi_2(X_i;x,\z) \gamma_0(X_i)\Delta_i+\gamma^{\varphi_2}_1(X_i)(1-\Delta_i)- \gamma^{\varphi_2}_2(X_i)\}+R_{n_1}(\varphi_2)\\
 &=&n^{-1} \sum_{i=1}^n \frac{R_i}{\pi}\{\varphi_2(X_i;x,\z) \gamma_0(X_i)\Delta_i+\gamma^{\varphi_2}_1(X_i)(1-\Delta_i)- \gamma^{\varphi_2}_2(X_i)\}+R_{n_1}(\varphi_2)\\
 &\equiv& n^{-1} \sum_{i=1}^n  \frac{R_i}{\pi} \eta_{2p}(X_i,\Delta_i;x,\z)+R_{n_1}(\varphi_2),
\ese
where $\sup _{\varphi_2}|R_{n_1}(\varphi_2)|=O({\ln ^3 n_1}/{n_1})$
and the class $\{\eta_{2p}\}$ is Donsker. Hence, the first term is rewritten as
\bse
\int \varphi_2(t;x,\z) d \{\wh P_T(t) - P_T(t) \} &=& n^{-1} \sum_{i=1}^n \frac{R_i}{\pi} \eta_{2p}(X_i,\Delta_i;x,\z) - E_p\{\eta_{2p}(X,\Delta;x,\z)\} +o_p(n_1^{-1/2}).
\ese
Regarding the second term, we have
\bse
&&\frac{\partial S_{2n}(x,\z,q_T,q^*_T)}{\partial q_T}[\wh q_T - q_T]\\
&=&-n_2^{-1} \sum_{i=1}^{n_2} \int  \frac{2 I(t>x)q^*_T(t)}{q^3_T(t)} q_{T \mid \Z}(t, \z ; \bt_0) \{q_{T \mid \Z}(t,\Z_i;\bt_0)-q_T(t)\}d \wh P_T(t)\\
&=& -n^{-1} \sum_{i=1}^n \frac{1-R_i}{1-\pi} \int  \frac{2 I(t>x)q^*_T(t)}{q^3_T(t)} q_{T \mid \Z}(t, \z ; \bt_0) \{q_{T \mid \Z}(t,\Z_i;\bt_0)-q_T(t)\}d P_T(t)\\
&& -\int \frac{2I(t>x) q_T^*(t)}{q^3_T(t)}q_{T \mid \Z}(t, \z ; \bt_0)\{ \wh q_T(t)-q_T(t)\}d\{\wh P_T(t) - P_T(t)\}.
\ese

Let $\mathcal{F}_{S_{21}}\equiv \{t \mapsto \frac{I(t>x) q_{T \mid \Z}(t, \z ; \bt_0)q^*_T(t)}{\{\int q_{T \mid \Z}(t, \z ; \bt_0)d Q_\Z(\z)\}^3}\wt q_{T}(t): \z \in \mathcal{S}_{\Z p}, x \in \mathcal{S}_T, \wt q_{T}(t)/q_T(t) \in \mathcal{F}_{14}\}$. By the definitions in Result (1), the class $\mathcal{F}_{S_{21}}$ can be written as $\mathcal{F}_{S_{21}} = \frac{I(t>x)q_T^*(t)}{q_T(t)}\mathcal{F}_{12}\cdot\mathcal{F}_{14}$. We have proved in Result (1) that for every $\varepsilon>0$, $\mathcal{N}_{[]}(\varepsilon,\mathcal{F}_{12}\cdot
\mathcal{F}_{14},L_1(P)) < +\infty$. Since $q_T^*(t)/q_T(t)$ is bounded, the class $\mathcal{F}_{S_{21}}$ has $\mathcal{N}_{[]}(\varepsilon, \mathcal{F}_{S_{21}},L_1(P))<+\infty$ for every $\varepsilon>0$. 

Next, for any $q^{\prime}_T(t)/q_T(t),
q^{\prime \prime}_T(t)/q_T(t) \in \mathcal{F}_{14}$ and $q_{T \mid \Z}(t,
\z^{\prime};\bt_0)/q_T(t), q_{T \mid \Z}(t, \z^{\prime \prime};\bt_0)/q_T(t) \in
\mathcal{F}_{12}$, we have
\bse
&&\left|\frac{I(t>x) q_T^*(t)}{q_T^3(t)}q_{T \mid \Z}(t,
  \z^{\prime} ; \bt_0)
q^{\prime}_{T}(t) - \frac{I(t>x) q_T^*(t)}{q_T^3(t)}q_{T \mid \Z}(t,
  \z^{\prime \prime} ; \bt_0)
q^{\prime \prime}_{T}(t) \right|^2\\
&\leq& C_6 \left\{\left|\frac{q^{\prime}_{T}(t) - q^{\prime \prime}_{T}(t)}{q_T(t)} \right|^2 + \left|\frac{q_{T \mid \Z}(t,
  \z^{\prime} ; \bt_0) - q_{T \mid \Z}(t,
  \z^{\prime \prime} ; \bt_0)}{q_T(t)}\right|^2\right\},
\ese
for some constant $C_6$. By Result (1)  and Result (3) of Lemma
\ref{lm:2.5}, the classes $\mathcal{F}_{12}$ and 
  $\mathcal{F}_{14}$  satisfies the uniform entropy condition defined
  in \cite{van1996weak}[Section 2.5.1] 
Then by \cite{van1996weak}[Theorem 2.10.20], the class
$\mathcal{F}_{S_{21}}$ also satisfies the uniform entropy
condition. Combining the previous results, $\mathcal{F}_{S_{21}}$
satisfies Condition (b2) of Lemma~C.1. By Condition \ref{B3},
the class $\mathcal{F}_{S_{21}}$ satisfies all the requirements in
Lemma~C.1. Since $\|\wh q_T -
q_T\|_{\infty}=o_p(n_2^{-1/4})$, $\frac{I(t>x) q_T^*(t)}{q^3_T(t)}q_{T
  \mid \Z}(t, \z ; \bt_0)\wh q_T(t)$ and $\frac{I(t>x)
  q_T^*(t)}{q^3_T(t)}q_{T \mid \Z}(t, \z ; \bt_0)q_T(t)$ lie in the
class $\mathcal{F}_{S_{21}}$ with probability tending to 1. Let
$\varphi_{S_{21}} \equiv \frac{I(t>x) q_T^*(t)}{q^3_T(t)}q_{T \mid
  \Z}(t, \z ; \bt_0)\{ \wh q_T(t)-q_T(t)\}$. Then,
$\operatorname{var}(\eta^{\varphi_{S_{21}}})$ goes to 0 as $n_2 
    \rightarrow \infty$. Hence, by Result (3) of Lemma~C.1, we have 
$$
\int \frac{I(t>x)q_T^*(t)}{q^3_T(t)}q_{T \mid \Z}(t, \z ; \bt_0)\{ \wh q_T(t)-q_T(t)\}d\{\wh P_T(t) - P_T(t)\} = o_p(n_1^{-1/2}),
$$
uniformly in $(x,\z)$. Thus,
\bse
&&\frac{\partial S_{2n}(x,\z,q_T,q^*_T)}{\partial q_T}[\wh q_T - q_T]\\
&=& -n^{-1} \sum_{i=1}^n \frac{1-R_i}{1-\pi} \int  \frac{2 I(t>x)q^*_T(t)}{q^3_T(t)} q_{T \mid \Z}(t, \z ; \bt_0) \{q_{T \mid \Z}(t,\Z_i;\bt_0)-q_T(t)\}d P_T(t)
+o_p(n_1^{-1/2}).
\ese

Next, we deal with the third term. The third term can be written as 
\bse
\frac{\partial S_{2n}(x,\z,q_T,q^*_T)}{\partial q^*_T}[\wh q^*_T - q^*_T] &=& n_2^{-1} \sum_{i=1}^{n_2} \int  \frac{I(t>x) q_{T \mid \Z}(t, \z ; \bt_0)\{(1-R_i) \frac{\partial}{\partial \bt}q_{T \mid \Z}(t,\Z_i;\bt_0)-q_T^*(t)\}}{\{\int q_{T \mid \Z}(t, \z ; \bt_0) d Q_\Z(\z)\}^2}d \wh P_T(t)\\
&=& n^{-1} \sum_{i=1}^n  \frac{1-R_i}{1-\pi} \int  \frac{I(t>x)}{q^2_T(t)} q_{T \mid \Z}(t, \z ; \bt_0)\{\tfrac{\partial}{\partial \bt} q_{T \mid \Z}(t,\Z;\bt_0)-q^*_T(t)\}d \wh P_T(t) \\
&=& n^{-1} \sum_{i=1}^n \frac{1-R_i}{1-\pi} \int  \frac{I(t>x)}{q^2_T(t)} q_{T \mid \Z}(t, \z ; \bt_0)\{\tfrac{\partial}{\partial \bt} q_{T \mid \Z}(t,\Z;\bt_0)-q^*_T(t)\}d P_T(t) \\
&& +\int \frac{I(t>x)}{q_T^2(t)}q_{T \mid \Z}(t, \z ; \bt_0)\{ \wh q_T^*(t)-q_T^*(t)\}d\{\wh P_T(t) - P_T(t)\}.
\ese 
Let $\mathcal{F}_{S_{22}}\equiv\{t \mapsto \frac{I(t>x) q_{T \mid \Z}(t, \z ; \bt_0)}{\{\int q_{T \mid \Z}(t, \z ; \bt_0)d Q_\Z(\z)\}^2} \wt q_{T}^*(t): \z \in \mathcal{S}_{\Z p}, x \in \mathcal{S}_T, \wt q_{T}^*(t)/q_T(t) \in \mathcal{F}_{15}\}$. From Result (1) and Result (4) in Lemma \ref{lm:2.5}, the classes $\mathcal{F}_{12}$ and $\mathcal{F}_{15}$ satisfy Condition (b2) of Lemma~C.1. Noting that the classes $\mathcal{F}_{12}$ and $\mathcal{F}_{15}$ are both bounded, by \cite{kosorok2007introduction}[Lemma 9.25], we have $$\mathcal{N}_{[]}(C_7\varepsilon,\mathcal{F}_{12}\cdot \mathcal{F}_{15},L_1(P)) \leq \mathcal{N}_{[]}(\varepsilon ,\mathcal{F}_{12},L_1(P))\mathcal{N}_{[]}(\varepsilon ,\mathcal{F}_{15},L_1(P))<+\infty$$
for every $\varepsilon>0$ and $C_7$ denotes some finite constant. Hence, the class ${\cal F}_{S_{22}}=I(t>x) \mathcal{F}_{12}\cdot
\mathcal{F}_{15}$ has $\mathcal{N}_{[]}(\varepsilon, \mathcal{F}_{S_{22}},L_1(P))<+\infty$ for
every $\varepsilon > 0$. 

Since the classes $\mathcal{F}_{12}$ and
$\mathcal{F}_{15}$ are bounded from above, for any $q^{*\prime}_T(t)/q_T(t),
q^{*\prime \prime}_T(t)/q_T(t) \in \mathcal{F}_{15}$ and $q_{T \mid \Z}(t,
\z^{\prime};\bt_0)/q_T(t), q_{T \mid \Z}(t, \z^{\prime \prime};\bt_0)/q_T(t) \in
\mathcal{F}_{12}$, we have
\bse
&&\left|\frac{I(t>x)}{q_T^2(t)}q_{T \mid \Z}(t,
  \z^{\prime} ; \bt_0)
q^{*\prime}_{T}(t) - \frac{I(t>x)}{q_T^2(t)}q_{T \mid \Z}(t,
  \z^{\prime \prime} ; \bt_0)
q^{*\prime \prime}_{T}(t) \right|^2\\
&\leq& C_9 \left\{\left|\frac{q^{*\prime}_{T}(t) - q^{*\prime \prime}_{T}(t)}{q_T(t)} \right|^2 + \left|\frac{q_{T \mid \Z}(t,
  \z^{\prime} ; \bt_0) - q_{T \mid \Z}(t,
  \z^{\prime \prime} ; \bt_0)}{q_T(t)}\right|^2\right\},
\ese
for some constant $C_9$. Note further both ${\cal F}_{12}$ and $\mathcal{F}_{15}$ satisfy the second property of (1) of Lemma
\ref{lm:2.5}, hence by \cite{van1996weak}[Theorem 2.10.20],
the class
$\mathcal{F}_{S_{22}}$ also has this property, i.e.,
it satisfies the requirement on covering number in
Condition (b2) of Lemma~C.1. Combining the above results,
$\mathcal{F}_{S_{22}}$ satisfies Condition (b2) of Lemma~C.1. By Condition \ref{B4}, the class $\mathcal{F}_{S_{22}}$ meets the conditions of Lemma~C.1. Since $\|\wh q^*_T(t) - q^*_T(t) \|_\infty = o_p(n_2^{-1/4})$, $\frac{I(t>x)}{q_T^2(t)}q_{T \mid \Z}(t, \z ; \bt_0)\wh q_T^*(t)$ and $\frac{I(t>x)}{q_T^2(t)}q_{T \mid \Z}(t, \z ; \bt_0)q_T^*(t)$ lie in the class $\mathcal{F}_{S_{22}}$ with probability tending to 1. Let $\varphi_{S_{22}} \equiv \frac{I(t>x)}{q_T^2(t)}q_{T \mid \Z}(t, \z ; \bt_0)\{\wh q_T^*(t)-q_T^*(t)\}$. Then, $\operatorname{var}(\eta^{\varphi_{S_{22}}})$ goes to 0 as $n_2
    \rightarrow \infty$. Then we have$$
\int \frac{I(t>x)}{q_T^2(t)}q_{T \mid \Z}(t, \z ; \bt_0)\{ \wh q_T^*(t)-q_T^*(t)\}d\{\wh P_T(t) - P_T(t)\}= o_p(n_1^{-1/2}),
$$
uniformly in $(x,\z)$.
 Thus,
\bse
\frac{\partial S_{2n}(x,\z,q_T,q^*_T)}{\partial q^*_T}[\wh q^*_T - q^*_T] &=& n_2^{-1} \sum_{i=1}^{n_2} \int  \frac{I(t>x) q_{T \mid \Z}(t, \z ; \bt_0)\{(1-R_i) \frac{\partial}{\partial \bt}q_{T \mid \Z}(t,\Z_i;\bt_0)-q_T^*(t)\}}{\{\int q_{T \mid \Z}(t, \z ; \bt_0) d Q_\Z(\z)\}^2}d \wh P_T(t)\\
&=& n^{-1} \sum_{i=1}^n \frac{1-R_i}{1-\pi} \int  \frac{I(t>x)}{q^2_T(t)} q_{T \mid \Z}(t, \z ; \bt_0)\{\tfrac{\partial}{\partial \bt} q_{T \mid \Z}(t,\Z;\bt_0)-q^*_T(t)\}d P_T(t) \\
&& +o_p(n_1^{-1/2}).
\ese
Using the definition of~(\ref{eq:eta2q}), the sum of the second and third terms is given by
\bse
&&\frac{\partial S_{2n}(x,\z,q_T,q^*_T)}{\partial q_T}[\wh q_T -
q_T]+\frac{\partial S_{2n}(x,\z,q_T,q^*_T)}{\partial q^*_T}[\wh q^*_T
- q^*_T]\\
&=&
 -n^{-1} \sum_{i=1}^n \frac{1-R_i}{1-\pi} \int  \frac{2 I(t>x)q^*_T(t)}{q^3_T(t)} q_{T \mid \Z}(t, \z ; \bt_0) \{q_{T \mid \Z}(t,\Z_i;\bt_0)-q_T(t)\}d P_T(t)\\
&& + n^{-1} \sum_{i=1}^n \frac{1-R_i}{1-\pi} \int  \frac{I(t>x)}{q^2_T(t)} q_{T \mid \Z}(t, \z ; \bt_0)\{\tfrac{\partial}{\partial \bt} q_{T \mid \Z}(t,\Z;\bt_0)-q^*_T(t)\}d P_T(t) +o_p(n_1^{-1/2}) \\
&=& n^{-1} \sum_{i=1}^n\frac{1-R_i}{1-\pi} \eta_{2q}(\Z_i;x,\z) +o_p(n_1^{-1/2}).
\ese
From the above, it follows that
\bse
&&   S_{2n}(x,\z,\wh q_T, \wh q^*_T) - S_{2}(x,\z,q_T, q^*_T)\\
&=&n^{-1} \sum_{i=1}^n  \frac{R_i}{\pi} \eta_{2p}(X_i,\Delta_i;x,\z) - E_p\{\eta_{2p}(X,\Delta;x,\z)\} + n^{-1} \sum_{i=1}^n\frac{1-R_i}{1-\pi} \eta_{2q}(\Z_i;x,\z) + o_p(n_1^{-1/2})+ o_p(n_2^{-1/2})\\
&\equiv&  n^{-1} \sum_{i=1}^n \left\{\frac{R_i}{\pi} \eta_{2p}(X_i,\Delta_i;x,\z)  + \frac{1-R_i}{1-\pi} \eta_{2q}(\Z_i;x,\z)\right\} -
E_p\{\eta_{2p}(X,\Delta;x,\z)\}  + o_p(n_1^{-1/2}) + o_p(n_2^{-1/2})\\
&\equiv&n^{-1} \sum_{i=1}^n \eta_2(X_i,\Z_i,\Delta_i,R_i;x,\z) + o_p(n_1^{-1/2})+ o_p(n_2^{-1/2}),
\ese
uniformly in $(x,\z)$ and 
\bse
\left\| S_{2n}(x,\z,\wh q_T, \wh q^*_T) - S_{2}(x,\z,q_T, q^*_T)  \right\|_\infty 
=  o_p(n_1^{-1/4})+ o_p(n_2^{-1/4}),
\ese
uniformly in $(x,\z)$.

 \end{enumerate}

\end{proof}

\subsubsection{Proof of Theorem 3.1}

\begin{proof}
    The identifiability by Proposition 2.1 and Lemma 5.35 in
    \cite{van2000asymptotic} implies that $E\{\ell(X, \Z, \Delta, R ;
    \bt)\}$ has a unique maximum over $\Theta$ at $\bt_0$. Next, we prove the
    function $\bt \mapsto E\{\ell(X, \Z, \Delta, R ; \bt)\}$ is
    continuous on $\Theta$. 
 Under Condition A1 and A6, for every $(x,\z) \in \mathcal{S}_T \times (\mathcal{S}_{\Z p} \cup \mathcal{S}_{\Z q})$,  $q_{T\mid \Z}(x,\z;\bt)$, $\int q_\Z(\z) q_{T\mid \Z}(x,\z;\bt)d\z$ and $\int_x^{\infty}\frac{ p_T(t)q_{T\mid \Z}(t,\z;\bt)}{
\int q_\Z(\z)q_{T\mid \Z}(t,\z;\bt)d\z}dt$ are bounded away from 0 and
are continuous with respect to $\bt$. Then for every $(\bt, \wt \bt)
\in \Theta^2$, $|\ell(x,\z,\delta,r;\bt) - \ell(x,\z,\delta,r;\wt
\bt)| \xrightarrow{ \bt \rightarrow \wt \bt} 0$.
By Conditions A2-A4, \bse|\ell(x,\z,\delta,r;\bt) -
\ell(x,\z,\delta,r;\wt \bt)| &\leq& 
  C \left[r\delta|\log\{m_1(x,\z)\}|+r\delta|\log\{m_2(x)\}|+r(1-\delta)|\log\{m_3(x,\z)\}|\right.\\
&& +
\left.r\delta|\log\{M_1(x,\z)\}|+r\delta|\log\{M_2(x)\}|+r(1-\delta)|\log\{M_3(x,\z)\}|\right],\ese for some constant $C$,
where the right side is integrable.

By applying the Lebesgue dominated convergence theorem,
$|E\{\ell(X,\Z,\Delta,R;\bt)\} - E\{\ell(X,\Z,\Delta,R;\wt \bt)\}|
\xrightarrow{ \bt \rightarrow \wt \bt} 0$, that is, $\bt \mapsto
E\{\ell(X, \Z, \Delta,R ; \bt)\}$ is continuous on $\Theta$.  
In Lemma \ref{lem2.2}, we proved that  
    $\sup_{\bt \in \Theta}|\ell_n (\bt)-E\{\ell(X, \Z, \Delta,R ; \bt)\}| \povr 0.$
    Hence, according to \cite{van2000asymptotic}[Theorem 5.7], under \ref{A1}-\ref{A6}, $\wh \bt \povr \bt_0$
  \end{proof}
  
  \subsubsection{Proof of Lemma 3.2}
  
  The second term $\psi_{2n}(\cdot)$ in the approximated score function can be written as 
\bse
&& n^{-1}\sum_{i=1}^n \psi_{2n}(X_i,\Delta_i,R_i;\bt_0) \\
&=& n^{-1}\sum_{i=1}^n \{ \psi_{2n}(X_i,\Delta_i,R_i;\bt_0) - \psi_{2}(X_i,\Delta_i,R_i;\bt_0)\}+ n^{-1}\sum_{i=1}^n \psi_{2}(X_i,\Delta_i,R_i;\bt_0).
\ese 
From Lemma \ref{lm:2.4}, we have $\|\wh q_T(t)-q_T(t) \|_\infty=o_p(n_2^{-1/4})$ and $\|\wh q_T^*(t)-q_T^*(t) \|_\infty=o_p(n_2^{-1/4})$. Hence, by applying Taylor's expansion, we have the following uniformly in $(x,\delta,r)$
\bse
&& \psi_{2n}(x,\delta,r;\bt_0) - \psi_{2}(x,\delta,r;\bt_0)\\
&=& \frac{r\delta}{\pi}\frac{\{ \wh q_T^*(x)-q_T^*(x) \}}{q_T(x) } - \frac{r\delta}{\pi}\frac{q_T^*(x)  \{\wh q_T(x) - q_T(x) \} }{\{ q_T(x)\}^2}  +o_p(n_2^{-1/2})\\
&=&n^{-1} \sum_{i=1}^n \frac{r\delta}{\pi} \frac{1-R_i}{1-\pi}\frac{\frac{\partial}{\partial \bt}  q_{T\mid\Z}(x,\Z_i;\bt_0)q_T(x)- q_T^*(x) q_{T\mid\Z}(x,\Z_i;\bt_0) }{\{q_T(x)\}^2} +o_p(n_2^{-1/2}).
\ese 
Using the Hajek projection of U-statistics, it follows that 
\be
&& n^{-1}\sum_{i=1}^n \{\psi_{2n}(X_i,\Delta_i,R_i;\bt_0) - \psi_{2}(X_i,\Delta_i,R_i;\bt_0)\} \notag \\
 &=& n^{-2} \sum_{i=1}^n \sum_{j=1}^n \frac{R_j\Delta_j}{\pi} \frac{1-R_i}{1-\pi}\frac{\frac{\partial}{\partial \bt}  q_{T\mid\Z}(X_j,\Z_i;\bt_0)q_T(X_j)- q_T^*(X_j) q_{T\mid\Z}(X_j,\Z_i;\bt_0) }{\{q_T(X_j)\}^2} +o_p(n_2^{-1/2})\notag\\
 &=& n^{-1} \sum_{i=1}^n E \left[ \left. \frac{R\Delta}{\pi} \frac{1-R_i}{1-\pi}\frac{\frac{\partial}{\partial \bt}  q_{T\mid\Z}(X,\Z_i;\bt_0)q_T(X)- q_T^*(X) q_{T\mid\Z}(X,\Z_i;\bt_0) }{\{q_T(X)\}^2}\right| \Z_i, R_i \right]+o_p(n^{-1/2})+o_p(n_2^{-1/2})\notag\\
 &=& n^{-1} \sum_{i=1}^n \frac{1-R_i}{1-\pi} E \left[ \left. \frac{R\Delta}{\pi} \frac{\frac{\partial}{\partial \bt}  q_{T\mid\Z}(X,\Z_i;\bt_0)q_T(X)- q_T^*(X) q_{T\mid\Z}(X,\Z_i;\bt_0) }{\{q_T(X)\}^2}\right| \Z_i \right]\notag\\
 && +o_p(n_1^{-1/2})+o_p(n_2^{-1/2}). \label{eq:psi2}
\ee
Hence, we obtain
\bse
&& n^{-1}\sum_{i=1}^n \psi_{2n}(X_i,\Delta_i,R_i;\bt_0) \\
&=& n^{-1}\sum_{i=1}^n \left( \psi_{2}(X_i,\Delta_i,R_i;\bt_0)+\frac{1-R_i}{1-\pi} E \left[ \left. \frac{R\Delta}{\pi} \frac{\frac{\partial}{\partial \bt}  q_{T\mid\Z}(X,\Z_i;\bt_0)q_T(X)- q_T^*(X) q_{T\mid\Z}(X,\Z_i;\bt_0) }{\{q_T(X)\}^2}\right| \Z_i \right]\right)\\
&& +o_p(n_1^{-1/2})+o_p(n_2^{-1/2}).
\ese

  \subsubsection{Proof of Lemma 3.3}
  
  Following the same path as in the proof of Lemma 3.2, for $\psi_{3n}(x,\z,\delta,r;\bt_0)$, we have
\bse
&& n^{-1}\sum_{i=1}^n \psi_{3n}(X_i,\Z_i,\Delta_i,R_i;\bt_0)\\
&=& n^{-1}\sum_{i=1}^n \{\psi_{3n}(X_i,\Z_i,\Delta_i,R_i;\bt_0) - \psi_{3}(X_i,\Z_i,\Delta_i,R_i;\bt_0)\} + n^{-1}\sum_{i=1}^n \psi_{3}(X_i,\Z_i,\Delta_i,R_i;\bt_0).
\ese
By the definition, \bse
&& \psi_{3n}(x,\z,\delta,r;\bt_0) - \psi_{3}(x,\z,\delta,r;\bt_0)\\
&=& \frac{r(1-\delta)}{\pi} \left[ \left\{ \frac{S_{1 n}(x,\z,\wh q_T)}{S_{0 n}(x,\z,\wh q_T)} - \frac{S_{1}(x,\z, q_T)}{S_{0}(x,\z, q_T)} \right\} - \left\{\frac{S_{2n}(x,\z,\wh q_T,\wh q^{*}_T)}{S_{0 n}(x,\z,\wh q_T)} - \frac{ S_2(x,\z,q_T,q^{*}_T)}{S_{0}(x,\z, q_T)} \right\}\right].
\ese
According to Lemma \ref{lm:2.6}, $\| S_{0n}(x,\z,\wh q_T) - S_{0}(x,\z,q_T)\|_\infty = o_p(n_1^{-1/4}) + o_p(n_2^{-1/4})$, $\| S_{1n}(x,\z,\wh q_T) - S_{1}(x,\z,q_T)\|_\infty = o_p(n_1^{-1/4}) + o_p(n_2^{-1/4})$ and $\| S_{2n}(x,\z,\wh q_T, \wh q^*_T) - S_{2}(x,\z,q_T, q^*_T)\|_\infty = o_p(n_1^{-1/4}) + o_p(n_2^{-1/4})$. By Taylor's expansion, we have 
\bse
&& \frac{S_{1 n}(x,\z,\wh q_T)}{S_{0 n}(x,\z,\wh q_T)} - \frac{S_{1}(x,\z, q_T)}{S_{0}(x,\z, q_T)} \\
&=& \frac{1}{S_0(x,\z,q_T)} n^{-1} \sum_{i=1}^n \eta_1(X_i,\Z_i,\Delta_i,R_i;x,\z)- \frac{S_{1}(x,\z, q_T)}{\{S_0(x,\z,q_T)\}^2} n^{-1} \sum_{i=1}^n \eta_0(X_i,\Z_i,\Delta_i,R_i;x,\z) + o_p(n_1^{-1/2})+o_p(n_2^{-1/2}) \\
&=& n^{-1} \sum_{i=1}^n \left[ \frac{\eta_1(X_i,\Z_i,\Delta_i,R_i;x,\z)}{S_0(x,\z,q_T)} - \frac{S_{1}(x,\z, q_T) \eta_0(X_i,\Z_i,\Delta_i,R_i;x,\z) }{\{S_0(x,\z,q_T)\}^2} \right] + o_p(n_1^{-1/2})+o_p(n_2^{-1/2}),
\ese
uniformly in $(x,\z)$
and
\bse
&& \frac{S_{2 n}(x,\z,\wh q_T, \wh q_T^*)}{S_{0 n}(x,\z,\wh q_T)} - \frac{S_{2}(x,\z, q_T,q_T^*)}{S_{0}(x,\z, q_T)} \\
&=& \frac{1}{S_0(x,\z,q_T)} n^{-1} \sum_{i=1}^n \eta_2(X_i,\Z_i,\Delta_i,R_i;x,\z)- \frac{S_{2}(x,\z, q_T)}{\{S_0(x,\z,q_T)\}^2} n^{-1} \sum_{i=1}^n \eta_0(X_i,\Z_i,\Delta_i,R_i;x,\z) + o_p(n_1^{-1/2})+o_p(n_2^{-1/2}) \\
&=& n^{-1} \sum_{i=1}^n \left[ \frac{\eta_2(X_i,\Z_i,\Delta_i,R_i;x,\z)}{S_0(x,\z,q_T)} - \frac{S_{2}(x,\z, q_T,q_T^*) \eta_0(X_i,\Z_i,\Delta_i,R_i;x,\z) }{\{S_0(x,\z,q_T)\}^2} \right] + o_p(n_1^{-1/2})+o_p(n_2^{-1/2}),
\ese
uniformly in $(x,\z)$.
Hence, the following holds
\bse
&& \psi_{3n}(x,\z,\delta,r;\bt_0) - \psi_{3}(x,\z,\delta,r;\bt_0)\\
&=& \frac{r(1-\delta)}{\pi} \left[ \left\{ \frac{S_{1 n}(x,\z,\wh q_T)}{S_{0 n}(x,\z,\wh q_T)} - \frac{S_{1}(x,\z, q_T)}{S_{0}(x,\z, q_T)} \right\} - \left\{\frac{S_{2n}(x,\z,\wh q_T,\wh q^{*}_T)}{S_{0 n}(x,\z,\wh q_T)} - \frac{ S_2(x,\z,q_T,q^{*}_T)}{S_{0}(x,\z, q_T)} \right\}\right]\\
&=&n^{-1} \sum_{i=1}^n \frac{r(1-\delta)}{\pi} \left[ \frac{\eta_1(X_i,\Z_i,\Delta_i,R_i;x,\z)-\eta_2(X_i,\Z_i,\Delta_i,R_i;x,\z)}{S_0(x,\z,q_T)} \right.\\
&&\left. - \frac{\{S_{1}(x,\z, q_T)-S_{2}(x,\z, q_T,q_T^*)\} \eta_0(X_i,\Z_i,\Delta_i,R_i;x,\z) }{\{S_0(x,\z,q_T)\}^2} \right]+ o_p(n_1^{-1/2})+o_p(n_2^{-1/2}),
\ese
uniformly in $(x,\z,\delta,r)$.
Note that from formulas~(\ref{eq:eta0}), (\ref{eq:eta1}) and
  (\ref{eq:eta2}), each of $\eta_0$, $\eta_1$, and $\eta_2$ contain
  two components: one associated with the estimation of $p_T(\cdot)$,
  and the other with the estimation of $q_\Z(\cdot)$, that is, for
  $j=0,1,2$,  
\bse
\eta_j(X,\Z,\Delta,R;x,\z) &=& \frac{R}{\pi} \eta_{jp}(X,\Delta;x,\z)+ \frac{1-R}{1-\pi} \eta_{jq}(\Z;x,\z) - E_p\{\eta_{jp}(X,\Delta;x,\z)\}.
\ese Combining the decomposition of $\eta_0$, $\eta_1$, and $\eta_2$
and using the Hajek projection of U-statistics, we have 
\be
&& n^{-1}\sum_{i=1}^n \{\psi_{3n}(X_i,\Z_i,\Delta_i,R_i;\bt_0) - \psi_{3}(X_i,\Z_i,\Delta_i,R_i;\bt_0)\} \notag\\
&=& n^{-2} \sum_{i=1}^n \sum_{j=1}^n \frac{R_j(1-\Delta_j)}{\pi}
\left[
  \frac{\eta_1(X_i,\Z_i,\Delta_i,R_i;X_j,\Z_j)-\eta_2(X_i,\Z_i,\Delta_i,R_i;X_j,\Z_j)}{S_0(
      X_j,\Z_j,q_T)} \right.\notag\\
&&\left. - \frac{\{S_{1}(X_j,\Z_j, q_T)-S_{2}(X_j,\Z_j, q_T,q_T^*)\} \eta_0(X_i,\Z_i,\Delta_i,R_i;X_j,\Z_j) }{\{S_0(X_j,\Z_j,q_T)\}^2} \right] + o_p(n_1^{-1/2})+o_p(n_2^{-1/2})\notag\\
&=& n^{-1} \sum_{i=1}^n E\left( \frac{R(1-\Delta)}{\pi} \left[ \frac{\eta_1(X_i,\Z_i,\Delta_i,R_i;X,\Z)-\eta_2(X_i,\Z_i,\Delta_i,R_i;X,\Z)}{S_0(X,\Z,q_T)} \right.\right.\notag\\
&&\left. \left. \left.- \frac{\{S_{1}(X,\Z, q_T)-S_{2}(X,\Z,
        q_T,q_T^*)\} \eta_0(X_i,\Z_i,\Delta_i,R_i;X,\Z)
      }{\{S_0(X,\Z,q_T)\}^2} \right]\right|X_i,\Z_i,\Delta_i, R_i
\right)\notag \\&&+ o_p(n_1^{-1/2})+o_p(n_2^{-1/2})\notag\\
&=&n^{-1} \sum_{i=1}^n E_p\left( (1-\Delta) \left[ \frac{\eta_1(X_i,\Z_i,\Delta_i,R_i;X,\Z)-\eta_2(X_i,\Z_i,\Delta_i,R_i;X,\Z)}{S_0(X,\Z,q_T)} \right.\right.\notag\\
&&\left. \left. \left.- \frac{\{S_{1}(X,\Z, q_T)-S_{2}(X,\Z,
        q_T,q_T^*)\} \eta_0(X_i,\Z_i,\Delta_i,R_i;X,\Z)
      }{\{S_0(X,\Z,q_T)\}^2} \right]\right|X_i,\Z_i,\Delta_i, R_i
\right)\notag 
+ o_p(n_1^{-1/2})+o_p(n_2^{-1/2})\notag\\
&=&  n^{-1} \sum_{i=1}^n\left( \frac{R_i}{\pi} E_p \left\{ (1-\Delta)\frac{\eta_{1p}(X_i,\Delta_i;X,\Z) - \eta_{2p}(X_i,\Delta_i;X,\Z)}{S_0(X,\Z,q_T)}\right| X_i,\Delta_i\right\}\notag \\
 && \left. -\frac{R_i}{\pi}  E_p \left[ (1-\Delta) \frac{S_{1}(X,\Z, q_T)-S_{2}(X,\Z, q_T,q_T^*)}{\{S_0(X,\Z,q_T)\}^2}\eta_{0p}(X_i,\Delta_i;X,\Z)\right| X_i,\Delta_i  \right]\notag\\
&& +\frac{1-R_i}{1-\pi}E_p\left.\left\{(1-\Delta)\frac{\eta_{1q}(\Z_i;X,\Z)-\eta_{2q}(\z,\X,\Z)}{S_0(X,\Z,q_T)} \right | \Z_i \right\}\notag
\\&&  - \left. \frac{1-R_i}{1-\pi}E_p\left.\left[ (1-\Delta)\frac{S_{1}(X,\Z, q_T)-S_{2}(X,\Z, q_T,q_T^*)}{\{S_0(X,\Z,q_T)\}^2}\eta_{0q}(\Z_i;X,\Z) \right| \Z_i \right] \right)\notag\\
&&  - E_p\left((1-\Delta)\left[\frac{\eta_{1p}(X,\Delta;X,\Z) - \eta_{2p}(X,\Delta;X,\Z)}{S_0(X,\Z,q_T)}-\frac{S_{1}(X,\Z, q_T)-S_{2}(X,\Z, q_T,q_T^*)}{\{S_0(X,\Z,q_T)\}^2}\eta_{0p}(X,\Delta;X,\Z)\right]\right)\notag
\\&& + o_p(n_1^{-1/2})+o_p(n_2^{-1/2}).\notag
\ee
 Next, we show that the following term in~(\ref{eq:psi3}) equals \0 for any $x$ and $\delta$
\bse
E_p\left\{(1-\Delta)\frac{\eta_{1p}(x,\delta;X,\Z) - \eta_{2p}(x,\delta;X,\Z)}{S_0(X,\Z,q_T)} \right\}=\0.
\ese 
For simplicity, we denote $\S(t,\z,\bt)\equiv \tfrac{\partial}{\partial \bt_0}q_{T \mid \Z}(t, \z ; \bt_0)/q_{T \mid \Z}(t, \z ; \bt_0)$ and $E\{\S(t,\Z,\bt_0)\mid t\}=\int \S(t,\z,\bt_0)f_{\Z\mid T}(\z,t)d\z$. Therefore,
\bse
&&\frac{\eta_{1p}(x,\delta;X,\z)-\eta_{2p}(x,\delta;X,\z)}{S_0(X,\z,q_T)}\\
&=&\frac{q_\Z(\z)}{\int_{X}^\infty p_{T,\Z}(t,\z)dt}\left(\frac{[\S(x,\z,\bt_0)
-E\{\S(x,\Z,\bt_0)\mid x\}]
I(x>X) f_{\Z\mid T}(\z,x)}{q_{\Z}(\z)}\frac{\delta}{1-P_C(x-)}\right.\\
&& + \frac{\int I(t>x) I(t>X)[\S(t,\z,\bt_0)-E\{\S(t,\Z,\bt_0)\mid t\}]f_{\Z \mid T}(\z,t)p_T(t)dt}{q_\Z(\z)\{1-P_T( x)\}\{1-P_C(x)\}}(1-\delta)\\
&&\left. -\iint \frac{I(v<x, v<w)[\S(w,\z,\bt_0)-E\{\S(w,\Z,\bt_0)\mid w\}]I(w> X)f_{\Z \mid T}(\z,w) p_T(w)p_C(v)
}{q_\Z(\z)\{1-P_T(v)\}\{1-P_C(v)\}^2} dvdw\right),
\ese
and 
\bse
&&E_p \left\{ (1-\Delta)\frac{\eta_{1p}(x,\delta;X,\Z) -
      \eta_{2p}(x,\delta;X,\Z)}{S_0(X,\Z,q_T)}\mid
  x,\delta\right\}\\
&=&E_p\left\{ I(T> C)\frac{\eta_{1p}(x,\delta;C,\Z) -
      \eta_{2p}(x,\delta;C,\Z)}{S_0(C,\Z,q_T)}\mid
  x,\delta\right\}\\
&=&\iint
[\S(x,\z,\bt_0)
-E\{\S(x,\Z,\bt_0)\mid x\}]
I(x>c) \frac{f_{\Z\mid T}(\z,x)}{p_\Z(\z)}\frac{\delta}{1-P_C(x-)}p_C(c) p_\Z(\z)dc d\z\\
&& + \iint \frac{\int I(t>x) I(t>c)
[\S(t,\z,\bt_0)
-E\{\S(t,\Z,\bt_0)\mid t\}]f_{\Z \mid T}(\z,t)p_T(t)dt}{p_\Z(\z)\{1-P_T( x)\}\{1-P_C(x)\}}p_C(c) p_\Z(\z)dc d\z(1-\delta)\\
&&  -\iint \iint \frac{I(v<x, v<w)
[\S(w,\z,\bt_0)
-E\{\S(w,\Z,\bt_0)\mid w\}]I(w> c)f_{\Z \mid T}(\z,w) p_T(w)p_C(v)
}{p_\Z(\z)\{1-P_T(v)\}\{1-P_C(v)\}^2} dvdw p_C(c)p_\Z(\z)dcd\z\\
&=&\int[\S(x,\z,\bt_0)
-E\{\S(x,\Z,\bt_0)\mid x\}]
f_{\Z\mid T}(\z,x) d\z \delta\\
&& + \int \int
[\S(t,\z,\bt_0)
-E\{\S(t,\Z,\bt_0)\mid t\}]
f_{\Z \mid T}(\z,t) d\z\frac{I(t>x)P_C(t)}{\{1-P_T( x)\}\{1-P_C( x)\}}
p_T(t)dt (1-\delta)\\
&&  -\iint  \frac{\int
[\S(w,\z,\bt_0)
-E\{\S(w,\Z,\bt_0)\mid w\}]f_{\Z
    \mid T}(\z,w) d\z  p_T(w) P_C(w) I(v<x, v<w) p_C(v)
}{\{1-P_T(v)\}\{1-P_C(v)\}^2} dvdw \\
&=&\0.
\ese

Hence, we have \be
	&&n^{-1}\sum_{i=1}^n \{\psi_{3n}(X_i,\Delta_i,R_i;\bt_0)\notag\\
	&=& n^{-1}\sum_{i=1}^n\left( \psi_{3}(X_i,\Delta_i,R_i;\bt_0)\right.\left. -\frac{R_i}{\pi}  E_p \left[ (1-\Delta) \frac{S_{1}(X,\Z, q_T)-S_{2}(X,\Z, q_T,q_T^*)}{\{S_0(X,\Z,q_T)\}^2}\eta_{0p}(X_i,\Delta_i;X,\Z)\right| X_i,\Delta_i  \right]\notag \\
&& +\frac{1-R_i}{1-\pi}E_p\left.\left\{(1-\Delta)\frac{\eta_{1q}(\Z_i;X,\Z)-\eta_{2q}(\Z_i,\X,\Z)}{S_0(X,\Z,q_T)} \right | \Z_i \right\}
\notag \\&&  - \left. \frac{1-R_i}{1-\pi}E_p\left.\left[ (1-\Delta)\frac{S_{1}(X,\Z, q_T)-S_{2}(X,\Z, q_T,q_T^*)}{\{S_0(X,\Z,q_T)\}^2}\eta_{0q}(\Z_i;X,\Z) \right| \Z_i \right] \right)\notag\\
&& - E_p\left[-(1-\Delta)\frac{S_{1}(X,\Z, q_T)-S_{2}(X,\Z, q_T,q_T^*)}{\{S_0(X,\Z,q_T)\}^2}\eta_{0p}(X,\Delta;X,\Z)\right]\notag\\
	&&+ o_p(n_1^{-1/2}) +o_p(n_2^{-1/2}).\label{eq:psi3}
	\ee

\subsubsection{Proof of Theorem 3.2}

Let $n_0 \equiv \min(n_1,n_2)$. Under Conditions \ref{A1}-\ref{A6}, and \ref{B1}-\ref{B4},
    $$\sqrt{n_0}(\wh \bt - \bt_0) \dovr N(0,\bf \Sigma),$$
    as $n_0 \rightarrow \infty$, where 
   \be
       \bf \Sigma &\equiv& E_p\{\tfrac{\partial}{\partial \bt} \psi(X,\Z,\Delta;\bt_0)\}^{-1}{\bf \Sigma}_{\psi}E_p\{\tfrac{\partial}{\partial \bt}\psi(X,\Z,\Delta;\bt_0)\}^{-\rm T},\notag \\
         \bf \Sigma_{\psi} &\equiv& \min(\tfrac{1}{\pi},\tfrac{1}{1-\pi})[(1-\pi)\var_p\{\psi(X,\Z,\Delta;\bt_0)+\psi_{p_T}(X,\Delta)\}+\pi \var_q\{\psi_{q_\Z}(\Z)\}],\notag \\
        \psi(x,\z,\delta;\bt_0) &\equiv&\delta \frac{\frac{\partial}{\partial \bt} q_{T \mid \Z}(x, \z ; \bt_0)}{q_{T \mid \Z}(x, \z ; \bt_0)} - \delta \frac{q_T^*(x)}{q_T(x)} + 
(1-\delta) \frac{S_1(x,\z,q_T)-S_2(x,\z,q_T,q^{*}_T)}{S_0(x,\z,q_T)},\label{eq:psi}\\ \psi_{p_T}(x,\delta)&\equiv& 
\left. -E_p \left[ (1-\Delta) \frac{S_{1}(X,\Z, q_T)-S_{2}(X,\Z, q_T,q_T^*)}{\{S_0(X,\Z,q_T)\}^2}\eta_{0p}(x,\delta;X,\Z)\right| x,\delta  \right],\label{eq:psip} \\
\psi_{q_\Z}(\z)&\equiv&  - E_p\left[ \left. \Delta \frac{\frac{\partial}{\partial \bt}  q_{T\mid\Z}(X,\z;\bt_0)q_T(X)- q_T^*(X) q_{T\mid\Z}(X,\z;\bt_0) }{\{q_T(X)\}^2}\right| \z \right]  \notag \\ &&  + E_p\left.\left\{(1-\Delta)\frac{\eta_{1q}(\z;X,\Z)-\eta_{2q}(\z,\X,\Z)}{S_0(X,\Z,q_T)} \right | \z \right\} \notag \\&&  - E_p\left.\left[ (1-\Delta)\frac{S_{1}(X,\Z, q_T)-S_{2}(X,\Z, q_T,q_T^*)}{\{S_0(X,\Z,q_T)\}^2}\eta_{0q}(\z;X,\Z) \right| \z \right],\label{eq:psiq}
\ee
and $q_T$, $q_T^*$ $S_i$, $\eta_{ip}$ and $\eta_{iq}$ for $i=0,1,2$
are defined in Supplement~\ref{Sup:Notations},
formulas~(\ref{eq:eta0})-(\ref{eq:eta2q}) respectively.

\begin{proof}
    The score function of one single observation dividing by $\pi$ is
    \bse
\frac{r}{\pi}\psi(x,\z,\delta;\bt) &\equiv& \frac{\partial}{\partial \bt} \ell(x, \z,\delta,r; \bt)\\
&=&\frac{r \delta}{\pi} \frac{\frac{\partial}{\partial \bt} q_{T \mid \Z}(x, \z ; \bt)}{q_{T \mid \Z}(x, \z ; \bt)} - \frac{r \delta}{\pi} \frac{\int \frac{\partial}{\partial \bt}  q_{T\mid\Z}(x,\z;\bt)dQ_\Z(\z)}{\int q_{T\mid\Z}(x,\z;\bt)dQ_\Z(\z)} \\ && + 
\frac{r (1-\delta)}{\pi} \frac{\int I(t>x)\left[ \frac{\frac{\partial}{\partial \bt} q_{T \mid Z}(t, \z ; \bt)}{\int q_{T \mid \Z}(t, \z ; \bt) d Q(z)}-\frac{q_{T \mid \Z}(t, \z ; \bt) \int \frac{\partial}{\partial \bt} q_{T\mid \Z}(t, \z ; \bt) d Q_\Z(\z)}{\{\int q_{T \mid \Z}(t, \z ; \bt) d Q_\Z(\z)\}^2}\right] d P_T(t)}{\int  \frac{I(t>x) q_{T \mid \Z}(t, \z ; \bt)}{\int q_{T \mid \Z}(t, \z ; \bt) d Q_\Z(\z)} d P_T(t)}\\
&\equiv & \psi_{1}(x,\z,\delta,r;\bt ) - \psi_{ 2}(x,\delta,r;\bt) + \psi_{3}(x,\z,\delta,r;\bt ).
\ese

In practice, the estimator $\wh \bt$ satisfies $\Psi_{n}(\wh \bt)=0$
based on estimated score function given by 
\bse
\frac{r}{\pi}\psi_n(x,\z,\delta;\bt) &\equiv& \frac{r \delta}{\pi} \frac{\frac{\partial}{\partial \bt} q_{T \mid \Z}(x, \z ; \bt)}{q_{T \mid \Z}(x, \z ; \bt)} - \frac{r \delta}{\pi} \frac{\int \frac{\partial}{\partial \bt}  q_{T\mid\Z}(x,\z;\bt)d\wh Q_\Z(\z)}{\int q_{T\mid\Z}(x,\z;\bt)d\wh Q_\Z(\z)} \\ && + 
\frac{r (1-\delta)}{\pi} \frac{\int I(t>x)\left[ \frac{\frac{\partial}{\partial \bt} q_{T \mid Z}(t, \z ; \bt)}{\int q_{T \mid \Z}(t, \z ; \bt) d \wh Q_\Z(\z)}-\frac{q_{T \mid \Z}(t, \z ; \bt) \int \frac{\partial}{\partial \bt} q_{T\mid \Z}(t, \z ; \bt) d \wh Q_\Z(\z)}{\{\int q_{T \mid \Z}(t, \z ; \bt) d \wh Q_\Z(\z)\}^2}\right] d \wh P_T(t)}{\int  \frac{I(t>x) q_{T \mid \Z}(t, \z ; \bt)}{\int q_{T \mid \Z}(t, \z ; \bt) d \wh Q_\Z(\z)} d \wh P_T(t)}\\
& \equiv & \psi_{1}(x,\z,\delta,r;\bt) - \psi_{2n}(x,\delta,r;\bt) + \psi_{3n}(x,\z,\delta,r;\bt),
\ese
and 
\be
\Psi_{n}(\bt)&\equiv& n^{-1} \sum_{i=1}^{n} \frac{R_i}{\pi} \psi_{n}(X_i,\Z_i,\Delta_i;\bt)\notag \\ 
&=& n^{-1}\sum_{i=1}^n \{ \psi_{1}(X_i,\Z_i,\Delta_i,R_i;\bt) - \psi_{2n}(X_i,\Delta_i,R_i;\bt) + \psi_{3n}(X_i,\Z_i,\Delta_i,R_i;\bt)\}.\label{eq:psin} 
\ee

Then we plug (\ref{eq:psi2}) and (\ref{eq:psi3}) into (\ref{eq:psin}) so that
\bse
&&\Psi_{n}(\bt_0)\\
&=&n^{-1}\sum_{i=1}^n \{\psi_1(X_i,\Z_i,\Delta_i,R_i;\bt_0)-\psi_2(X_i,\Delta_i,R_i;\bt_0)+\psi_3(X_i,\Z_i,\Delta_i,R_i;\bt_0)\}\\
&&-  n^{-1}\sum_{i=1}^n \{\psi_{2n}(X_i,\Delta_i,R_i;\bt_0) - \psi_{2}(X_i,\Delta_i,R_i;\bt_0)\} +  n^{-1}\sum_{i=1}^n \{\psi_{3n}(X_i,\Z_i,\Delta_i,R_i;\bt_0) - \psi_{3}(X_i,\Z_i,\Delta_i,R_i;\bt_0)\}\\
&=& n^{-1}\sum_{i=1}^n\psi(X_i,\Z_i,\Delta_i,R_i;\bt_0)-  n^{-1} \sum_{i=1}^n \frac{1-R_i}{1-\pi} E \left[ \left. \frac{R\Delta}{\pi} \frac{\frac{\partial}{\partial \bt}  q_{T\mid\Z}(X,\Z_i;\bt_0)q_T(X)- q_T^*(X) q_{T\mid\Z}(X,\Z_i;\bt_0) }{\{q_T(X)\}^2}\right| \Z_i \right]\\
&& \left. -\frac{R_i}{\pi}  E_p \left[ (1-\Delta) \frac{S_{1}(X,\Z, q_T)-S_{2}(X,\Z, q_T,q_T^*)}{\{S_0(X,\Z,q_T)\}^2}\eta_{0p}(X_i,\Delta_i;X,\Z)\right| X_i,\Delta_i  \right]\notag\\
&& +\frac{1-R_i}{1-\pi}E_p\left.\left\{(1-\Delta)\frac{\eta_{1q}(\Z_i;X,\Z)-\eta_{2q}(\Z_i,\X,\Z)}{S_0(X,\Z,q_T)} \right | \Z_i \right\}\notag
\\&&  - \left. \frac{1-R_i}{1-\pi}E_p\left.\left[ (1-\Delta)\frac{S_{1}(X,\Z, q_T)-S_{2}(X,\Z, q_T,q_T^*)}{\{S_0(X,\Z,q_T)\}^2}\eta_{0q}(\Z_i;X,\Z) \right| \Z_i \right] \right)\notag\\
&& - E_p\left[-(1-\Delta)\frac{S_{1}(X,\Z, q_T)-S_{2}(X,\Z, q_T,q_T^*)}{\{S_0(X,\Z,q_T)\}^2}\eta_{0p}(X,\Delta;X,\Z)\right]\\
&=& n^{-1}\sum_{i=1}^n\psi(X_i,\Z_i,\Delta_i,R_i;\bt_0) +n^{-1}\sum_{i=1}^n \frac{R_i}{\pi}\psi_{p_T}(X_i,\Delta_i)
-E_p \{\psi_{p_T}(X,\Delta)\}\\
&&+n^{-1}\sum_{i=1}^n \frac{1-R_i}{1-\pi}\psi_{q_\Z}(Z_i) + o_p(n_1^{-1/2})+o_p(n_2^{-1/2}),
\ese
where
\bse
\psi_{p_T}(x,\delta)&\equiv&
\left. -E_p \left[ (1-\Delta) \frac{S_{1}(X,\Z, q_T)-S_{2}(X,\Z, q_T,q_T^*)}{\{S_0(X,\Z,q_T)\}^2}\eta_{0p}(x,\delta;X,\Z)\right| x,\delta  \right],\\
\psi_{q_\Z}(\z)&\equiv& - E_p\left[ \left. \Delta \frac{\frac{\partial}{\partial \bt}  q_{T\mid\Z}(X,\z;\bt_0)q_T(X)- q_T^*(X) q_{T\mid\Z}(X,\z;\bt_0) }{\{q_T(X)\}^2}\right| \z \right] \\ && + E_p\left.\left\{(1-\Delta)\frac{\eta_{1q}(\z;X,\Z)-\eta_{2q}(\z,\X,\Z)}{S_0(X,\Z,q_T)} \right | \z \right\}\\&& - E_p\left.\left[ (1-\Delta)\frac{S_{1}(X,\Z, q_T)-S_{2}(X,\Z, q_T,q_T^*)}{\{S_0(X,\Z,q_T)\}^2}\eta_{0q}(\z;X,\Z) \right| \z \right].
\ese
 Also, by the definitions of  $q_T ,q_T^*$, $S_0, S_1, S_2$ in Section
   \ref{Sup:Notations}, we can simplify $\psi(x,\z,\delta;\bt_0)$ as
\bse
 \frac{r}{\pi}\psi(x,\z,\delta;\bt_0)&=&\frac{r \delta}{\pi} \frac{\frac{\partial}{\partial \bt_0} q_{T \mid \Z}(x, \z ; \bt_0)}{q_{T \mid \Z}(x, \z ; \bt_0)} - \frac{r \delta}{\pi} \frac{\int \frac{\partial}{\partial \bt_0}  q_{T\mid\Z}(x,\z;\bt_0)dQ_\Z(\z)}{\int q_{T\mid\Z}(x,\z;\bt_0)dQ_\Z(\z)} \\ &&  + 
\frac{r (1-\delta)}{\pi} \frac{\int I(t>x)\left[ \frac{\frac{\partial}{\partial \bt_0} q_{T \mid Z}(t, \z ; \bt_0)}{\int q_{T \mid \Z}(t, \z ; \bt_0) d Q(z)}-\frac{q_{T \mid \Z}(t, \z ; \bt_0) \int \frac{\partial}{\partial \bt_0} q_{T\mid \Z}(t, \z ; \bt_0) d Q_\Z(\z)}{\{\int q_{T \mid \Z}(t, \z ; \bt_0) d Q_\Z(\z)\}^2}\right] d P_T(t)}{\int  \frac{I(t>x) q_{T \mid \Z}(t, \z ; \bt_0)}{\int q_{T \mid \Z}(t, \z ;  \bt_0) d Q_\Z(\z)} d P_T(t)}\\ &=&\frac{r \delta}{\pi} \frac{\frac{\partial}{\partial \bt} q_{T \mid \Z}(x, \z ; \bt_0)}{q_{T \mid \Z}(x, \z ; \bt_0)} - \frac{r \delta}{\pi} \frac{q_T^*(x)}{q_T(x)} + 
\frac{r (1-\delta)}{\pi} \frac{S_1(x,\z,q_T)-S_2(x,\z,q_T,q^{*}_T)}{S_0(x,\z,q_T)}.\ese
Combining the results, we obtain
\bse
\sqrt{n_0}\Psi_{n}(\bt_0)&=&n^{-1/2} \sum_{i=1}^n \sqrt{\min(\pi,1-\pi)}\frac{R_i}{\pi}\psi(X_i,\Z_i,\Delta_i;\bt_0)\\ 
&& +n^{-1/2} \sum_{i=1}^n  \sqrt{\min(\pi,1-\pi)} \left[
  \frac{R_i}{\pi}\psi_{p_T}(X_i,\Delta_i) -
  E_p\{\psi_{p_T}(X,\Delta)\}  + \frac{1-R_i}{1-\pi} \psi_{q_\Z}(\Z_i)
\right]+o_p(1)\\ 
&\dovr& N(0,\bf \Sigma_{\psi}),
\ese
where
\bse
\bf \Sigma_{\psi} &\equiv& \var\left[\sqrt{\min(\pi,1-\pi)}\left\{\frac{R}{\pi}\psi(X,\Z,\Delta;\bt_0)+ \frac{R}{\pi}\psi_{p_T}(X,\Delta)+  \frac{1-R}{1-\pi} \psi_{q_\Z}(\Z)\right\}\right].
\ese
Hence,  $\sqrt{n_0}(\wh \bt - \bt_0) \dovr N(0,\bf \Sigma),$ as $n_0
\rightarrow \infty$, where ${\bf \Sigma} \equiv E\{\frac{\partial}{\partial
  \bt}\psi(\bt_0)\}^{-1}{\bf \Sigma}_{\psi}E\{\frac{\partial}{\partial
  \bt}\psi(\bt_0)\}^{-\rm T}$.
  
Note that ${\bf \Sigma_{\psi}}$ can be further written as
\bse
{\bf \Sigma_{\psi}} &=&\var\left[\sqrt{\min(\pi,1-\pi)}\left\{\frac{R}{\pi}\psi(X,\Z,\Delta;\bt_0)+ \frac{R}{\pi}\psi_{p_T}(X,\Delta)+  \frac{1-R}{1-\pi} \psi_{q_\Z}(\Z)\right\}\right]\\
&=&\var\left(\left. E\left[\sqrt{\min(\pi,1-\pi)}\left\{ \frac{R}{\pi}\psi(X,\Z,\Delta;\bt_0)+ \frac{R}{\pi}\psi_{p_T}(X,\Delta)+  \frac{1-R}{1-\pi} \psi_{q_\Z}(\Z)\right\}\right| R\right]\right)\\
&& +E\left(\left. \var\left[\sqrt{\min(\pi,1-\pi)}\left\{\frac{R}{\pi}\psi(X,\Z,\Delta;\bt_0)+ \frac{R}{\pi}\psi_{p_T}(X,\Delta)+  \frac{1-R}{1-\pi} \psi_{q_\Z}(\Z)\right\}\right| R\right]\right)\\
&=&  \min(\pi,1-\pi)\left(\frac{1}{\pi} E_p[\{\psi(X,\Z,\Delta;\bt_0)+\psi_{p_T}(X,\Delta)\}^2]+\frac{1}{1-\pi} E_q\{\psi_{q_\Z}(\Z)^2\}\right)\\
&=& \min\left(\frac{1}{\pi},\frac{1}{1-\pi}\right)\left((1-\pi)E_p[\{\psi(X,\Z,\Delta;\bt_0)+\psi_{p_T}(X,\Delta)\}^2]+\pi E_q\{\psi_{q_\Z}(\Z)^2\}\right)\\
&=& \min\left(\frac{1}{\pi},\frac{1}{1-\pi}\right)\left[(1-\pi)\var_p\{\psi(X,\Z,\Delta;\bt_0)+\psi_{p_T}(X,\Delta)\}+\pi \var_q\{\psi_{q_\Z}(\Z)\}\right],
\ese
where the third equality holds since $E_p\{\psi(X,\Z,\Delta;\bt_0)\}=0$, $E_p\{\psi_{p_T}(X,\Delta)\}=0$ and $E_q\{\psi_{q_\Z}(\Z)\}=0$.
We summarize the resulting expressions under specific values of \( \pi \) as follows: if $\pi < 1/2$, 
$$
{\bf \Sigma_{\psi}} = \var_p\{\psi(X,\Z,\Delta;\bt_0)+\psi_{p_T}(X,\Delta)\} + \frac{\pi}{1-\pi} \var_q\{\psi_{q_\Z}(\Z)\},
$$
if $\pi = 1/2$,
$$
{\bf \Sigma_{\psi}} = \var_p\{\psi(X,\Z,\Delta;\bt_0)+ \psi_{p_T}(X,\Delta)\} + \var_q\{\psi_{q_\Z}(\Z)\},
$$
and if $\pi > 1/2$,
$$
{\bf \Sigma_{\psi}} = \frac{1-\pi}{\pi} [\var_p\{\psi(X,\Z,\Delta;\bt_0)+ \psi_{p_T}(X,\Delta)\}] + \var_q\{\psi_{q_\Z}(\Z)\}.
$$
Hence, if $\pi=0$,
$$
{\bf \Sigma_{\psi}} = \var_p\{\psi(X,\Z,\Delta;\bt_0)+ \psi_{p_T}(X,\Delta)\},
$$
and if $\pi=1$,
$$
{\bf \Sigma_{\psi}} = \var_q\{\psi_{q_\Z}(\Z)\}.$$
  
\end{proof}

\subsubsection{Proof of Corollary 3.1}

\begin{proof}

By the continuous mapping theorem and the result of Theorem 1, we have $\wh \zeta(\z) \povr \zeta(\z)$.

Furthermore, by applying the delta method and the asymptotic normality established in Theorem~3.2, we obtain the asymptotic distribution of $\wh \zeta(\z)$, that is, $$\sqrt{n_0}\{\wh \zeta(\z) -  \zeta(\z)\} \dovr N(0,\bf \Sigma_{\zeta(\z)}),$$
where
$\gamma_{\zeta(\z)} \equiv \tfrac{\partial}{\partial \bt}\int g(t) q_{T\mid \Z}(t,\z; \bt_0)dt$, and
${\bf \Sigma}_{\zeta(\z)} \equiv \gamma_{\zeta(\z)} \Sigma \gamma_{\zeta(\z)}\trans.$
\end{proof}

\subsubsection{Variance Estimation Method}

\label{ssup:varest}

The asymptotic variance consists of three components: the variances of $\psi$, $\psi_{p_T}$ and $\psi_{q_\Z}$, each involving several unknown quantities. In this subsection, we provide the details for estimating the asymptotic variance. Specifically, expectations are approximated using empirical averages, and the true value $\bt_0$ is replaced by its estimator $\wh \bt$. The remaining unknown components are estimated using the following formulas:
\bse
 \wh q_T(t) &\equiv& \int q_{T\mid \Z}(t,\z;\bt_0)d \wh Q_\Z(\z),\quad 
\wh{q}_T^*(t) \equiv \tfrac{\partial}{\partial \bt} \int q_{T\mid \Z}(t,\z;\bt_0)d \wh Q_\Z(\z),\\
 S_{0n}(x,\z,\wh q_T) &\equiv& \int \frac{I(t>x)q_{T \mid \Z}(t,\z;\bt_0)}{\wh q_T(t)}d \wh P_T(t),\quad
 S_{1n}(x,\z,\wh q_T) \equiv \int \frac{I(t>x)\tfrac{\partial}{\partial \bt} q_{T \mid \Z}(t,\z;\bt_0)}{\wh q_T(t)}d \wh P_T(t),\\
 S_{2n}(x,\z,\wh q_T,\wh q^{*}_T) &\equiv& \int \frac{I(t>x) q_{T \mid \Z}(t, \z ; \bt_0) \wh q^{*}_T(t)}{\wh q^2_T(t)} d \wh P_T(t),\\
 \wh \varphi_0(t;x, \z) &=& \frac{I(t>x) q_{T \mid \Z}(t,\z;\bt_0)}{\wh q_T(t)},\  \wh \varphi_1( t;x, \z) \equiv \frac{I(t>x)\tfrac{\partial}{\partial \bt}  q_{T \mid \Z}(t,\z;\bt_0)}{\wh q_T(t)},\\
 \wh \varphi_2(t;x, \z) &\equiv& \frac{I(t>x) q_{T \mid \Z}(t, \z ; \bt_0)\wh{q}_T^*(t) }{\wh q_T^2(t)},\\
\wh \eta_{0q}(\Z;x,\z)&\equiv& -\int  \frac{I(t>x) q_{T \mid \Z}(t, \z ; \bt_0)\{q_{T \mid \Z}(t,\Z;\bt_0)-\wh q_T(t)\}}{\wh q^2_T(t)}d \wh P_T(t),\\
\wh \eta_{1q}(\Z;x,\z)&=& -\int  \frac{I(t>x) \tfrac{\partial}{\partial \bt} q_{T \mid \Z}(t, \z ; \bt_0)\{q_{T \mid \Z}(t,\Z;\bt_0)-\wh q_T(t)\}}{\wh q^2_T(t)}d \wh P_T(t),\\
\wh \eta_{2q}(\Z;x,\z)&=&  \int  \frac{I(t>x) q_{T \mid \Z}(t, \z ; \bt_0)\{\tfrac{\partial}{\partial \bt} q_{T \mid \Z}(t,\Z;\bt_0)-\wh q^*_T(t)\}}{\wh q^2_T(t)}d \wh P_T(t) \notag \\
 &&- \int  \frac{2 I(t>x) q_{T \mid \Z}(t, \z ; \bt_0)\wh q^*_T(t) \{q_{T \mid \Z}(t,\Z;\bt_0)-\wh q_T(t)\}}{\wh q^3_T(t)}d \wh P_T(t),\\
\wh \eta_{ip}(X,\Delta; x,\z) &\equiv& \wh \varphi_i(X; x,\z) \wh \gamma_0(X)\Delta + \wh \gamma_1^{\varphi_i}(X)(1 - \Delta) - \wh \gamma_2^{\varphi_i}(X), \quad i = 0,1,2,\\
\wt{H}_n^j(z)&=&n_1^{-1} \sum_{i=1}^{n_1} I\{X_i \leq z, \delta_i=j\}, \quad j=0,1,\\
\wh \gamma_0(X) &=& \exp\left\{\int_{-\infty}^{X-} \frac{\wt H^0_n(d w)}{n_1^{-1}\sum_{i=1}^{n_1}I\{X_i\geq w\}}\right\},\\
\wh \gamma_1^{\varphi_i}(X) &=& \frac{1}{n_1^{-1}\sum_{i=1}^{n_1}I\{X_i\geq X\}} \int I{(X<w)} \wh \varphi_i(w;x,\z) \wh \gamma_0(w) \wt{H}^1_n(d w),\\
\wh \gamma_2^{\varphi_i}(X) &=& \iint \frac{I{(v<X, v<w)} \wh \varphi_i(w;x,\z) \gamma_0(w)}{(n_1^{-1}\sum_{i=1}^{n_1}I\{X_i\geq w\})^2} \wt{H}_n^0(d v) \wt{H}_n^1(d w).
\ese

\section{Tables}
\label{sup:tab}

\begin{table}[ht]
\small
\centering
\renewcommand\arraystretch{1.2}
\setlength{\tabcolsep}{12pt}
\caption{Simulation results for the proportional odds model with
  log-logistic ($\mu$,$\sigma$) baseline survival function and with $\beta_1=1$, $\beta_2=-2$, $\mu=-2$ and $\sigma=0.5$.}
\label{Tab:PO}
\begin{tabular}{cccccccc}
  \hline \hline
$n_1$ & $n_2$ & & MSE & Bias & SE & $\wh{\text{SE}}$ & CP \\ 
  \hline
250 & 500 & $\beta_1$ & 0.0096 & -0.0061 & 0.0980 & 0.0900 & 0.9280\\ 
   &  & $\beta_2$ & 0.0221 & 0.0160 & 0.1478 & 0.1353 & 0.9220\\ 
   &  & $\mu$ & 0.0104 & -0.0061 & 0.1021 & 0.1029 & 0.9500 \\ 
   &  & $\sigma$ & 0.0014 & -0.0091 & 0.0370 & 0.0376 & 0.9320 \\ \hline
  500 & 250 & $\beta_1$ & 0.0092 & -0.0025 & 0.0959 & 0.0912 & 0.9460 \\ 
   &  & $\beta_2$ & 0.0259 & 0.0121 & 0.1606 & 0.1368 & 0.9100 \\ 
   &  & $\mu$ & 0.0079 & -0.0112 & 0.0883 & 0.0830 & 0.9260 \\ 
   &  & $\sigma$ & 0.0013 & -0.0092 & 0.0344 & 0.0314 & 0.9080 \\ \hline
  500 & 500 & $\beta_1$ & 0.0060 & -0.0045 & 0.0774 & 0.0740 & 0.9400 \\ 
   &  & $\beta_2$ & 0.0141 & 0.0123 & 0.1181 & 0.1112 & 0.9480 \\ 
   &  & $\mu$ & 0.0056 & -0.0103 & 0.0742 & 0.0768 & 0.9480 \\ 
   &  & $\sigma$ & 0.0008 & -0.0050 & 0.0287 & 0.0282 & 0.9240 \\ \hline
  500 & 750 & $\beta_1$ & 0.0049 & -0.0031 & 0.0696 & 0.0671 & 0.9300 \\ 
   &  & $\beta_2$ & 0.0112 & 0.0071 & 0.1057 & 0.1010 & 0.9400 \\ 
   &  & $\mu$ & 0.0050 & -0.0049 & 0.0708 & 0.0740 & 0.9520 \\ 
   &  & $\sigma$ & 0.0008 & -0.0024 & 0.0278 & 0.0271 & 0.9400 \\ \hline
  750 & 500 & $\beta_1$ & 0.0048 & -0.0064 & 0.0688 & 0.0676 & 0.9460 \\ 
   &  & $\beta_2$ & 0.0114 & 0.0185 & 0.1051 & 0.1014 & 0.9320 \\ 
   &  & $\mu$ & 0.0045 & -0.0045 & 0.0667 & 0.0647 & 0.9380 \\ 
   &  & $\sigma$ & 0.0007 & -0.0075 & 0.0247 & 0.0241 & 0.9280 \\ \hline
  500 & 1000 & $\beta_1$ & 0.0040 & -0.0052 & 0.0630 & 0.0627 &  0.9260 \\ 
   &  & $\beta_2$ & 0.0097 & 0.0150 & 0.0976 & 0.0939 & 0.9220 \\ 
   &  & $\mu$ & 0.0057 & -0.0034 & 0.0757 & 0.0721 & 0.9440 \\ 
   &  & $\sigma$ & 0.0008 & -0.0056 & 0.0271 & 0.0262 & 0.9300 \\ \hline
  1000 & 500 & $\beta_1$ & 0.0046 & -0.0064 & 0.0677 & 0.0645 & 0.9420 \\ 
   &  & $\beta_2$ & 0.0112 & 0.0128 & 0.1051 & 0.0973 & 0.9160\\ 
   &  & $\mu$ & 0.0039 & -0.0065 & 0.0622 & 0.0586 & 0.9360 \\ 
   &  & $\sigma$ & 0.0006 & -0.0053 & 0.0230 & 0.0220 & 0.9340 \\ \hline
  1000 & 1000 & $\beta_1$ & 0.0026 & -0.0035 & 0.0509 & 0.0517 & 0.9540 \\ 
   &  & $\beta_2$ & 0.0067 & 0.0103 & 0.0813 & 0.0777 & 0.9340 \\ 
   &  & $\mu$ & 0.0031 & -0.0044 & 0.0554 & 0.0536 & 0.9420 \\ 
   &  & $\sigma$ & 0.0004 & -0.0028 & 0.0199 & 0.0197 & 0.9540 \\ 
   \hline
\hline
\end{tabular}
\begin{tablenotes}
    \item $n_1$: sample size in $\calP$; $n_2$: sample size in
      $\calQ$; MSE: empirical mean squared error; SE: empirical standard error; 
       $\wh{\text{SE}}$: mean of estimated standard errors; CP:
      empirical coverage probability of 95\% confidence intervals.
\end{tablenotes}
\end{table}

\begin{table}[ht]
\small
\centering
\renewcommand\arraystretch{1.2}
\setlength{\tabcolsep}{12pt}
\caption{Simulation results for the accelerated failure time model with log-normal baseline hazard function and with $\beta_1=1.5$, $\beta_2=-1.5$, $\mu=-1.5$ and $\sigma=0.5$}
\label{Tab:AFT}
\begin{tabular}{cccccccc}
  \hline \hline
$n_1$ & $n_2$ & & MSE & Bias & SE & $\wh{\text{SE}}$ & CP \\ 
 \hline
250 & 500 & $\beta_1$ & 0.0030 & -0.0101 & 0.0538 & 0.0555 & 0.9420 \\ 
   &  & $\beta_2$ & 0.0031 & 0.0095 & 0.0545 & 0.0556 & 0.9320 \\ 
   &  & $\mu$ & 0.0027 & -0.0036 & 0.0515 & 0.0525 & 0.9580 \\ 
   &  & $\sigma$ & 0.0008 & -0.0068 & 0.0283 & 0.0288 &  0.9160\\ \hline
  500 & 250 & $\beta_1$ & 0.0025 & -0.0111 & 0.0493 & 0.0465 &  0.9280\\ 
   &  & $\beta_2$ & 0.0027 & 0.0092 & 0.0511 & 0.0473 & 0.9320 \\ 
   &  & $\mu$ & 0.0018 & -0.0004 & 0.0429 & 0.0411 &  0.9280\\ 
   &  & $\sigma$ & 0.0006 & -0.0070 & 0.0225 & 0.0209 & 0.8900 \\ \hline
  500 & 500 & $\beta_1$ & 0.0019 & -0.0089 & 0.0421 & 0.0418 &  0.9240 \\ 
   &  & $\beta_2$ & 0.0016 & 0.0070 & 0.0397 & 0.0419 & 0.9560 \\ 
   &  & $\mu$ & 0.0015 & -0.0041 & 0.0391 & 0.0383 & 0.9440 \\ 
   &  & $\sigma$ & 0.0004 & -0.0028 & 0.0199 & 0.0203 & 0.9400 \\ \hline
  500 & 750 & $\beta_1$ & 0.0016 & -0.0058 & 0.0392 & 0.0404 & 0.9520 \\ 
   &  & $\beta_2$ & 0.0017 & 0.0069 & 0.0412 & 0.0403 & 0.9360 \\ 
   &  & $\mu$ & 0.0015 & -0.0030 & 0.0389 & 0.0374 & 0.9460 \\ 
   &  & $\sigma$ & 0.0004 & -0.0006 & 0.0202 & 0.0202 & 0.9560 \\ \hline
  750 & 500 & $\beta_1$ & 0.0013 & -0.0097 & 0.0354 & 0.0362 & 0.9420 \\ 
   &  & $\beta_2$ & 0.0014 & 0.0093 & 0.0368 & 0.0361 & 0.9380 \\ 
   &  & $\mu$ & 0.0010 & -0.0010 & 0.0321 & 0.0323 & 0.9660 \\ 
   &  & $\sigma$ & 0.0003 & -0.0034 & 0.0169 & 0.0166 & 0.9380 \\ \hline
  500 & 1000 & $\beta_1$ & 0.0017 & -0.0060 & 0.0411 & 0.0390 & 0.9320 \\ 
   &  & $\beta_2$ & 0.0017 & 0.0093 & 0.0397 & 0.0391 & 0.9520 \\ 
   &  & $\mu$ & 0.0014 & -0.0055 & 0.0372 & 0.0369 & 0.9460 \\ 
   &  & $\sigma$ & 0.0004 & -0.0004 & 0.0210 & 0.0202 & 0.9400 \\ \hline
  1000 & 500 & $\beta_1$ & 0.0012 & -0.0056 & 0.0339 & 0.0332 & 0.9420 \\ 
   &  & $\beta_2$ & 0.0012 & 0.0066 & 0.0340 & 0.0333 & 0.9480 \\ 
   &  & $\mu$ & 0.0008 & -0.0024 & 0.0277 & 0.0289 & 0.9560 \\ 
   &  & $\sigma$ & 0.0002 & -0.0031 & 0.0151 & 0.0146 &  0.9280 \\ \hline
  1000 & 1000 & $\beta_1$ & 0.0009 & -0.0057 & 0.0299 & 0.0295 & 0.9420 \\ 
   &  & $\beta_2$ & 0.0009 & 0.0046 & 0.0295 & 0.0294 & 0.9460 \\ 
   &  & $\mu$ & 0.0007 & 0.0002 & 0.0257 & 0.0268 & 0.9660 \\ 
   &  & $\sigma$ & 0.0002 & -0.0020 & 0.0139 & 0.0142 & 0.9580 \\ 
   \hline
\hline
\end{tabular}
\begin{tablenotes}
    \item $n_1$: sample size in $\calP$; $n_2$: sample size in
      $\calQ$; MSE: empirical mean squared error; SE: empirical standard error; 
       $\wh{\text{SE}}$: mean of estimated standard errors; CP:
      empirical coverage probability of 95\% confidence intervals.
\end{tablenotes}
\end{table}

\begin{table}[ht]
\small
\centering
\renewcommand\arraystretch{1.2}
\setlength{\tabcolsep}{12pt}
\caption{Simulation results for the accelerated hazards model with Weibull($\lambda$,$\gamma$)  baseline hazard function and with $\beta_1=1$, $\beta_2=-1$, $\gamma=2$ and $\lambda=1.5$.}
\label{Tab:AH}
\begin{tabular}{cccccccc}
  \hline \hline
$n_1$ & $n_2$ & & MSE & Bias & SE & $\wh{\text{SE}}$ & CP \\ 
 \hline
250 & 500 & $\beta_1$ & 0.0159 & 0.0378 & 0.1207 & 0.1412 & 0.9567 \\ 
   &  & $\beta_2$ & 0.0155 & 0.0400 & 0.1181 & 0.1418 & 0.9764 \\ 
   &  & $\gamma$ & 0.0200 & -0.0542 & 0.1308 & 0.1542 & 0.9724 \\ 
   &  & $\lambda$ & 0.0327 & 0.0732 & 0.1655 & 0.1798 & 0.9606 \\ \hline
  500 & 250 & $\beta_1$ & 0.0122 & 0.0353 & 0.1047 & 0.1147 & 0.9732 \\ 
   &  & $\beta_2$ & 0.0138 & 0.0454 & 0.1085 & 0.1152 & 0.9554 \\ 
   &  & $\gamma$ & 0.0148 & -0.0478 & 0.1120 & 0.1217 & 0.9375 \\ 
   &  & $\lambda$ & 0.0196 & 0.0356 & 0.1358 & 0.1530 & 0.9732 \\ \hline
  500 & 500 & $\beta_1$ & 0.0093 & 0.0135 & 0.0956 & 0.0989 & 0.9354 \\ 
   &  & $\beta_2$ & 0.0091 & 0.0133 & 0.0945 & 0.0989 & 0.9392 \\ 
   &  & $\gamma$ & 0.0121 & -0.0300 & 0.1060 & 0.1130 & 0.9392 \\ 
   &  & $\lambda$ & 0.0201 & 0.0585 & 0.1293 & 0.1360 & 0.9506 \\ \hline
  500 & 750 & $\beta_1$ & 0.0092 & 0.0234 & 0.0931 & 0.0944 & 0.9449 \\ 
   &  & $\beta_2$ & 0.0091 & 0.0208 & 0.0935 & 0.0951 & 0.9449 \\ 
   &  & $\gamma$ & 0.0122 & -0.0359 & 0.1045 & 0.1082 & 0.9485 \\ 
   &  & $\lambda$ & 0.0185 & 0.0476 & 0.1278 & 0.1289 & 0.9485 \\ \hline
  750 & 500 & $\beta_1$ & 0.0059 & 0.0116 & 0.0764 & 0.0843 & 0.9580 \\ 
   &  & $\beta_2$ & 0.0063 & 0.0149 & 0.0782 & 0.0849 & 0.9542 \\ 
   &  & $\gamma$ & 0.0080 & -0.0284 & 0.0847 & 0.0962 & 0.9580 \\ 
   &  & $\lambda$ & 0.0172 & 0.0389 & 0.1255 & 0.1166 & 0.9275 \\ \hline
  500 & 1000 & $\beta_1$ & 0.0092 & 0.0229 & 0.0934 & 0.0934 & 0.9368 \\ 
   &  & $\beta_2$ & 0.0086 & 0.0221 & 0.0905 & 0.0932 & 0.9480 \\ 
   &  & $\gamma$ & 0.0111 & -0.0371 & 0.0989 & 0.1068 & 0.9405 \\ 
   &  & $\lambda$ & 0.0177 & 0.0495 & 0.1236 & 0.1237 & 0.9442 \\ \hline
  1000 & 500 & $\beta_1$ & 0.0060 & 0.0093 & 0.0772 & 0.0770 & 0.9585 \\ 
   &  & $\beta_2$ & 0.0059 & 0.0146 & 0.0753 & 0.0773 & 0.9472 \\ 
   &  & $\gamma$ & 0.0068 & -0.0269 & 0.0783 & 0.0867 & 0.9623 \\ 
   &  & $\lambda$ & 0.0138 & 0.0413 & 0.1101 & 0.1078 & 0.9472 \\ \hline
  1000 & 1000 & $\beta_1$ & 0.0040 & 0.0099 & 0.0628 & 0.0673 & 0.9685 \\ 
   &  & $\beta_2$ & 0.0040 & 0.0105 & 0.0622 & 0.0672 & 0.9650 \\ 
   &  & $\gamma$ & 0.0053 & -0.0249 & 0.0688 & 0.0787 & 0.9720 \\ 
   &  & $\lambda$ & 0.0089 & 0.0315 & 0.0894 & 0.0932 & 0.9615 \\ 
   \hline
\hline
\end{tabular}
\begin{tablenotes}
    \item $n_1$: sample size in $\calP$; $n_2$: sample size in
      $\calQ$; MSE: empirical mean squared error; SE: empirical standard error; 
       $\wh{\text{SE}}$: mean of estimated standard errors; CP:
      empirical coverage probability of 95\% confidence intervals.
\end{tablenotes}
\end{table}

\begin{table}[ht]
\scriptsize
\centering
\renewcommand\arraystretch{1}
\setlength{\tabcolsep}{12pt}
\caption{Simulation results for the proportional hazards model with Weibull($\lambda$,$\gamma$)  baseline hazard function and with $\beta_1=1$, $\beta_2=1$, $\beta_3=-1$, $\beta_4=1$, $\gamma=1.5$ and $\lambda=1$.}
\label{Tab:PH4}
\begin{tabular}{cccccccc}
  \hline \hline
$n_1$ & $n_2$ & & MSE & Bias & SE & $\wh{\text{SE}}$ & CP \\ 
 \hline
250 & 500 & $\beta_1$ & 0.0108 & 0.0101 & 0.1038 & 0.1032 & 0.9360 \\ 
   &  & $\beta_2$ & 0.0105 & 0.0090 & 0.1023 & 0.1030 & 0.9600 \\ 
   &  & $\beta_3$ & 0.0133 & -0.0066 & 0.1153 & 0.1143 & 0.9440 \\ 
   &  & $\beta_4$ & 0.0286 & 0.0135 & 0.1688 & 0.1645 & 0.9500 \\ 
   &  & $\gamma$ & 0.0173 & 0.0314 & 0.1278 & 0.1300 & 0.9660 \\ 
   &  & $\lambda$ & 0.0341 & 0.0364 & 0.1813 & 0.1775 & 0.9480 \\ \hline
  500 & 250 & $\beta_1$ & 0.0057 & 0.0024 & 0.0757 & 0.0760 & 0.9520 \\ 
   &  & $\beta_2$ & 0.0058 & 0.0023 & 0.0761 & 0.0763 & 0.9500 \\ 
   &  & $\beta_3$ & 0.0073 & -0.0003 & 0.0856 & 0.0859 & 0.9460 \\ 
   &  & $\beta_4$ & 0.0145 & 0.0108 & 0.1200 & 0.1207 & 0.9540 \\ 
   &  & $\gamma$ & 0.0101 & 0.0172 & 0.0992 & 0.1004 & 0.9440 \\ 
   &  & $\lambda$ & 0.0198 & 0.0191 & 0.1395 & 0.1374 & 0.9440 \\ \hline
  500 & 500 & $\beta_1$ & 0.0056 & 0.0048 & 0.0747 & 0.0725 & 0.9360 \\ 
   &  & $\beta_2$ & 0.0053 & 0.0071 & 0.0724 & 0.0728 & 0.9560 \\ 
   &  & $\beta_3$ & 0.0070 & -0.0103 & 0.0830 & 0.0814 & 0.9480 \\ 
   &  & $\beta_4$ & 0.0123 & 0.0045 & 0.1111 & 0.1158 & 0.9580 \\ 
   &  & $\gamma$ & 0.0097 & 0.0231 & 0.0961 & 0.0930 & 0.9380 \\ 
   &  & $\lambda$ & 0.0164 & 0.0213 & 0.1264 & 0.1270 & 0.9680 \\ \hline
  500 & 750 & $\beta_1$ & 0.0050 & 0.0063 & 0.0704 & 0.0715 & 0.9460 \\ 
   &  & $\beta_2$ & 0.0047 & 0.0064 & 0.0686 & 0.0714 & 0.9780 \\ 
   &  & $\beta_3$ & 0.0065 & -0.0045 & 0.0808 & 0.0797 & 0.9400 \\ 
   &  & $\beta_4$ & 0.0128 & 0.0049 & 0.1132 & 0.1141 & 0.9620 \\ 
   &  & $\gamma$ & 0.0092 & 0.0207 & 0.0938 & 0.0898 & 0.9260 \\ 
   &  & $\lambda$ & 0.0149 & 0.0151 & 0.1212 & 0.1236 & 0.9580 \\ \hline
  750 & 500 & $\beta_1$ & 0.0037 & 0.0006 & 0.0610 & 0.0600 & 0.9420 \\ 
   &  & $\beta_2$ & 0.0038 & -0.0023 & 0.0615 & 0.0597 & 0.9440 \\ 
   &  & $\beta_3$ & 0.0051 & -0.0024 & 0.0712 & 0.0678 & 0.9340 \\ 
   &  & $\beta_4$ & 0.0100 & -0.0005 & 0.1003 & 0.0952 & 0.9400 \\ 
   &  & $\gamma$ & 0.0061 & 0.0158 & 0.0763 & 0.0780 & 0.9440 \\ 
   &  & $\lambda$ & 0.0125 & 0.0233 & 0.1093 & 0.1067 & 0.9460 \\ \hline
  500 & 1000 & $\beta_1$ & 0.0047 & 0.0069 & 0.0686 & 0.0707 & 0.9600 \\ 
   &  & $\beta_2$ & 0.0044 & 0.0055 & 0.0661 & 0.0707 & 0.9680 \\ 
   &  & $\beta_3$ & 0.0061 & -0.0057 & 0.0779 & 0.0789 & 0.9560 \\ 
   &  & $\beta_4$ & 0.0118 & 0.0118 & 0.1082 & 0.1139 & 0.9680 \\ 
   &  & $\gamma$ & 0.0073 & 0.0130 & 0.0846 & 0.0884 & 0.9620 \\ 
   &  & $\lambda$ & 0.0140 & 0.0139 & 0.1177 & 0.1212 & 0.9560 \\ \hline
  1000 & 500 & $\beta_1$ & 0.0029 & 0.0047 & 0.0534 & 0.0528 & 0.9580 \\ 
   &  & $\beta_2$ & 0.0027 & 0.0051 & 0.0515 & 0.0528 & 0.9540 \\ 
   &  & $\beta_3$ & 0.0040 & -0.0059 & 0.0628 & 0.0599 & 0.9320 \\ 
   &  & $\beta_4$ & 0.0070 & 0.0050 & 0.0836 & 0.0837 & 0.9420 \\ 
   &  & $\gamma$ & 0.0054 & 0.0181 & 0.0713 & 0.0697 & 0.9380 \\ 
   &  & $\lambda$ & 0.0082 & 0.0109 & 0.0899 & 0.0940 & 0.9660 \\ \hline
  1000 & 1000 & $\beta_1$ & 0.0024 & 0.0048 & 0.0493 & 0.0504 & 0.9580 \\ 
   &  & $\beta_2$ & 0.0023 & 0.0000 & 0.0484 & 0.0502 & 0.9660 \\ 
   &  & $\beta_3$ & 0.0030 & -0.0038 & 0.0550 & 0.0565 & 0.9540 \\ 
   &  & $\beta_4$ & 0.0057 & 0.0062 & 0.0753 & 0.0806 & 0.9560 \\ 
   &  & $\gamma$ & 0.0037 & 0.0070 & 0.0607 & 0.0641 & 0.9500 \\ 
   &  & $\lambda$ & 0.0070 & 0.0119 & 0.0828 & 0.0882 & 0.9680 \\ 
   \hline
\hline
\end{tabular}
\begin{tablenotes}
    \item $n_1$: sample size in $\calP$; $n_2$: sample size in
      $\calQ$; MSE: empirical mean squared error; SE: empirical standard error; 
       $\wh{\text{SE}}$: mean of estimated standard errors; CP:
      empirical coverage probability of 95\% confidence intervals.
\end{tablenotes}
\end{table}

\end{document}